\documentclass[twoside,11pt]{article}
\usepackage[letterpaper,top=2cm,bottom=2cm,left=3cm,right=3cm,marginparwidth=1.75cm]{geometry}
\usepackage{natbib}
\usepackage{url}            % simple URL typesetting
\usepackage{booktabs}       % professional-quality tables
\usepackage{amsfonts}       % blackboard math symbols
\usepackage{nicefrac}       % compact symbols for 1/2, etc.
\usepackage{microtype}      % microtypography
\setlength{\textfloatsep}{4mm}
\usepackage{bm}
\usepackage{changepage}
\usepackage{wrapfig}
\usepackage{array, makecell} 
\usepackage{tabularx}
% \pdfoutput=1
\usepackage{amsmath}
 
\usepackage{amsthm}
\usepackage{algorithm}
\usepackage{algorithmicx}
\usepackage[noend]{algpseudocode}
% \usepackage{verbatim}
% \usepackage{graphicx}
% \usepackage[space]{grffile}
% %\usepackage{subcaption}
% \usepackage{caption}
% % Redefine the caption format
% \DeclareCaptionLabelFormat{captionless}{}
% \captionsetup[figure]{labelformat=captionless}
\usepackage{tikz}
\usepackage{subfigure}
\usepackage{mathrsfs}
\usepackage{amssymb}
\usepackage{xspace}
\usepackage{thmtools}
\usepackage{thm-restate}
\usetikzlibrary{arrows}
\usepackage{xcolor}
\usepackage{multirow}
\usepackage{threeparttable}
\usepackage{footmisc}
% %%%% citation format
% \cite<prefix>[postfix]{keys}
%     (prefix author1, year1; ... ; authorN, yearN, postfix)
% \citeA[postfix]{keys}
%     author1 (year1), ..., authorN (yearN, postfix)
% \citeR<prefix>[postfix]{keys}
%     prefix author1, year1; ... ; authorN, yearN, postfix
% \citeS[postfix]{keys}
%     author1's (year1), ..., authorN's (yearN, postfix)
% \citeauthor{keys}
%     author1, ..., authorN
% \citeyear[postfix]{keys}
%     (year1, ..., yearN, postfix)
% \citeyearR[postfix]{keys}
%     year1, ..., yearN, postfix
% For each of the above, \full... and \short... versions are avilable.

% %%%%% packages
\newcolumntype{Y}{>{\arraybackslash}X}
\newcolumntype{L}[1]{>{\raggedright\let\newline\\\arraybackslash\hspace{0pt}}m{#1}}
\newcolumntype{C}[1]{>{\centering\let\newline\\\arraybackslash\hspace{0pt}}m{#1}}
\newcolumntype{R}[1]{>{\raggedleft\let\newline\\\arraybackslash\hspace{0pt}}m{#1}}

\usetikzlibrary{arrows}
\newtheorem{theorem}{Theorem}
\newtheorem{lemma}{Lemma}

\newtheorem{corollary}{Corollary}

\newtheorem{claim}{Claim}

\newtheorem{prop}{Property}

\newtheorem{assume}{Assumption}
\graphicspath{{figures/}}
% \hypersetup{
%     colorlinks,
%     linkcolor={red!80!black},
%     citecolor={blue!60!black},
%     urlcolor={blue!80!black}
% }

%%%%%%%%%%
% Customized commands

\newcommand{\revone}{\color{black}}
\newcommand{\revtwo}{\color{black}}

\newcommand{\ie}{\textit{i.e.}\xspace}
\newcommand{\st}{\textit{s.t.}\xspace}

\newcommand{\ex}[1]{ \mathbb{E} \left[ #1 \right] }

\newcommand{\prob}[1]{ Pr \left( #1 \right) }
\newcommand{\Prob}{ Pr}

\newcommand{\E}{ \mathbb{E}}

\newcommand{\Lamb}[0]{\Lambda}
\newcommand{\lamb}[0]{\lambda}
\newcommand{\epsi}[0]{ \varepsilon }
\newcommand{\reals}[0]{ \mathbb{R}^+ }

\DeclareMathOperator*{\argmax}{arg\,max}

%Names
%\def\atg{\mbox{\rm {\sf ATG}}\xspace}

\newcommand{\sg}{\textsc{Greedy}\xspace}

\newcommand{\oh}[1]{O\left( #1 \right)}

\newcommand\numberthis{\addtocounter{equation}{1}\tag{\theequation}}

\newcommand{\opt}{\textsc{OPT}\xspace}

\newcommand{\sm}{\textsc{SM}\xspace}
\newcommand{\marge}[2]{\Delta \left( #1 \, | \, #2 \right) }

\algnewcommand{\LineComment}[1]{\State \(\triangleright\) #1}

\newcommand{\boost}{\textsc{ParallelGreedyBoost}\xspace}

\newcommand{\threshold}{\textsc{ThresholdSeq}\xspace}

\newcommand{\bicshort}{\textsc{BCG}\xspace}
\newcommand{\greedi}{\textsc{GreedI}\xspace}
\newcommand{\rgreedy}{\textsc{RandGreeDI}\xspace}

\newcommand{\rgshort}{\textsc{RG}\xspace}

\newcommand{\palgshort}{\textsc{PAlg}\xspace}

\newcommand{\dsb}{\textsc{R-DASH}\xspace }
\newcommand{\dsbfull}{\textsc{Randomized-DASH}\xspace }
\newcommand{\dsbshort}{\textsc{R-DASH}\xspace}
\newcommand{\mg}{\textsc{MED}\xspace}

\newcommand{\mgrg}{\textsc{MED+RG}\xspace}
\newcommand{\mgshort}{\textsc{MED}\xspace}
\newcommand{\mgfull}{\textsc{MemoryEfficientDistributed}\xspace}

\newcommand{\halfopt}{\frac{1}{2}(1-1/e-\epsi)}
\newcommand{\halfoptshort}{\frac{1}{2}(1-\frac{1}{e}-\epsi)}

\newcommand{\alg}{\textsc{Alg}\xspace}
\newcommand{\lag}{\textsc{LAG}\xspace}

\newcommand{\lat}{\textsc{ThreshSeqMod}\xspace}

\newcommand{\dat}{\textsc{T-DASH}\xspace }
\newcommand{\datfull}{\textsc{Threshold-DASH}\xspace }
\newcommand{\dagh}{\textsc{G-DASH}\xspace }
\newcommand{\dagfull}{\textsc{Greedy-DASH}\xspace }
\newcommand{\ddist}{\textsc{DistributedDistorted}\xspace }
\newcommand{\ddshort}{\textsc{DDist}\xspace }
\newcommand{\algsol}{\textsc{AlgSol}\xspace}
\newcommand{\algrel}{\textsc{AlgRel}\xspace}
\newcommand{\univ}{\mathcal N}
\newcommand{\tg}{\textsc{ThresholdGreedy}\xspace}
\newcommand{\tf}{\textsc{ThresholdFilter}\xspace}
\newcommand{\dls}{\textsc{LinearTime-Distributed}\xspace}
\newcommand{\dlsshort}{\textsc{L-Dist}\xspace}

\newcommand{\mgalg}{\textsc{MED+Alg}\xspace}

\newcommand{\pgshort}{\textsc{PG}\xspace}
\newcommand{\qsshort}{\textsc{LTC}\xspace}
\newcommand{\qs}{\textsc{Linear-TimeCardinality}\xspace}

\newcommand{\tgshort}{\textsc{TG}\xspace}

% indent in algorithm
\newlength\myindent
\setlength\myindent{2em}

%%% Local Variables:
%%% mode: latex
%%% TeX-master: "aaai_anonymous-submission-latex-2023_DASH.tex"
%%% End:

\begin{document}
\allowdisplaybreaks[4]
\title{Scalable Distributed Algorithms for Size-Constrained Submodular Maximization in the MapReduce and Adaptive Complexity Models}

% The \author macro works with any number of authors. There are two commands
% used to separate the names and addresses of multiple authors: \And and \AND.
%
% Using \And between authors leaves it to LaTeX to determine where to break the
% lines. Using \AND forces a line break at that point. So, if LaTeX puts 3 of 4
% authors names on the first line, and the last on the second line, try using
% \AND instead of \And before the third author name.

\author{%
 Yixin Chen
  chen777@tamu.edu \\
  Department of Computer Science \& Engineering\\
        Texas A\&M University\\
         College Station, TX
  % examples of more authors
  \and
  Tonmoy Dey
 tdey@fsu.edu \\
 Department of Computer Science\\
  Florida State University\\
  Tallahassee, FL, USA
       \and
       Alan Kuhnle kuhnle@tamu.edu \\
       Department of Computer Science \& Engineering\\
         Texas A\&M University\\
         College Station, TX
  % \And
  % Coauthor \\
  % Affiliation \\
  % Address \\
  % \texttt{email} \\
  % \And
  % Coauthor \\
  % Affiliation \\
  % Address \\
  % \texttt{email} \\
}

\maketitle

%-----------------------------------------
\begin{abstract}
  Distributed maximization of a submodular function in the MapReduce (MR) model has received
  much attention, culminating in two frameworks that allow
  a centralized algorithm to
  be run in the MR setting without loss of approximation, as long as the centralized algorithm
  satisfies a certain consistency property -- which had previously only been known to be satisfied by
  the standard greedy and continous greedy algorithms. A separate line of work has studied parallelizability
  of submodular maximization in the adaptive complexity model, where each thread may have
  access to the entire ground set.
  For the size-constrained maximization of a monotone and submodular function,
  we show that several sublinearly adaptive (highly parallelizable) algorithms satisfy the consistency
  property required to work in the MR setting, which yields practical, parallelizable
  and distributed algorithms. Separately, we develop the first
  distributed algorithm with linear query complexity for this problem.
  Finally, we provide a method to increase the maximum cardinality
  constraint for MR algorithms at the cost of additional MR rounds.  
\end{abstract}

\section{Introduction}  \label{intro}
 Submodular maximization has become an important problem in data 
mining and machine learning with real world applications ranging from 
% influence and revenue maximization in social 
video summarization \citep{MirzasoleimanJ018} and mini-batch selection \citep{JosephRSB19}
% networks to more complex tasks such as image and video summarization.
to more complex tasks such as active learning \citep{RangwaniJAB21} and federated learning \citep{Balakrishnan0ZH22}.
% \todo{Tonmoy: better examples with citations (UPDATED).}
In this work, we study the size-constrained maximization of a monotone,
submodular function (SMCC), formally defined in Section \ref{sec:prelim}. Because of the ubiquity
of problems requiring the optimization of a submodular function, a vast literature
on submodular optimization exists; we refer the reader to the surveys \citep{LiuCPZ20,Liu20}. 
% \todo{Tonmoy: UPDATED}
\begin{table*}[t] 
  {
  % \begin{adjustwidth}{-2cm}{}
  \fontsize{9}{10}
  \selectfont
  
  \centering
  \begin{tabular}{ l|p{1.6cm}|p{0.9cm}|p{2.5cm}|p{3.0cm}|p{1.5cm}} %|p{3cm}} %|p{2.3cm} }
    % \begin{tabularx}{\textwidth}{ l|p{1.5cm}|p{0.8cm}|p{3.4cm}|p{3.1cm}|p{1.5cm}}
    \hline
   
   Reference                                              &  Ratio                                                                                  & MR Rounds                                   & Adaptivity ($\Theta$)                                                                                                                         & Queries ($\Xi$)                                                                                                                             & $k_{max}$                               \\ % & Parallel Runtime ($\Pi$) \\ % & Comm. Complexity  \\
   \hline
    \rgshort % \citep{barbosa2015power}                
    &  $\frac{1}{2}(1-\frac{1}{e})$                                                           &  $\boldsymbol{2}$                           & {\fontsize{7}{10} \selectfont $O(k)$}                                                                                                         & {\fontsize{7}{10} \selectfont$O(nk+k^2\ell)$}                                                                                               & $O(\frac{n}{\ell^2})$                   \\ % & {\fontsize{7}{10} \selectfont $O(\frac{nk}{\ell}+ k^2\ell)$ } \\ % & $O(n + k\ell)$   \\ 
   
   \palgshort % \citep{barbosa2016new}                  
   & $\boldsymbol{ 1-\frac{1}{e}-\epsi  }$                                                & $O(\frac{1}{\epsi^2})$                      & {\fontsize{7}{10} \selectfont $O(\frac{k}{\epsi^2})$}                                                                                         & {\fontsize{7}{10} \selectfont$O(\frac{nk}{\epsi^2}+\frac{k^2 \ell^2}{\epsi^4})$ }                                                           & $O(\frac{n\epsi^2}{\ell^2})$            \\ % & {\fontsize{7}{10} \selectfont $O(\frac{nk}{\ell\epsi}+ \frac{k^2 \ell }{\epsi^3})$} \\ %   & $O(\frac{n}{\epsi^2} + \frac{k\ell^2}{\epsi^3})$ \\
   
   \ddshort % \citep{kazemi2021regularized}             
   & $\boldsymbol{1-\frac{1}{e}-\epsi} $                                                     & $O(\frac{1}{\epsi})$                        & {\fontsize{7}{10} \selectfont $O(\frac{k}{\epsi})$}                                                                                           & {\fontsize{7}{10} \selectfont$O(\frac{nk}{\epsi}+\frac{k^2 \ell^2}{\epsi^2})$ }                                                             & $O(\frac{n\epsi}{\ell^2})$              \\ % & {\fontsize{7}{10} \selectfont $O(\frac{nk}{\ell\epsi}+ \frac{k^2\ell }{\epsi^2})$} \\ % & $O(\frac{n}{\epsi} + \frac{j\ell^2}{\epsi^2})$ \\
   
   \bicshort % \citep{Epasto2017}                       
   & $1-\epsi$   & $r$     & {\fontsize{7}{10} \selectfont $O\left(\frac{rk}{\epsi^{2/r}}\log ^{2}(\frac{1}{\epsi^{1/r}})\right)$ }                                        & {\fontsize{7}{10} \selectfont$O(\frac{nkr}{\epsi^{1/r}} + \frac{k^2\ell r m^2(\epsi^{1/r})}{\epsi^{3/r}})$ }                                & $O(\frac{n\epsi^{1/r}}{\ell^2})$        \\ % & {\fontsize{7}{10} \selectfont $O(\frac{nkr}{\ell\epsi^{1/r}} + \frac{k^2\ell r m^2(\epsi^{1/r})}{\epsi^{3/r}})$}  \\
  
  \pgshort % \citep{mirrokni2015randomized}             
  & $0.545 - \epsi$   & $\boldsymbol{2}$                            & {\fontsize{7}{10} \selectfont $O(k^2)$ }                                                                                                      & {\fontsize{7}{10} \selectfont$O(nk + k^3)$}                                                                                                 & $O(\frac{n}{\ell^2})$                   \\ % & {\fontsize{7}{10} \selectfont $O(\frac{nk}{\ell}+k^3)$} \\ % & $O(n + k\ell)$   \\ 
  
  % \tgshort [\citet{liu2018submodular}]                & $\frac{1}{2}$                                                                           &  $\boldsymbol{ 2 }$                         &  {\fontsize{7}{10} \selectfont ${ O(n\log(k)) }$                                                                                              & {\fontsize{7}{10} \selectfont ${ O(n\log(k))}$ }                                                                                            & $O(\frac{n}{\ell^2 \log(k)})$           & {\fontsize{7}{10} \selectfont ${ O(\frac{n\log(k)}{\ell})$ } \\ %    & O($n+ k\ell\log(k)$ )     \\
  
    \tgshort % \citep{liu2018submodular}                
    & $\frac{1}{2}$                                                                           &  $\boldsymbol{2}$                           &  {\fontsize{7}{10} \selectfont ${ O(n\log(k)) }$}                                                                                             & {\fontsize{7}{10} \selectfont ${ O(n\log(k))}$}                                                                                             & $O(\frac{n}{\ell^2 \log(k)})$           \\ % & {\fontsize{7}{10} \selectfont $\boldsymbol{O(\frac{n\log(k)}{\ell})}$} \\ %    & O($n+ k\ell\log(k)$ )     \\
  
   \hline
   \dsbshort (Alg. \ref{alg:DSB})                                             & $\halfoptshort$                                                                         &  $\boldsymbol{2}$                           & {\fontsize{7}{10} \selectfont {$O(\frac{\log(k)\log(n)}{\epsi^4})$} }                                   & {\fontsize{7}{10} \selectfont {${ O(\frac{n\log(k)}{\epsi^4})}$}}                                                          & $O(\frac{n}{\ell^2})$                   \\ % & {\fontsize{7}{10} \selectfont \red{$\boldsymbol{O(\frac{n\log(k)}{\ell\epsi^4})}$}} \\ %  & O($n + k\ell$)     \\
   
   \dagh (Alg. \ref{alg:DAG})                                    & $\boldsymbol{1-\frac{1}{e}-\epsi}$                                                      & $O(\frac{1}{\epsi})$                        & {\fontsize{7}{10} \selectfont {$O\left(\frac{\log(k)\log(n)}{\epsi^5} \right)$} }                 & {\fontsize{7}{10} \selectfont {$O(\frac{n\log(k)}{\epsi^5})$}}                                                        & $O(\frac{n\epsi}{\ell^2})$ \\ %              & {\fontsize{7}{10} \selectfont \red{$O(\frac{n\log(k)}{\ell\epsi^5})$}} \\ %   & $O(\frac{n}{\epsi} + \frac{j\ell^2}{\epsi^2})$    \\ 
   
   \dat (Alg. \ref{alg:DAT})                                                  & $\frac{3}{8}-\epsi$                                                                     &  $\boldsymbol{2}$                           &  {\fontsize{7}{10} \selectfont {$\boldsymbol{ O(\frac{\log(n)}{\epsi^3}) }$} }                                      & {\fontsize{7}{10} \selectfont {${ O(\frac{n\log(k)}{\epsi^4})}$}}                                                          & $O(\frac{n\epsi}{\ell^2 \log(k)})$           \\ %& {\fontsize{7}{10} \selectfont \red{$\boldsymbol{O(\frac{n}{\ell\epsi^3})}$} } \\ %    & O($n+ k\ell\log(k)$ )     \\

  \dlsshort (Alg.~\ref{alg:dls2})   & $\frac{1}{8}$            &  $\boldsymbol{2}$                           & {\fontsize{7}{10}$O(\frac{n}{\ell})$}    & {\fontsize{7}{10}$\boldsymbol{O(n)}$}     & $O(\frac{n}{\ell^2 \log(n)})$    \\       %& {\fontsize{4}{10} \selectfont $O\left(\left(\frac{n/\ell+\epsi k\ell\log(n)\log(k)}{\epsi^3}\right)\right)$}  \\ %  & O($n + k\ell$)     \\
   % \mg (Alg. ??) \todo{Adap, query entries unreadable. Maybe remove from table and discuss separately?}                                                    & $1 - e^{-\gamma}$                                                                       & $\frac{k}{k_\text{max}^\text{(ALG)}}$       & {\fontsize{7}{10} \selectfont $\frac{k}{k_\text{max}^\text{(ALG)}}\cdot \Theta\left(\alg(n,k_\text{max}^\text{(ALG)})\right)$ }               & {\fontsize{7}{10} \selectfont $\frac{k}{k_\text{max}^\text{(ALG)}} \cdot \Xi\left(\alg(n,k_\text{max}^\text{(ALG)})\right)$}                & $ \boldsymbol {n, \frac{n}{\ell-1}}$    &  {\fontsize{7}{10} \selectfont $O(\frac{k/k_\text{max}^\text{(ALG)} }{\ell}\Pi(\alg(n, k)))$} \\ %   & $O(H(\frac{nk}{m})m + k\ell m)$ \\
   
   \hline
  % \end{tabularx}
  \end{tabular}

  \caption{Theoretical comparison of our algorithms to MapReduce model algorithms:
  \rgshort of \citet{barbosa2015power},
  \palgshort of \citet{barbosa2016new},
  \ddshort of \citet{kazemi2021regularized},
  \bicshort of \citet{Epasto2017}, 
  \pgshort of \citet{mirrokni2015randomized},
  and \tgshort of \citet{liu2018submodular}.
  % \rgreedy (\rgshort) of \citet{barbosa2015power}, \palg (\palgshort) of \citet{barbosa2016new},  \bic (\bicshort) of \citet{Epasto2017},  
  % \ddist (\ddshort) of \cite{kazemi2021regularized}, \pg of \citet{mirrokni2015randomized} and \tg (\tgshort) of \citet{liu2018submodular}.  
  % Theoretical guarantees provided for \mg  are when running a $\gamma$-approximate subroutine \alg over $m$ iterations. 
  % with $\Theta(n,k)$ adaptivity and $\Xi(n, k)$ query complexity. 
  % $\Theta$ and $\Xi$ represent the adaptive and query complexity. 
  The column $k_{max}$ provides the maximum cardinality constraint that the algorithm can compute.
  % \todo{Tonmoy: Comment the Parallel runtime column out (UPDATED)} 
  % Parallel runtime defines the maximum sequential queries of the algorithm in a distributed setup of $\ell$ machines.
  % For \mg, the listed $k_{max}$ values are achievable under certain conditions.
} \label{table:algs}
  % \end{adjustwidth}
  }
  \end{table*}

  A foundational result of \citet{Nemhauser1978} shows that a
  simple greedy algorithm (\sg, pseudocode in Appendix~\ref{app:Greedy}) achieves the optimal
  approximation ratio for SMCC of $1 - 1/e \approx 0.63$ in the value query model, in which
  the submodular function $f$ is available to the algorithm as an oracle that returns
  $f(S)$ when queried with set $S$.
  However, because of the modern revolution in data \citep{MaoNJLCLLY0LYXX21,EttingerCCLZPCS21},
  { \revone
  there is a need for more scalable algorithms. Specifically, the standard greedy
  algorithm is a centralized algorithm that needs access to the whole dataset,
  makes many adaptive rounds (defined below) of queries to the submodular function,
  and has quadratic total query complexity in the worst case.}
  
  \textbf{Distributed Algorithms for SMCC.}
  Massive data sets are often too large to fit on a single machine and are
  distributed over a cluster of many machines.
  In this context, there has been
  a line of work developing algorithms for SMCC in the MapReduce model, defined
  formally in Section \ref{sec:prelim} (see Table \ref{table:algs} and
  the Related Work section below). 
  Of these, \palgshort \citep{barbosa2015power} is a general framework
  that allows a centralized algorithm \alg to be converted to the MapReduce
  distributed setting with nearly the same theoretical approximation ratio, as long as \alg satisfies a technical, randomized
  consistency property (RCP), defined in Section \ref{sec:prelim}.
  
 In addition, \ddist \citep{kazemi2019submodular},
 is also a general framework that can adapt any \alg
 satisfying RCP, as we show\footnote{In \citet{kazemi2019submodular}, an instantiated version of
 \ddist with a distorted greedy algorithm is analyzed.} in
 Section \ref{apx:dagfull}.
 However, to the best of our knowledge,
 the only centralized algorithms %\footnote{Also, the distorted greedy algorithm of
 % \citep{kazemi2019submodular}, but this algorithm reduces to standard \sg for the SMCC problem.}
  to satisfy this property are
 the \sg \cite{Nemhauser1978} and the continuous greedy algorithm of
 \citet{Calinescu2007}.

 \textbf{Parallelizable Algorithms for SMCC.} A separate line of works, initiated by \citet{balkanski2018adaptive}
 has taken an orthogonal approach to scaling up the standard greedy algorithm: parallelizable algorithms for SMCC as measured
 by the \textit{adaptive complexity} of the algorithm, formally defined in Section \ref{sec:prelim}.
 In this model, each thread may have access to the entire
 ground set and hence these algorithms do not apply in a distributed setting.
 Altogether, these algorithms have exponentially improved the adaptivity of standard greedy from $O(n)$ to $O( \log (n) )$ while obtaining
 the nearly the same $1 - 1/e$ approximation ratio \cite{Fahrbach2018,balkanski2019exponential,Ene,Chekuri2018}.
 Recently, highly practical sublinearly adaptive algorithms have been developed without compromising
 theoretical guarantees \cite{Breuer2019,chen2021best}. 
 However, none of these algorithms have been shown to satisfy the RCP. 
% \cite{balkanski2019exponential,Ene,Fahrbach2018,Chekuri2018,balkanski2018adaptive,Breuer2019,Balkanski2018c,chen2021best,Kuhnle2020a, Kuhnle2020b,kazemi2019submodular}.
% Several algorithms have been proposed that are nearly optimal in terms of ratio, adaptivity, and total query calls
% with $O(\log{}(n))$ adaptivity and $O(n \text{polylog}(n))$ query complexity \cite{balkanski2019exponential,Ene,Fahrbach2018,Chekuri2018,Breuer2019,chen2021best}.
% \lspgb of \citet{chen2021best} is the current state-of-the-art adaptive algorithm that obtains nearly the optimal $1-1/e-\epsi$ approximation in $O(\log{}(n))$ adaptive rounds and $O(n)$ query complexity in expectation.
% To the best of our knowledge,
% all algorithms with sublinear adaptivity require
% randomized access to the entire ground set in each processor and hence
% are ill-suited to a distributed scenario. Moreover, no low-adaptive algorithm
% is shown to satisfy the consistency properties so far.
% \begin{problem} \label{prob:3}
%   \textbf{Size of Distributed Setups} \textit{can influence the performance of distributed algorithms. Smaller number of machines lead to longer runtime during the primary MR round ($map$), while a larger setup could lead to: 1) slower overall runtime (Fig. \ref{fig:largeSetup} in Appendix) 2) memory overload; 3) smaller permissible cardinality constraint range.} 
% \end{problem}

 \textbf{Combining Parallelizability and Distributed Settings.}
 In this work, we study parallelizable, distributed algorithms
 for SMCC; each node of the distributed cluster likely has many
 processors, so we seek to take advantage of this situation by
 parallelizing the algorithm within each machine in the cluster.
 In the existing MapReduce algorithms, the number of processors
 $\ell$ could be set to the number of
 processors available to ensure the usage of all
 available resources -- this would consider each processor as a separate
 machine in the cluster. However, there are a number of disadvantages
 to this approach:
 1) A large number of machines severely restricts the size of the cardinality constraint $k$ -- since, in all
of the MapReduce algorithms, a set of size $k \ell$ must be stored on a single machine;
therefore, we must have $k \le O( \Psi / \ell ) = O( n / \ell^2 )$. For example,
with $n = 10^6$, $\ell = 256$ implies that $k \le 15 \cdot C$, for some $C$. However, if each machine has
8 processor cores, the number of machines $\ell$ can be set to 32 (and then parallelized on each machine),
which enables the use of 256 processors with $k \le 976 \cdot C$. As the number of cores and processors
per physical machine continues to grow, large practical benefits can be obtained by a parallelized
and distributed algorithm. 
% if $\ell = n^{1-c}$ for some constant $c$,
% we must have $$ O(n) = \ell \Psi \ge \ell^2 k = n^{2 - 2c}k, $$
% which implies that $k = O( n^{2c - 1} )$. Therefore, if we want
% to target larger $k$ values, we need a smaller number of processors;
% 2) The total time complexity (that is, the sum of time taken by each
% machine) of existing MR algorithms is at least
% $\Omega(nk)$ in the worst case, which does not scale well for large $k$ values.
2) Modern CPU architectures have many cores that all share a common memory,
and it is inefficient to view communication between these cores
to be as expensive as communication between separate machines in
a cluster. 

{\revone In addition to the MapReduce and adaptive complexity models,
  the total query complexity of the algorithm to the submodular oracle is also
  relevant to the scalability of the algorithm. Frequently, evaluation of the
  submodular function is expensive, so the time spent answering oracle queries dominates
the other parts of the computation.}
Therefore, the following questions are posed: 
\textit{Q1: Is it possible to design constant-factor approximation distributed 
  algorithms with 1) constant number of MapReduce rounds; 2)
  sublinear adaptive complexity (highly-parallelizable); and 3) nearly linear total query complexity? Q2: Can we design practical
        distributed algorithms that also meet these three criteria?} 

      \subsection{Contributions}
      In overview, the contributions of the paper are: 1) an exponential improvement in adaptivity
      (from $\Omega( n )$ to $O( \log n ))$
      for an algorithm with a constant number of MR rounds; 2)
      the first linear-time, constant-factor algorithm with constant MR rounds; 
      3) a way to increase the size constraint limitation at the cost of additional MR rounds; and
      4) an empirical evaluation on a cluster of 64 machines that demonstrates an order of magnitude improvement in runtime
      is obtained by combining the MR and adaptive complexity models.

    To develop MR algorithms with sublinear adaptivity,
    we first modify and analyze the low-adaptive algorithm
    \lat from \citet{chen2021best} and show that our modification
    satisfies the RCP (Section \ref{sec:pre}). 
    Next, \lat is used to create a low-adaptive greedy procedure \lag;
    \lag then satisfies the RCP, so can be used within \palgshort \cite{barbosa2016new}
    to yield a $1 - 1/e - \epsi$
    approximation in $O( 1 / \epsi^2 )$ MR rounds, with adaptivity $O( \log n )$. We show in Section~\ref{apx:dagfull}
    that \ddist \cite{kazemi2021regularized} also works with any \alg satisfying RCP, which
    yields an improvement in the number of MR rounds to achieve the same ratio. The resulting algorithm
    we term \dagh (Section \ref{sec:DFA}), which achieves ratio $1 - 1/e - \epsi$ with $O( \log^2 n )$ adaptivity and $O( 1 / \epsi )$
    MR rounds.

    Although \dagh achieves nearly the optimal ratio in constant MR rounds and sublinear adaptivity,
    the number of MR rounds is high.
    To obtain truly practical algorithms, we develop constant-factor algorithms with 2 MR rounds: 
    \dsb and \dat
    with $O(\log(k)\log(n))$ and $O(\log n )$ adaptivity, respectively, and with approximation ratios of $\halfopt$ ($\simeq$ 0.316) and $3/8-\epsi$. \dsb is our most practical
      algorithm and may be regarded as a parallelized version of
      \rgreedy \cite{barbosa2015power}, the first MapReduce algorithm for SMCC and the state-of-the-art MR algorithm in terms of empirical performance. \dat is a novel algorithm that improves the theoretical properties of \dsb at the cost of empirical performance (see Table~\ref{table:algs} and Section \ref{sec:exp}). Notably,
the adaptivity of \dat is $\oh{\log(n)}$ which is close to the best known adaptivity for
a constant factor algorithm
of \citet{chen2021best}, which has adaptivity of $O( \log(n / k) )$.

{\revone
  Although our MR algorithms are nearly linear time (within a polylogarithmic factor
  of linear),
  all of the existing MR algorithms are superlinear time, which raises the question of
  if an algorithm exists that has a linear query complexity and has constant MR rounds
  and constant approximation factor. We answer this question affirmatively by adapting
  a linear-time algorithm of \citet{Kuhnle2020a,chen2021best}, and showing that our adaptation
  satisfies RCP (Section \ref{sec:qs}). Subsequently, we develop the first MR algorithm (\dlsshort)
  with an overall linear query complexity.}

% Also, we consider the parallel runtime of our algorithms, which is defined as the worst case number of queries made by a single machine. We theoretically analyze the parallel runtime of prior MR algorithms and show that beyond roughly $\ell \approx \sqrt{ n /k }$ machines, prior MR algorithms
% become slower with additional machines. In contrast, our algorithm
% \dsb has parallel runtime that decreases nearly linearly with additional machines. Therefore, \dsb is an affirmative answer to Q2. Moreover, we empirically confirm these theoretical predictions in Section \ref{sec:exp}. \todo{rewrite}

 % \item Additionally, we introduce \mg, a general plug-in framework for distributed algorithms, that eliminates the cardinality and memory constraint limitations (Problem \ref{prob:3}) of distributed algorithms (say \alg) by running \alg with smaller cardinality constraint $k'$ over multiple MR rounds. \mg provides a ($1-e^{-\gamma}$)-approximate solution when used with a $\gamma$-approximation algorithm \alg.
Our next contribution is \mg, a general plug-in framework for distributed algorithms, which increases the size of the maximum cardinality constraint at the cost of more MR rounds. As discussed above, the maximum $k$ value of any prior MR algorithm for SMCC
is $O( n / \ell^2 )$, where $\ell$ is the number of machines; \mg increases this to $O( n / \ell )$. We also show that under certain
assumptions on the objective function
(which are satisfied by all of the empirical applications evaluated in Section \ref{sec:exp}),
\mg can be run with $k_{max} = n$, which removes any cardinality
restrictions.
When used in conjunction with a $\gamma$-approximation 
MR 
algorithm, \mg provides an ($1-e^{-\gamma}$)-approximate solution. 

  % \mg improves practical viability of MR algorithms when dealing with a large distributed setup or large cardinality constraint values by running  

% \item Finally, our extensive empirical evaluations across a diverse set of four objectives demonstrate that distributed algorithms with sublinear adaptivity provides an orders of magnitude speedup in runtime in comparision to the existing state-of-the-art MR model algorithms while producing nearly identical solution value.  
Finally, an extensive empirical evaluation of our algorithms and the current state-of-the-art
% \revone
% on a 64-machine cluster, each equipped with 32 cores, 
on a 64-machine cluster with 32 cores each
and data instances ranging up to 5 million nodes
% \color{black}
% across four applications  
show that \dsb is orders of magnitude faster than state-of-the-art MR algorithms and demonstrates an exponential improvement in scaling with larger $k$.
% \revone
% Notably, our evaluation of \lasshort reveals that it offers an 
% additional speedup over \dsb while maintaining identical solution quality.
% \color{black}
Moreover, we show that MR algorithms slow down with increasing number of machines past a certain point, even if enough memory is available, which further motivates distributing a parallelizable algorithm over smaller number of machines.
% \revone
This observation also serves as a motivation for the development 
of the \mgalg framework. In our evaluation, we found that the 
\mgalg framework delivers solutions significantly faster than 
the \alg for large constraints, showcasing its superior performance 
and efficiency.
% \color{black}

{\revone
A previous version of this work was published in a conference \cite{DeyCK23}.
In that version, RCP was claimed to be shown for the \threshold
algorithm of \citet{chen2021best}. Unfortunately, RCP does not hold for
this algorithm. In this version, we provide a modified version
of the \threshold algorithm of \citet{chen2021best}, for which we
show RCP.
Additionally, in this version, we introduce the first linear time MR algorithm, \dls (\dlsshort).
Finally, in our empirical evaluation, we expand upon the conference version by conducting two additional
experiments using a larger cluster comprising 64 machines. These additional experiments aim to evaluate the performance of \dlsshort and investigate the effectiveness of \mgshort in handling increasing cardinality constraints, respectively.}

\subsection{Organization}
The rest of this paper is organized as follows.
In Section \ref{sec:pre}, we present the low-adaptive
procedures and show they satisfy the RCP.
In Section \ref{sec:qs}, we analyze an algorithm
with linear query complexity and show it satisfies
the RCP. In Section \ref{sec:DFA}, we
show how to use algorithms satisfying the RCP
to obtain MR algorithms. In Section \ref{sec:dat},
we improve the theoretical properties of our 2-round MR
algorithm. In Section \ref{sec:DLS_proof2}, we detail
the linear-time MR algorithm. In Section \ref{sec:MG},
we show how to increase the supported constraint size
by adding additional MR rounds. In Section \ref{sec:exp},
we conduct our empirical evaluation. 

\subsection{Preliminaries} \label{sec:prelim} 
A submodular set function captures the diminishing gain property of 
adding an element to a set that decreases with increase in size of 
the set. Formally, let $N$ be a finite set of size $n$. A non-negative set function
$f : 2^{\mathcal N} \rightarrow \mathbb{R}^{+}$ is submodular iff for 
all sets $A \subseteq B \subseteq \mathcal N$ and $x \in \mathcal N \backslash B$, $f(A \cup {x}) - f(A) \ge f(B \cup {x}) - f(B)$. 
The submodular function is monotone if $f(A) \le f(B)$ for all 
$A \subseteq B$.
We use $\marge{x}{S}$ to denote the marginal
game of element $x$ to set $S$: $\marge{x}{S} = f( S \cup \{ x \} ) - f(S)$.
In this paper, we study the following optimization 
problem for submodular optimization under cardinality constraint 
(SMCC):
\begin{equation}
O \gets \arg \max_{S \subseteq \mathcal N} f(S)  \text{; subject to} \ |S| \le k
\end{equation}

\textbf{MapReduce Model.} The MapReduce (MR)
model is a formalization of distributed computation
into MapReduce rounds of communication between machines.
A dataset of size $n$ is distributed
on $\ell$ machines,
each with memory to hold at most $\Psi$ elements of the ground set. 
The total memory of the machines
is constrained to be $\Psi\cdot \ell = O(n)$. After
each round of computation, a machine may
send $O(\Psi)$ amount of data to other machines.
We assume $\ell \le n^{1-c}$ for some
constant $c \ge 1/2$.

Next, we formally define the RCP needed to use
the two frameworks to convert a centralized algorithm
to the MR setting, as discussed previously.
\revtwo
\begin{prop}[Randomized Consistency Property of \citet{barbosa2016new}]\label{prop:consistency} 
  % \todo{Tonmoy: Add exact statement number for easy reference} 
  % (Property 2 in Section 4 of \citet{barbosa2015power})
  Let $\mathbf{q}$ be a fixed sequence of random bits; and
  $\alg$ be a randomized algorithm with randomness determined
  by $\mathbf{q}$.
  Suppose $\alg(\univ, \mathbf{q})$ returns a pair of sets,
  $(\algsol(\univ, \mathbf{q}),\algrel(\univ, \mathbf{q}))$,
  where $\algsol(\univ, \mathbf{q})$ is the feasible solution 
  and $\algrel(\univ, \mathbf{q})$ is a set providing additional information.
  Let $A$ and $B$ be two disjoint subsets of $\univ$,
  and that for each $b \in B$,
  $\algrel(A\cup\{b\}, \mathbf{q})=\algrel(A, \mathbf{q})$.
  Also, suppose that $\alg( A, \mathbf{q} )$ terminates
  successfully.
  Then $\alg( A \cup B, \mathbf{q} )$ terminates
  successfully and $\algsol( A \cup B, \mathbf{q} ) = \algsol( A, \mathbf{q} )$.
  % Let $A$ and $B$ be two disjoint subsets of $\mathcal{N}$.
  % Suppose that, for each element $e\in B$, we have 
  % $\alg(A \cup \{e\}) = \alg(A)$.
  % Then $\alg(A\cup B)=\alg(A)$.
 \end{prop}
 \color{black}

\textbf{Adaptive Complexity.}
The adaptive complexity of an algorithm is the minimum
number of sequential \textit{adaptive rounds} of at most polynomially
many queries, in which the queries to the
submodular function can be arranged, such that the queries
in each round only depend on the results of queries in previous rounds.

{\revone \subsection{Related Work}
\textbf{MapReduce Algorithms.}
Addressing the coverage maximization problem (a special case of SMCC),
\citet{ChierichettiKT10} proposed an approximation algorithm achieving a ratio of $1-1/e$, with polylogarithmic MapReduce (MR) rounds. This was later enhanced by \citet{BlellochST12}, who reduced the number of rounds to $\log^2 n$. \citet{KumarMVV13} contributed further advancements with a $1-1/e$ algorithm, significantly reducing the number of MR rounds to logarithmic levels. Additionally, they introduced a $1/2-\epsilon$ approximation algorithm that operates within $O(1/\delta)$ MR rounds, albeit with a logarithmic increase in communication complexity.

For SMCC, \citet{mirzasoleiman2013distributed} introduced the two-round distributed greedy algorithm (\greedi), demonstrating its efficacy through empirical studies across various machine learning applications. 
% They also established theoretical guarantees for specific classes of submodular functions. 
However, its worst-case approximation guarantee is $1/\Theta(\min\{\sqrt{k}, \ell\})$.
Subsequent advancements led to the development of constant-factor algorithms by \citet{barbosa2015power}. 
\citet{barbosa2015power} notably introduced randomization into the distributed greedy algorithm \greedi, resulting in the \rgreedy algorithm. 
This algorithm achieves a $\frac{1}{2}(1-1/e)$ approximation ratio 
with two MR rounds and $\oh{nk}$ queries.
\citet{mirrokni2015randomized} also introduced an algorithm with two MR rounds, 
where the first round finded randomized composable core-sets 
and the second round applied a tie-breaking rule.
This algorithm improve the approximation ratio from $\frac{1}{2}(1-1/e)$ to $0.545-\epsi$.
Later on, \citet{barbosa2016new} proposed a $\oh{1/\epsi^2}$ round algorithm
with the optimal $(1-1/e-\epsi)$ approximation ratio without data duplication.
Then, \citet{kazemi2021regularized} improved its space and communication complexity by a factor of $\oh{1/\epsi}$ with the same approximation ratio.
% when incorporating \sg in the reduce stage to derive the final solution. 
Another two-round algorithm, proposed by \citet{liu2018submodular}, achieved an $1/2-\epsi$ approximation ratio, but requires data duplication with four times more elements distributed to each machine in the first round. As a result, distributed setups have a more rigid memory restriction when running this algorithm. 

\textbf{Parallelizable Algorithms.} A separate line of works
% has taken an orthogonal approach to scaling up.
consider the parallelizability of the algorithm,
as measured by its adaptive complexity.
\citet{balkanski2018adaptive} introduced the first
$\oh{\log (n)}$-adaptive algorithm, achieving 
a $(1/3-\epsi)$ approximation ratio with $\oh{nk^2\log^3(n)}$ query complexity for SMC
Then, \citet{balkanski2019exponential} enhanced the approximation ratio 
to $1-1/e-\epsi$ with $\oh{nk^2\log^2(n)}$ query complexity 
while maintaining the same sublinear adaptivity.
In the meantime, \citet{Ene}, \citet{Chekuri2018}, and \citet{Fahrbach2018} 
achieved the same approximation ratio and adaptivity 
but with improved query complexities: 
$\oh{n\text{poly}(\log(n))}$ for \citet{Ene},
$\oh{n\log(n)}$ for \citet{Chekuri2018},
and $\oh{n}$ for \citet{Fahrbach2018}.
 % Altogether, these algorithms have exponentially improved the adaptivity of standard greedy from $O(n)$ to $O( \log (n) )$ while obtaining
 % the nearly the same $1 - 1/e$ approximation ratio.
Recently, highly practical sublinearly adaptive algorithms have been developed without compromising on the approximation ratio.
\textsc{FAST}, introduced by \citet{Breuer2019}, operates with $\oh{\log(n)\log^2\left(\log(k)\right)}$
adaptive rounds and $\oh{n\log\left(\log(k)\right)}$ queries,
while \textsc{LS+PGB}, proposed by \citet{chen2021best}, runs with $\oh{\log(n)}$
adaptive rounds and $\oh{n}$ queries.
However, none of these algorithms have been shown to satisfy the RCP. 

\textbf{Fast (Low Query Complexity) Algorithms.}
Since a query to the oracle for the submodular
function is typically an expensive operation,
the total query complexity 
of an algorithm is highly relevant to its scalability. 
The work by \citet{BadanidiyuruV14} reduced the query
complexity of the standard greedy algorithm from $\oh{kn}$ to
$\oh{n\log(n)}$, while nearly maintaining the ratio
of $1 - 1/e$. 
Subsequently, \citet{DBLP:conf/aaai/MirzasoleimanBK15} utilized
random sampling on the ground set 
to achieve the same approximation ratio in expectation
with optimal $\oh{n}$ query complexity.
Afterwards, \citet{Kuhnle2020a} obtained a deterministic algorithm
that achieves ratio $1/4$ in exactly $n$ queries, and
the first deterministic algorithm 
with $\oh{n}$ query complexity and $(1-1/e-\epsi)$ approximation ratio.
Since our goal is to develop practical algorithms in this work,
we develop paralellizable and distributed algorithms with nearly linear
query complexity (\ie within
a polylogarithmic factor of linear). Further, we develop the first
linear-time algorithm with constant number of MR rounds, although it doesn't parallelize well. 

\textbf{Non-monotone Algorithms.}
If the submodular function is no longer
assumed to be monotone, the SMCC problem becomes
more difficult. Here, we highlight a few works
in each of the categories of distributed, parallelizable,
and query efficient. 
\citet{DBLP:conf/stoc/LeeMNS09} provided the first constant-factor approximation algorithm
with an approximation ratio of $(1/4-\epsi)$ and a query complexity of $\oh{nk^5\log(k)}$.
Building upon that, \citet{DBLP:conf/wine/GuptaRST10} reduced the query complexity to $\oh{nk}$
by replacing the local search procedure with a greedy approach
% $1/(4+\alpha)$ 
that resulted in a slightly worse approximation ratio.
% where $\alpha$ is the approximation ratio for an unconstraint algorithm with at most $\oh{nk}$ query complexity.
Subsequently, \citet{buchbinder2014submodular} incorporated randomness to achieve an expected
$1/e$ approximation ratio with $\oh{nk}$ query complexity.
Furthermore, \citet{buchbinder2017comparing} improved it to $\oh{n}$ query complexity while maintaining the same approximation ratio in expectation.
For parallelizable algorithms in the adaptive complexity model,
\citet{DBLP:conf/icml/EneN20} achieved the current best $(1/e-\epsi)$ approximation ratio with $\oh{\log(n)}$ adaptivity and $\oh{nk^2\log^2(n)}$ queries to the continuous oracle (a continuous relaxation of the original value oracle).
Later, \citet{DBLP:journals/jair/ChenK24} achieved the current best sublinear adaptivity and nearly linear query complexity with a $(1/6-\epsi)$ approximation ratio.
In terms of distributed algorithms in the MapReduce model,
\rgreedy proposed by \citet{barbosa2015power} 
achieved a $0.1$ approximation ratio within $2$ MR rounds
and $\oh{nk}$ query complexity.
Simulaneously, \citet{mirrokni2015randomized} also
presented a $2$-round algorithm with a $0.14$ approximation ratio
and $\oh{nk}$ query complexity.
Afterwards, \citet{barbosa2016new} proposed a
$2$-round continuous algorithm with a $0.232\left(1-\frac{1}{\ell}\right)$ approximation ratio.
}
% \textbf{Adaptive Complexity.} \cite{balkanski2019exponential,Ene,Fahrbach2018,Chekuri2018,balkanski2018adaptive,Breuer2019,Balkanski2018c,chen2021best,Kuhnle2020a,Kuhnle2020b,kazemi2019submodular}.

%%% Local Variables:
%%% mode: latex
%%% TeX-master: "main.tex"
%%% End:

\section{Low-Adaptive Algorithms That Satisfy the RCP}\label{sec:pre}
In this section, we analyze two
low-adaptive procedures, 
\lat (Alg. \ref{alg:threshold}), \lag (Alg. \ref{alg:LAG}), 
variants of low-adaptive procedures 
% variants of \threshold and \boost
% \threshold as \lat and \boost as \lag, 
% which are 
proposed in \citet{chen2021best}.
% The pseudocode of \lat and \lag can be found in Appendix.
This analysis enables their use in the distributed,
MapReduce setting. For convenience, we regard the
randomness of the algorithms to be determined by
a sequence of random bits $\mathbf q$, which
is an input to each algorithm.
\begin{algorithm}[t]
  \caption{Low-Adaptive Threshold Algorithm (\lat)}
  \label{alg:threshold}
  \begin{algorithmic}[1]
  \Procedure{\lat}{$f, X, k, \delta, \epsi, \tau, \mathbf q$}
  \State \textbf{Input:} evaluation oracle $f:2^{\mathcal N} \to \reals$, subset $X \subseteq \univ$, constraint $k$, confidence $\delta$, error $\epsi$, threshold $\tau$,
  \revtwo
  a finite set of sequences of random bits $\mathbf q \gets \{\sigma_1, \ldots, \sigma_{M+1}\}$
  \color{black}
  \State Initialize $S_0 \gets \emptyset$ , $R_0\gets \emptyset$ $V_0 \gets X$, 
  $M \gets \left\lceil 4\left(1+\frac{1}{\beta \epsi}\right)\log \left(\frac n \delta\right) \right\rceil$, 
  $\beta \gets \epsi \left(16\log \left(\frac{4}{1-e^{-\epsi/2}}\right)\right)^{-1}$\label{lat:line-M}
  \For{ $j \gets 1$ to $M+1$ } \label{line:threForStart}
    \State Update $V_j \gets \{ x \in V_{j-1} : \marge{x}{S_{j-1}} \ge \tau \}$ \label{line:threFilterV} 
    \If{ $|V_j| = 0$ or $|S_{j-1}| = k$} 
    \revtwo
    \State \textbf{return} $S_{j-1}$, $R_{j-1}$, success \label{line:latReturn}
    \color{black}
    \EndIf
    \State $V_j \gets $ \textbf{random-permutation}$(V_j, \sigma_j)$\label{line:threPermute}
    \State $s \gets \min \{k-|S_{j-1}|, |V_j|\}$
    \revtwo
    \State $\Lambda \gets \left\{1,2,\ldots,\min\left\{s,\left\lceil \frac{1}{\epsi} \right \rceil \right\}\right\} \cup \left\{ \left\lfloor (1+\epsi)^u \right\rfloor: 
    1 \le \left\lfloor (1+\epsi)^u \right\rfloor \leq s
    , u \in \mathbb{N}\right\} \cup \{s\}$ \label{line:lambda}
    \color{black}
    % \State $B \gets \emptyset$ % consistency problem
    \State $B[\lamb] \gets \textbf{False}, \forall \lamb\in \Lambda$
    \For{$\lamb \in \Lambda$ in parallel} \label{line:innerforStart}
      \State $T_{\lamb} \gets \{v_1, v_2, \ldots, v_{\lamb}\}$ 
      \If{$ \marge{T_{\lamb}}{S_{j-1}}/|T_{\lamb}| \geq  (1-\epsi) \tau $} \label{line:threIf2}
        % \State $B \gets B \cup \{\lamb\}$ % consistency problem
      \State $B[\lamb] \gets \textbf{True}$
      \EndIf
    \EndFor\label{line:innerforEnd}
    % \State $\lamb_j^* \gets \min\{\lamb \in \Lambda: \lamb >b, \forall b \in B \}$ % consistency problem
    \revtwo
    \State $\lamb'_j \gets \min\left\{\lamb \in \Lambda: 
    B[\lamb] = \textbf{False}        
    \text{ OR } \lamb = s\right\}$ \label{line:LAT-index}
    \State $R_j \gets R_{j-1} \cup T_{\lamb'_j}$
    \State \textbf{if} $\lamb_j'\le \left\lceil \frac{1}{\epsi} \right\rceil$
    \textbf{then} $\lamb_j^*\gets \lamb'_j-1$
    \textbf{else} $\lamb_j^*\gets \lamb'_j$ \label{line:LAT-index2}
    \State $S_j\gets S_{j-1} \cup T_{\lamb_j^*}$
    \color{black}
  \EndFor \label{line:threForEnd}
  \revtwo
  \State \textbf{return} $S_{M+1}$, $R_{M+1}$, failure
  \color{black}
  \EndProcedure
\end{algorithmic}
\end{algorithm}

Observe that the randomness of 
% \color{blue}
% \lasshort comes from the random permutations 
% of $V_j$ on Line \ref{line:fastPermutation},
% \color{black} 
% while the randomness of 
both of \lat and \lag
only comes from the random permutations of $V_j$ on
Line \ref{line:threPermute} of \lat, since \lag employs \lat as a subroutine. Consider an equivalent version
of these algorithms in which the entire ground set
$\univ$ is permuted randomly, from which the
permutation of $V_j$ is extracted. That is,
if $\sigma$ is the permutation of $\univ$,
the permutation of $V$ is given by
$v < w$ iff $\sigma(v) < \sigma(w)$,
for $v, w \in V$.
Then, the random vector $\mathbf{q}$ specifies a sequence
of permutations of $\univ$: $(\sigma_1, \sigma_2, \ldots)$,
which completely determines the behavior of both algorithms.

\subsection{Low-Adaptive Threshold (\lat) Algorithm} \label{sec:lat}
This section presents the analysis of the low-adaptive threshold 
algorithm, \lat (Alg. \ref{alg:threshold}; 
a variant of \threshold from \citet{chen2021best}).
\revtwo In \lat, the randomness is explicitly dependent on a random vector $\mathbf{q}$.
This modification ensures the consistency property in the distributed setting.
\color{black}
Besides the addition of randomness \textbf{q},
\lat employs an alternative strategy of prefix selection within the for loop.
Instead of \revtwo identifying \color{black} the \textit{failure point}
% ($\lamb$ is a failure point if $B[\lamb] = \textbf{False}$) 
(the average marginal gain is under the threshold)
after the final \textit{success point} 
% ($\lamb$ is a success point if $B[\lamb] = \textbf{True}$) 
(the average marginal gain is above the threshold)
in \citet{chen2021best},
\lat \revtwo determines \color{black} the first failure point.
This adjustment not only addresses the consistency problem 
but also preserves the theoretical guarantees below.

% We restate Theorem 2 in~\citet{chen2021best} for reference.
% % in the following.
% This theorem also holds with \lat.
\begin{theorem} \label{theorem:threshold}
Suppose \lat is run with input $(f,X,k, \delta, \epsi, \tau,\mathbf{q} )$.
Then, the algorithm 
has adaptive complexity $O(\log (n/\delta)/\epsi^3)$ and
outputs \revtwo $S, R \subseteq \mathcal N$,
where $S$ is the solution set with $|S| \le k$ 
and $R$ provides additional information with $|R| = \oh{k}$. 
% \todo{or $|R|\le 2k$.double check in the downstream algorithms}
\color{black} 
The following properties hold:
1) The algorithm succeeds with probability at least $1 - \delta/n$.
2) There are $O(n/\epsi^3)$ oracle queries in expectation.
\revone
\revtwo
3) It holds that $f(S)/|S| \geq (1-2\epsi)\tau / (1+\epsi)$. \color{black}
4) If $|S| < k$, then $\marge{x}{S} < \tau$ for all $x \in \mathcal{N}$.
\end{theorem}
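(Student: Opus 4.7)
The plan is to prove each of the four claims in turn, leaning on the fact that \lat is a mild modification of \threshold from \cite{chen2021best}, while being careful at the places where the ``first failure'' selection rule in line \ref{line:LAT-index} replaces the ``last success'' rule of that work.

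Claim (4) is essentially immediate from the structure of the outer loop. If the algorithm returns at line \ref{line:latReturn} with $|S| < k$, execution must have exited via the $|V_j| = 0$ branch. By submodularity, once an element $x$ is removed at line \ref{line:threFilterV} in some iteration $j'$ because $\marge{x}{S_{j'-1}} < \tau$, we have $\marge{x}{S_{j''}} < \tau$ for every $j'' \ge j'-1$ because $S_{j'-1} \subseteq S_{j''}$. Hence at termination every element that was ever filtered, together with every element still present in $V_j$ (vacuously, since $V_j = \emptyset$), has marginal below $\tau$, which gives the claim.

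For claim (3), the key per-iteration statement is $\marge{T_{\lamb_j^*}}{S_{j-1}} \ge (1-\epsi)\tau\,\lamb_j^* /(1+\epsi)$; summing over $j$ and using $(1-\epsi)/(1+\epsi) \ge (1-\epsi)/(1+2\epsi)$ then yields the bound by telescoping. To verify the per-iteration statement, I would split into two cases matching the disjunction in line \ref{line:LAT-index}. In the ``dense'' regime $\lamb_j^* \le \lceil 1/\epsi\rceil$, the set $\Lambda$ contains every integer in $[1,\lceil 1/\epsi\rceil]$ (since $(1+\epsi)^u$ rounds to consecutive integers while the geometric spacing stays below $1$); minimality of $\lamb_j^*$ together with the selection condition $B[\lamb_j^*+1] = \textbf{False}$ then forces $B[\lamb_j^*] = \textbf{True}$, which is exactly the average-gain bound, or else $\lamb_j^* = 1$ and the filtering step already gives $\marge{v_1}{S_{j-1}} \ge \tau$. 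In the ``geometric'' regime $\lamb_j^* > \lceil 1/\epsi\rceil$, minimality forces $B[\lamb'] = \textbf{True}$ for the preceding element $\lamb' \in \Lambda$, which satisfies $\lamb' \ge \lamb_j^*/(1+\epsi)$; monotonicity then yields $\marge{T_{\lamb_j^*}}{S_{j-1}} \ge \marge{T_{\lamb'}}{S_{j-1}} \ge (1-\epsi)\tau\,\lamb' \ge (1-\epsi)\tau\,\lamb_j^*/(1+\epsi)$.

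The main obstacle is claim (1): showing that the algorithm terminates within $M+1$ outer iterations except with probability at most $\delta/n$. The argument would parallel the random-permutation analysis of \cite{chen2021best}. Conditional on $V_j$ and $S_{j-1}$, the random permutation $\sigma_j$ governs how many ``bad'' elements (those whose marginal to $S_{j-1} \cup T_\lamb$ for a random prefix $T_\lamb$ has fallen below $(1-\epsi)\tau$) appear early. Via a union bound over the $|\Lambda|$ geometric indices, one shows that with probability at least $1 - \delta/(n(M+1))$ iteration $j$ makes progress in the sense that either $|S_j|$ increases by a sufficient amount toward $k$ or $|V_{j+1}| \le (1-\beta)|V_j|$. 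The delicate technical step is verifying that the ``first failure'' rule is compatible with this progress guarantee: when $B[\lamb+1]$ is \textbf{False}, a concentration argument on the random order of $V_j$ shows that a $\beta$-fraction of elements of $V_j$ would have individual marginals below $\tau$ in the subsequent iteration and therefore be filtered. A standard potential argument on $\log|V_j|$ and $|S_j|$, using $M = \lceil 4(1 + \tfrac{1}{\beta\epsi})\log(n/\delta)\rceil$ from line \ref{lat:line-M}, then caps the total number of iterations, and a final union bound over the $M+1$ rounds produces the $\delta/n$ failure probability.

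Finally, claim (2) and the adaptive-complexity bound follow from the loop structure. The outer loop runs $M+1 = O(\log(n/\delta)/\epsi^3)$ times (absorbing the $\log(1/\epsi)$ factor coming from $\beta$ into $1/\epsi$), each iteration uses $O(1)$ adaptive rounds (one for filtering, one for the parallel inner loop), giving the stated adaptivity. For queries, each iteration uses $|V_{j-1}|$ queries for filtering and $|\Lambda| = O(\epsi^{-1}\log s)$ queries to evaluate $f(S_{j-1} \cup T_\lamb)$. The same progress lemma underlying claim (1) implies $\ex{|V_j|}$ decays geometrically by a $(1-\beta)$ factor, so $\sum_j \ex{|V_{j-1}|} = O(n/\beta)$, and the inner-loop queries contribute at most $O(M\epsi^{-1}\log n)$; both are dominated by the stated $O(n/\epsi^3)$.
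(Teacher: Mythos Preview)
Your treatment of claims (3), (4), adaptivity, and the query-complexity bookkeeping is essentially the paper's argument. One small slip in (3): in the geometric regime the predecessor $\lambda'$ of $\lambda_j^*$ in $\Lambda$ only satisfies $\lambda' \ge \lambda_j^*/(1+2\epsi)$, not $\lambda_j^*/(1+\epsi)$, because of the ceiling (this is exactly why the theorem states $(1+2\epsi)$ rather than $(1+\epsi)$; you need $\lambda' \ge \lceil 1/\epsi\rceil$ to absorb the $+1$ from the ceiling). Your final bound is still correct, but the intermediate inequality you wrote is not.

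The real gap is in claim (1). You assert that each outer iteration makes progress with probability at least $1-\delta/(n(M+1))$, obtained via a union bound over $|\Lambda|$ indices, and then union-bound over the $M+1$ rounds. This per-iteration guarantee is not achievable with the algorithm's fixed $\beta$. The event ``iteration $j$ fails to make progress'' contains the event ``$B[\lambda]=\textbf{False}$ for some small $\lambda\le\min\{s,t\}$,'' and for such $\lambda$ the best bound you get (Markov on the number of bad elements, each bad with probability at most $\beta\epsi$) is $\Pr[B[\lambda]=\textbf{False}]\le\beta$, which is a fixed constant depending only on $\epsi$. No union bound over $\Lambda$ will drive this below a constant, let alone to $\delta/(n(M+1))$.

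What the paper actually proves (Lemma~\ref{lemma:latprob}) is that each iteration makes progress with probability at least $1/2$: the carefully chosen $\beta=\epsi\,(16\log(4/(1-e^{-\epsi/2})))^{-1}$ is exactly what makes $\sum_{\lambda}\min\{\beta,e^{-\epsi\lambda/2}\}\le 1/2$. The $\delta/n$ tail then comes not from a union bound but from a Chernoff-type argument: model the iterations as dependent Bernoulli trials with success probability $\ge 1/2$, couple to i.i.d.\ Bernoullis via Lemma~\ref{lemma:indep}, and show that with probability $\ge 1-\delta/n$ at least $\lceil\log_{1-\beta\epsi}(1/n)\rceil$ of the $M+1$ rounds succeed, which forces $|V|=0$ or $|S|=k$. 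Your potential argument on $\log|V_j|$ is the right endgame, but it must be fed by ``enough successes among $M+1$ coin flips'' rather than ``every round succeeds w.h.p.'' The same point affects your query-complexity sketch: $\ex{|V_j|}$ does not decay geometrically in $j$; instead one groups iterations between consecutive successes, uses $\ex{Y_i-Y_{i-1}}\le 2$, and bounds $|V|$ by $n(1-\beta\epsi)^{i-1}$ within the $i$-th block.
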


\subsubsection{Analysis of Consistency}\label{sec:lat-consis}
 \begin{lemma} \label{lemma:lat-consistency}
   \lat satisfies the randomized consistency property
   (Property \ref{prop:consistency}).
 \end{lemma}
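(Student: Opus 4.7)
The plan is to prove the lemma by induction on the iteration index $j$ of \lat, jointly coupling the runs on inputs $A$, $A \cup B$, and $A \cup \{b\}$ for every $b \in B$ under the same random bits $\mathbf q$. The inductive hypothesis at step $j$ states that the partial solutions $S_j$ coincide across all these runs, the prefix $T_{\lambda_j^*}$ chosen at each iteration $\le j$ is identical and contains only elements of $A$, and consequently the sets $V_j$ differ only in some set of $B$-elements appearing at positions beyond the chosen prefix in the global permutation. The base case $j = 0$ is vacuous since $S_0 = \emptyset$.

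Two structural facts drive the argument. First, the permutation on Line~\ref{line:threPermute} is inherited from a single global permutation $\sigma_j$ determined by $\mathbf q$, so the relative order of any fixed subset of $\univ$ is preserved across the three kinds of runs. Second, the IH gives a common $S_{j-1}$, so the filter on Line~\ref{line:threFilterV} yields $V_j^{A \cup B} = V_j^A \cup W$ and $V_j^{A \cup \{b\}} = V_j^A \cup \{b\}$ (when $b \in W$), where $W = V_j^{A \cup B} \cap B$ is the set of surviving $B$-elements.

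The heart of the proof is to show that adding any single $b \in W$ does not change the prefix chosen at iteration $j$. Let $r_b$ denote the position of $b$ in the permutation of $V_j^{A \cup \{b\}}$; exactly $r_b - 1$ elements of $V_j^A$ precede $b$. Since $T_\lambda^{A \cup \{b\}} = T_\lambda^A$ for all $\lambda < r_b$, the Boolean array $B[\cdot]$, the candidate set $\Lambda$, and the termination parameter $s$ (at values below $r_b$) all agree in the two runs. If $\lambda_j^{*,A} \ge r_b$, then no $\lambda < r_b$ satisfies the condition on Line~\ref{line:LAT-index} in the $A \cup \{b\}$-run either, so $\lambda_j^{*, A \cup \{b\}} \ge r_b$ and $b$ lands in the prefix, contradicting the RCP hypothesis $b \notin \alg(A \cup \{b\}, \mathbf q)$. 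Hence $\lambda_j^{*, A} < r_b$. A symmetric argument, which handles the delicate edge case $r_b = \lambda_j^{*, A} + 1$ (where $B^{A \cup \{b\}}[\lambda_j^{*,A}+1]$ may differ from $B^A[\lambda_j^{*,A}+1]$) by once again invoking RCP to force that value to be \textbf{False}, then establishes $\lambda_j^{*, A \cup \{b\}} = \lambda_j^{*, A}$ with identical chosen prefix.

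To transfer this to the $A \cup B$-run, observe that for every $b \in W$ the position of $b$ in the $V_j^{A \cup B}$-permutation is at least $r_b$, because additional $B$-elements only push $b$ later; so no element of $W$ occupies any of the first $\lambda_j^{*,A}$ positions. Thus $T_\lambda^{A \cup B} = T_\lambda^A$ for $\lambda \le \lambda_j^{*,A}$, and whenever position $\lambda_j^{*,A}+1$ in $V_j^{A \cup B}$ is occupied by some $b \in W$ the equality $T_{\lambda_j^{*,A}+1}^{A \cup B} = T_{\lambda_j^{*,A}+1}^{A \cup \{b\}}$ holds, so the Line~\ref{line:LAT-index} condition evaluates identically, giving $\lambda_j^{*, A \cup B} = \lambda_j^{*, A}$. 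This closes the induction and implies $\alg(A \cup B, \mathbf q) = \alg(A, \mathbf q)$. For successful termination on Line~\ref{line:latReturn}, if the $A$-run halts with $|S_{j-1}| = k$ the IH transfers directly; if it halts with $V_j^A = \emptyset$, then any hypothetical $b$ with $\marge{b}{S_{j-1}} \ge \tau$ would be the unique element of $V_j^{A \cup \{b\}}$ and hence be selected, contradicting RCP, so $V_j^{A \cup B}$ is empty as well. The main technical obstacle is the case analysis surrounding Line~\ref{line:LAT-index}; the ``first failure point'' design in \lat is calibrated precisely so that agreement of $B[\cdot]$ at indices $< r_b$ suffices to pin down the minimum qualifying $\lambda$.
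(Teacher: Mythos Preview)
Your overall architecture matches the paper's proof: an induction on the outer iteration $j$, coupling the runs on $A$, $A\cup\{b\}$, and $A\cup B$ through the shared permutation $\sigma_j$, together with the observation that the filter step only depends on the common $S_{j-1}$. Where you diverge from the paper is in the direction of the key implication, and this is where your argument has a genuine gap.

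You argue by contradiction that $\lambda_j^{*,A} < r_b$: assuming $\lambda_j^{*,A}\ge r_b$, you claim ``no $\lambda < r_b$ satisfies the condition on Line~\ref{line:LAT-index} in the $A\cup\{b\}$-run either.'' But the condition at $\lambda=r_b-1$ (when $r_b-1\le\lceil 1/\epsi\rceil$) reads $B^{A\cup\{b\}}[r_b]=\textbf{False}$, and $B^{A\cup\{b\}}[r_b]$ is determined by $T_{r_b}^{A\cup\{b\}}=T_{r_b-1}^A\cup\{b\}$, not by any prefix of the $A$-run. Knowing that $B^A[\cdot]$ and $B^{A\cup\{b\}}[\cdot]$ agree \emph{below} $r_b$ is therefore not enough to control the condition at $\lambda=r_b-1$. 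If $B^{A\cup\{b\}}[r_b]=\textbf{False}$, the $A\cup\{b\}$-run may stop at $\lambda_j^{*,A\cup\{b\}}=r_b-1<r_b$, so $b$ is \emph{not} in the prefix and no contradiction with RCP arises; hence you cannot conclude $\lambda_j^{*,A}<r_b$. Your later ``symmetric argument'' paragraph acknowledges exactly this edge case for the reverse inequality, but the forward step already fails for the same reason.

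The paper avoids this by reversing the logic: it fixes $b_1$ to be the \emph{first} $B$-element in the permuted $V_j'$, at position $\lambda^{\#}$. RCP applied to $b_1$ immediately yields $\lambda_j^{*,b_1}<\lambda^{\#}$ (no contradiction argument needed). Because $b_1$ is first, the $b_1$-run and the $A\cup B$-run agree on positions $1,\dots,\lambda^{\#}$ \emph{inclusive}, so the entries $B[\cdot]$ needed to pin down $\lambda_j^{*}$ coincide across both runs, and also with the $A$-run on positions $1,\dots,\lambda^{\#}-1$. Starting from the $b_1$-run rather than the $A$-run is what makes the ``first failure point'' selection rule do the work you want. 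Your proof can be repaired along these lines: derive $\lambda_j^{*,A\cup\{b\}}<r_b$ directly from RCP (this is immediate), then transfer to the $A$-run; and for the $A\cup B$-run, pivot specifically on the earliest $B$-element as the paper does.
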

 \begin{proof}
  %  Observe that the randomness in the algorithm
  %  is from the random permutations of
  %  $V$ on Line \ref{line:threPermute}.
  %  Consider an equivalent version
  %  of the algorithm in which the entire ground set
  %  $\univ$ is permuted randomly, from which the
  %  permutation of $V$ is extracted. That is,
  %  if $\sigma$ is the permutation of $\univ$,
  %  the permutation of $V$ is given by
  %  $v < w$ iff $\sigma(v) < \sigma(w)$,
  %  for $v, w \in V$.
   Consider that the algorithm runs
   for all $M+1$ iterations of the outer
   \textbf{for} loop; if the algorithm
   would have returned at iteration $j$,
   the values $S_i$
   and $V_i$, for $i > j$ keep their values from 
   when the algorithm would have returned.
   The proof relies upon the fact that
   every call to $\lat( \cdot, \mathbf{q} )$
   uses same sequences of permutations of $\univ$:
   $\{\sigma_1,\sigma_2,\ldots,\sigma_{M+1}\}$.
   We refer to iterations of the outer \textbf{for}
   loop on Line \ref{line:threForStart}
   of Alg. \ref{alg:threshold} simply
   as iterations. 
   Since randomness only happens with the random permutation of
   $\mathcal N$ at each iteration in Alg.~\ref{alg:threshold}, 
   the randomness of $\lat( \cdot, \mathbf{q} )$ is determined by $\mathbf q$,
   which satisfies the hypothesis of Property \ref{prop:consistency}.

   \revtwo For the two sets returned by $\lat(\univ, \mathbf q)$,
   $S_{M+1} = \textsc{ThreshSeqModSol}(\univ, \mathbf q)$ 
   represents the feasible solution,
   and $R_{M+1} = \textsc{ThreshSeqModRel}(\univ, \mathbf q)$
   is the set that provides additional information.
   \color{black}
   We consider the runs of
   (1) $\lat( A, \mathbf q )$,
   (2) $\lat( A \cup \{ b \}, \mathbf q )$,
   and (3) $\lat( A \cup B, \mathbf q )$
   together. Variables of (1) are given the notation defined in
   the pseudocode; variables of (2) are given the
   superscript $b$; and variables of (3) are given
   the superscript $'$.
   \revtwo 
   Suppose for each $b\in B$, $R_{M+1}^{b} = R_{M+1}$.
   The lemma holds if $S_{M+1}' = S_{M+1}$
   and $\lat( A \cup B, \mathbf q )$ terminates successfully.
   \color{black}

    % \color{blue}
   Let $P(i)$ be the
   statement \revtwo
   for iteration $i$ of the outer for loop on Line~\ref{line:threForStart}
   \color{black} that
   \begin{itemize}
   \item[(i)] $S_{i} = S'_{i}$, and
   \revtwo
   % \item[(ii)] for all $b \in B$, $S_{i} = S^b_i$, and
   \item[(ii)] for all $b \in B$, $S_{i} = S^b_i$, $R_{i} = R^b_i$, and
   \color{black}
   \item[(iii)] $V_{i} = V'_{i} \setminus B$, and
   \item[(iv)] for all $b \in B$, $V_{i} = V^b_i \setminus \{b\}$.
   \end{itemize}
   \revtwo
   If $P(M+1)$ is true, 
   and $\lat( A , \mathbf q )$ and $\lat( A \cup B, \mathbf q )$ terminate at the same iteration, 
   implying that $\lat( A \cup B, \mathbf q )$ also terminates successfully, 
   then the lemma holds immediately.
   In the following, we prove these two statements by induction.

   For $i = 0$, it is clear that $S_0 = S_0' = R_0 = R_0^b = \emptyset$,
   and $V_0 = V_0'\setminus B = V_0^b\setminus \{b\} = \univ\setminus B$ for all $b\in B$.
   Therefore, $P(0)$ is true.
   Next, suppose that $P(i-1)$ is true.
   We show that $P(i)$ is also true,
   and if $\lat( A , \mathbf q )$ terminates at iteration $i$,
   $\lat( A \cup B , \mathbf q )$ also terminates at iteration $i$.

   Firstly, we show that (iii) and (iv) of $P(i)$ hold.
    Since $P(i-1)$ holds,
    it indicates that $S_{i-1} = S'_{i-1} = S_{i-1}^b$ for any $b\in B$ by (i) and (ii) of $P(i-1)$.
    So, (iii) and (iv) of $P(i)$ clearly hold since
    the sets $S_{i-1}, S_{i-1}', S_{i-1}^b$ involved in updating $V_i$, $V_i', V_i^b$
    on Line~\ref{line:threFilterV}
    are equal (after $B$ is removed).
    % In the following, we will prove that (i) and (ii) of $P(i)$ also hold.

   Secondly, we prove that (i) and (ii) of $P(i)$ hold and that
   if $\lat( A , \mathbf q )$ terminates at iteration $i$,
   $\lat( A \cup B , \mathbf q )$ also terminates at iteration $i$. 

   If $\lat( A , \mathbf q )$ terminates at iteration $i$ because 
   $|S_{i-1}| = k$, 
   then the other two runs of \lat also terminate at iteration $i$,
   since $|S'_{i-1}| = |S_{i-1}^b| = |S_{i-1}|=k$
   by (i) and (ii) of $P(i-1)$.
   Moreover, (i) and (ii) of $P(i)$ are true since $S'_{i}$, $S_{i}^b $, $S_{i}$,
   $R_{i}^b $, and $R_{i}$ do not update.

   If $\lat( A , \mathbf q )$ terminates at iteration $i$ because 
   $V_i = \emptyset$, then by (iii) and (iv) of $P(i)$,
   it holds that $V_i^b\setminus \{b\} = V_i'\setminus B = \emptyset$
   for all $b \in B$.
   Thus, $V_i^b $ must be empty;
   otherwise, $V_i^b = \{b\}$ and $b$ will be added to $R_{i-1}^b$ at iteration $i$
   which contradicts the fact that $b\not \in R_{M+1} = R_{M+1}^b$.
   Therefore, $b$ has been filtered out before iteration $i$ or
   will be filtered out at iteration $i$ in both $\lat( A\cup \{b\} , \mathbf q )$ and $\lat( A\cup B , \mathbf q )$ 
   since the sets $S_{i-1}^b$ and $S_{i-1}^B$ involved in updating $V_i^b$ and $V_i^B$
   are the same.
   Consequently, $V_i'$ is also an empty set and $\lat( A \cup B , \mathbf q )$ terminates
   at iteration $i$.
   Futhermore, (i) and (ii) of $P(i)$ are true since $S'_{i}$, $S_{i}^b $, $S_{i}$,
   $R_{i}^b $, and $R_{i}$ do not update.

    \begin{figure}
        \centering
        \includegraphics[width=0.8\textwidth]{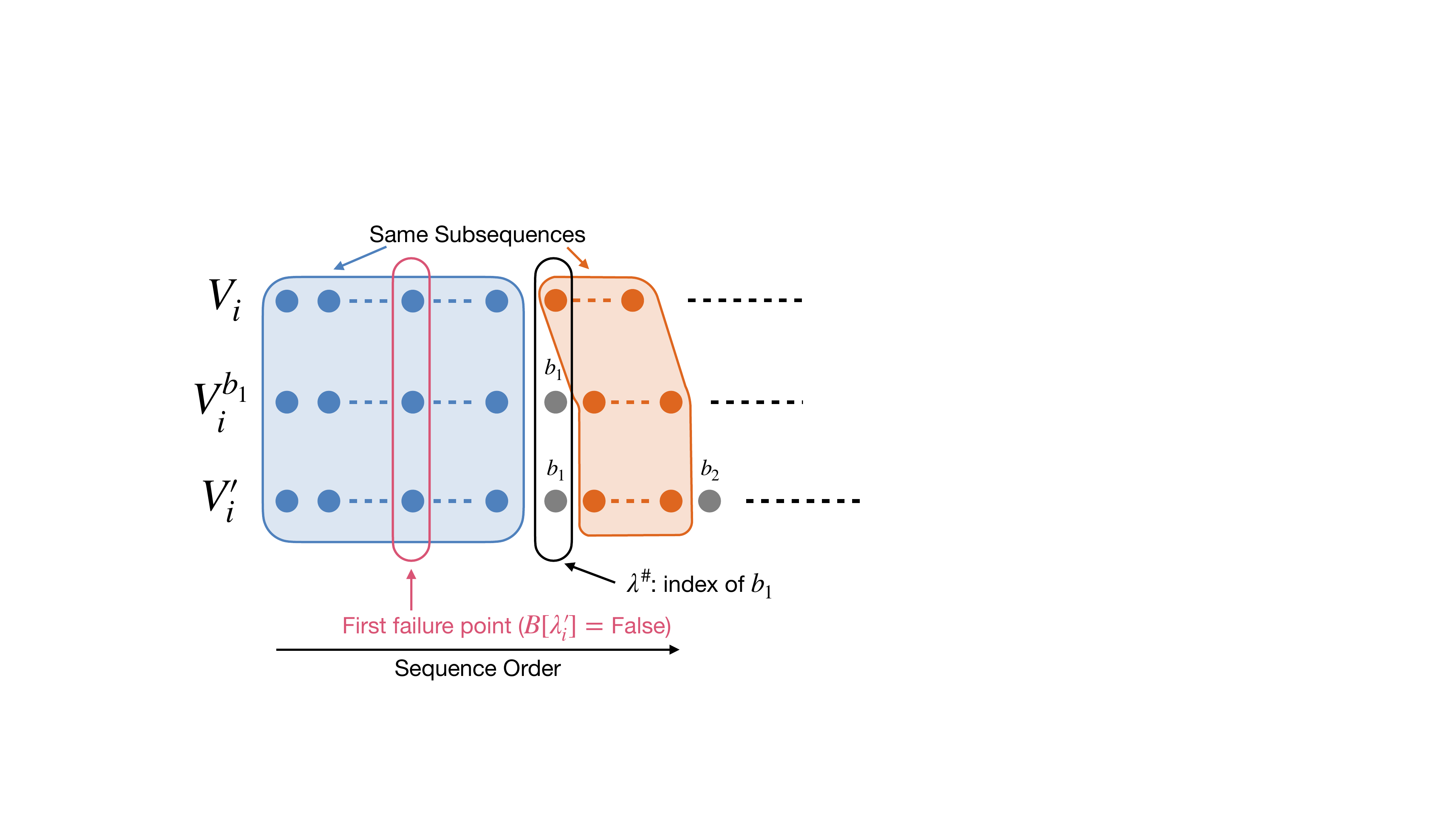}
        \caption{This figure depicts the relationship between
        $V_i$, $V_i^{b_1}$ and $V_i'$ in the circumstance that $|S_{i-1}| < k$, 
        $V_i \neq \emptyset$.}
        \label{fig:latproof}
    \end{figure}
    
    Finally, consider the case that $\lat( A , \mathbf q )$ does not terminate at iteration $i$,
    where $V_i \neq \emptyset$ and $|S_{i-1}|< k$. 
    \color{black}
    By (iii) and (iv),
    it holds that $V_i = V_i'\setminus B= V_i^b \setminus \{b\}$,
    for any $b\in B$.
    Let $b_1 \in B$ be the first element appeared in the sequence of $V_i'$,
    $\lamb^{\#}$ be the index of $b_1$ in the order of the permutation of $V_i'$.
    If such $b_1$ does not exist (\ie $V_i = V_i'=V_i^b$, for any $b\in B$), 
    % let $b_1$ be a dummy element
    let $\lamb^{\#} = |V_i|+1$.
    Fig.~\ref{fig:latproof} depicts the relationship between three sequences
    $V_i$, $V_i^{b_1}$ and $V_i'$.
    Let \revtwo
    $T_{\lambda'_i}$, $T_{\lambda'_i}^{b_1}$, $T_{\lambda'_i}'$
    be sets added to $R_{i-1}$, $R_{i-1}^{b_1}$, $R_{i-1}'$ \color{black}
    at iteration $i$ respectively.
     % = \{v_1,v_2, \ldots, v_{\lambda^*_i}\}$
    Since $V^{b_1}_i$ and $V_i'$ are subsets of the same random
    permutation of $\mathcal N$, 
    and $V_i'\setminus B= V_i^{b_1} \setminus \{b_1\}$,
    $\lamb^{\#}$ is also the index of $b_1$ in $V_i^{b_1}$.
    Also, it holds that 
    \begin{equation}\label{eq:pf1}
        T_{x} = T_{x}^{b_1} = T_{x}', \forall x < \lamb^{\#}.
    \end{equation}
    Since \revtwo $b_1 \not \in R_{M+1}^{b_1}$,
    it holds that $b_1 \not \in T_{\lamb'_i}^{b_1}$,
    which indicates that $\lamb'_i < \lamb^{\#}$.
    By the definition of $\lamb'_i$ on Line~\ref{line:LAT-index},
    \begin{equation*}
    \left\{
    \begin{aligned}
        &B^{b_1}[x] = \textbf{True}, \forall x < \lamb'_i,\\
        &B^{b_1}[\lamb'_i] = \textbf{False}.
    \end{aligned}
    \right.
    \end{equation*}
    Follows from Equality~\ref{eq:pf1}, and $\lamb'_i < \lamb^{\#}$,
    it holds that
    % and Property (iii) and (iv),
    \begin{equation*}
    \left\{
    \begin{aligned}
        &B[x] = B'[x] = \textbf{True}, \forall x < \lamb'_i,\\
        &B[\lamb^{*}_i] = B'[\lamb'_i] = \textbf{False},
    \end{aligned}
    \right.
    \end{equation*}
    Therefore, $T_{\lambda^*_i}=T_{\lambda^*_i}^b=T_{\lambda^*_i}'$,
    and even further $T_{\lambda'_i}=T_{\lambda'_i}^b=T_{\lambda'_i}'$.
    So, Property (i) and (ii) hold.
     \color{black}

    % and let $b \in B$. 
    % By (iii) and (iv), $T_{\lambda^*_i} \subseteq V_{i}^b$ and
    % $T_{\lambda^*_i} \subseteq V_{i}'$. 
    % Because $b \not \in \lat( A \cup \{b\}, \mathbf{q})$,
    % if $b \in V_i^b$, it must hold that
    % $b \not\in S_i^b$. Since
    % $V_i$ and $V^b_i$ are subsets of the same random permutation of $\mathcal N$,
    % and $V_i = V^b_i \setminus \{ b \}$, it follows
    % that $T_{\lambda^*_i}^b = T_{\lambda^*_i}$. This shows (ii).
    % If $b \not \in V_i^b$,
    % clearly, it holds that $V_i^b=V_i$ and $T_{\lambda^*_i}^b = T_{\lambda^*_i}$,
    % which shows (ii).
    % Also, with $b \not \in V_i^b$,
    % it must hold that for some $j < i$,
    % $\marge{b}{S_j^b} < \tau$; this means
    % that $\marge{b}{S_j'} < \tau$, and therefore
    % it also holds that $b \not \in V_i'$.
    % And by above, each $b \in B \cap V_i^b$ comes after every element of $T_{\lambda_i^*}$ in the permutation $\sigma_i$.
    % It follows that $T_{\lambda^*_i}' = T_{\lambda^*_i}$, which shows (i).
    \color{black}
   % \end{proof}

 \end{proof}

% \color{blue}
\subsubsection{Analysis of Guarantees}
There are \revtwo three \color{black} adjustments in \lat compared with \threshold in \citet{chen2021best} altogether.
\revtwo First, we made the randomness $\mathbf q$ explicit \color{black}
allowing us to still consider each permutation
on Line~\ref{line:threPermute} as a random step.
\revtwo Therefore, the analysis of the theoretical guarantees is not influenced 
by adopting the randomness $\mathbf q$. \color{black}
Second, the prefix selection step is changed from \revtwo identifying \color{black} the
failure point after the final success point
to \revtwo identifying \color{black} the initial failure point.
\revtwo Note that, this change is necessary for the algorithm to satisfy RCP (Property\ref{prop:consistency}). \color{black}
However, by making this change, 
fewer elements could be filtered out at the next iteration.
Fortunately, we are still able to filter out a constant fraction
of the candidate set.
\revtwo
Third, the algorithm returns one more set that provides extra information.
Since the additional set does not affect the functioning of the algorithm,
it does not influence the theoretical guarantees.
In the following, we will provide the detailed analysis for the theoretical guarantees.
\color{black}

As demonstrated by Lemma~\ref{lemma:latinc} below, 
an increase in the number of selected elements results in 
a corresponding increase in the number of filtered elements.
Consequently, there exists a point, say $t$, such that
a given constant fraction of $V_j$ can be filtered out
if we are adding more than $t$ elements.
The proof of this lemma is quite similar to the proof of
Lemma 12 in~\citet{chen2021best} which can be found in
Appendix~\ref{apx:lat}.
\begin{restatable}{lemma}{lemmalatinc}
\label{lemma:latinc}
    At an iteration $j$, let $A_i=\{x \in V_j: \marge{x}{T_i\cup S_{j-1}}<\tau\}$
    after Line~\ref{line:threFilterV}.
    It holds that $|A_0| = 0$, $|A_{|V_j|}| = |V_j|$ and $|A_i| \le |A_{i+1}|$.
\end{restatable}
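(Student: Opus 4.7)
The plan is to verify the three claims in sequence, each following from either the filtration performed at Line~\ref{line:threFilterV} or from submodularity applied to the nested chain $T_0 \subset T_1 \subset \cdots \subset T_{|V_j|}$.

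First I would handle $|A_0| = 0$. Since $T_0 = \emptyset$, the condition defining $A_0$ becomes $\marge{x}{S_{j-1}} < \tau$. But immediately after Line~\ref{line:threFilterV}, every surviving element of $V_j$ satisfies $\marge{x}{S_{j-1}} \ge \tau$ by construction of the filter, so $A_0$ is empty.

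Next I would dispatch $|A_{|V_j|}| = |V_j|$. When $i = |V_j|$, the set $T_{i}$ equals $V_j$ itself, so for every $x \in V_j$ we have $x \in T_i \cup S_{j-1}$, which forces $\marge{x}{T_i \cup S_{j-1}} = 0 < \tau$ (using $\tau > 0$, which we may assume since otherwise the filtration in Line~\ref{line:threFilterV} would retain nothing interesting anyway). Hence every element of $V_j$ belongs to $A_{|V_j|}$.

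The monotonicity $|A_i| \le |A_{i+1}|$ is where submodularity enters, and it is the only step with any content. I would argue the stronger set-inclusion statement $A_i \subseteq A_{i+1}$. Fix $x \in A_i$, so $\marge{x}{T_i \cup S_{j-1}} < \tau$. Because $T_i \subseteq T_{i+1}$ and therefore $T_i \cup S_{j-1} \subseteq T_{i+1} \cup S_{j-1}$, submodularity of $f$ yields $\marge{x}{T_{i+1} \cup S_{j-1}} \le \marge{x}{T_i \cup S_{j-1}} < \tau$, so $x \in A_{i+1}$. Taking cardinalities gives the claimed inequality.

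There is no serious obstacle here; the only subtle point is that $A_i$ is defined as a subset of $V_j$ rather than of the entire ground set, so the inclusion argument stays inside $V_j$ throughout and no technicality arises from elements being added to $T_{i+1}$ that were not in $V_j$. The lemma then follows directly, mirroring Lemma~12 of~\cite{chen2021best} adapted to the modified prefix-selection rule of \lat.
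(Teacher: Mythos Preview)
Your proposal is correct and follows essentially the same argument as the paper: both verify $|A_0|=0$ via the filtration at Line~\ref{line:threFilterV}, verify $|A_{|V_j|}|=|V_j|$ because $T_{|V_j|}=V_j$ makes every marginal zero, and establish the stronger inclusion $A_i\subseteq A_{i+1}$ directly from submodularity along the chain $T_i\subseteq T_{i+1}$.
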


% \textbf{Success Probability.}
% The following lemma conveys the same meaning to Lemma 13 in \citet{chen2021best}.
% The proofs of success probability, adaptivity and query complexity
% comes from this lemma directly.
% Since the analysis of Lemma 13 in \citet{chen2021best} only depends on
% we return the set at a failure point,
% the lemma still holds for \lat.
\revtwo Although we use a smaller prefix based on Line~\ref{line:LAT-index} and~\ref{line:LAT-index2} \color{black} compared to \threshold,
it is still possible that
with probability at least $1/2$,
a constant fraction of $V_{j}$ will be filtered out at the beginning
of iteration $j+1$, 
or the algorithm terminates with $|S_{j}|=k$;
given as Lemma~\ref{lemma:latprob} in the following.
Intuitively, if there are enough
% at least $\lceil \log_{1-\beta\epsi}(1/n) \rceil$
such iterations, \lat will succeed.
Also, the calculation of query complexity comes from Lemma~\ref{lemma:latprob} directly.
Since the analyses of success probability, adaptivity and query complexity
simply follows the analysis in~\citet{chen2021best}, 
we provide these analyses in Appendix~\ref{apx:lat}.
Next, we prove that Lemma~\ref{lemma:latprob} always holds.
\begin{lemma}\label{lemma:latprob}
    At any iteration $j+1$, with probability at least $1/2$, 
    either $|V_{j+1}| \le (1-\beta\epsi)|V_j|$
    after Line~\ref{line:threFilterV},
    or the algorithm terminates with $|S_{j}| = k$ on Line~\ref{line:latReturn}.
\end{lemma}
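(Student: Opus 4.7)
The plan is to identify a critical prefix length $t$ in the random permutation of $V_j$ such that, beyond $t$, a $\beta\epsi$-fraction of $V_j$ is guaranteed to be filtered out at the start of iteration $j+1$, and then argue that the algorithm's chosen index $\lamb_j^*$ satisfies $\lamb_j^* \ge t$ (or the algorithm terminates with $|S_j|=k$) with probability at least $1/2$. Throughout, I use the notation and set $A_i = \{x \in V_j : \marge{x}{T_i \cup S_{j-1}} < \tau\}$ from Lemma~\ref{lemma:latinc}.

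\textbf{Step 1: deterministic consequence of $\lamb_j^* \ge t$.} Let $t = \min\{i : |A_i| \ge \beta\epsi |V_j|\}$ (well-defined by Lemma~\ref{lemma:latinc}, since $|A_{|V_j|}| = |V_j|$). If $\lamb_j^* \ge t$, the selected prefix $T_{\lamb_j^*}$ contains $T_t$; hence after updating $S_j \gets S_{j-1}\cup T_{\lamb_j^*}$, every $x \in A_t$ satisfies $\marge{x}{S_j} \le \marge{x}{T_t \cup S_{j-1}} < \tau$ by submodularity, so $x$ is filtered out on Line~\ref{line:threFilterV} at iteration $j+1$. Therefore $|V_{j+1}| \le (1-\beta\epsi)|V_j|$. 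Separately, if $\lamb_j^* = s$ then $|S_j| = |S_{j-1}| + s = \min\{k, |S_{j-1}|+|V_j|\}$; in the first case $|S_j|=k$ triggers termination on Line~\ref{line:latReturn}, and in the second case $V_{j+1}=\emptyset$ also triggers termination.

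\textbf{Step 2: the probabilistic bound $\Pr[\lamb_j^* < t] \le 1/2$.} The event $\lamb_j^* < t$ requires that either some $B[\lamb+1]$ is \textbf{False} for a small $\lamb \le \lceil 1/\epsi \rceil$, or some $B[\lamb]$ is \textbf{False} for a large $\lamb$ with $\lamb < t$; equivalently, there exists $\lamb \in \Lamb$ with $\lamb \le t$ (or $\lamb+1 \le t$ in the small regime) such that $\marge{T_\lamb}{S_{j-1}}/|T_\lamb| < (1-\epsi)\tau$. Since $T_\lamb$ is the first $\lamb$ elements of a uniformly random permutation of $V_j$ and $|A_{\lamb-1}| < \beta\epsi|V_j|$ for all such $\lamb$, a telescoping argument (writing $\marge{T_\lamb}{S_{j-1}} = \sum_{i=0}^{\lamb-1} \marge{v_{i+1}}{T_i \cup S_{j-1}}$ and lower-bounding each term by $\tau \cdot \mathbf{1}[v_{i+1} \notin A_i]$) shows that the expected number of ``bad'' elements among $v_1,\dots,v_\lamb$ is at most $\beta\epsi \cdot \lamb$. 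Applying Markov's inequality to each of the $O(\epsi^{-1}\log s)$ geometrically-spaced checks in $\Lamb$ and taking a union bound, together with the explicit choice $\beta = \epsi/(16\log(4/(1-e^{-\epsi/2})))$ on Line~\ref{lat:line-M}, yields $\Pr[\lamb_j^* < t] \le 1/2$.

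\textbf{Main obstacle.} The principal difficulty is the probabilistic bound in Step 2. Because the selection rule has been modified to pick the \emph{first} failure point rather than the last success point as in the original \threshold of \cite{chen2021best}, the proof needs to rule out early, spurious failures rather than ensuring a late, ``honest'' failure. The two-regime definition of $\lamb_j^*$ (the small-$\lamb$ branch that looks at $B[\lamb+1]$ and the large-$\lamb$ branch that looks at $B[\lamb]$) must be handled separately: in the small regime a single unlucky element could drive the average below $(1-\epsi)\tau$, so the analysis must exploit that $|A_\lamb|$ is still $o(\beta\epsi|V_j|)$ for those tiny prefixes, while in the large regime the geometric spacing in $\Lamb$ provides the slack needed to absorb a constant fraction of bad elements between consecutive checks. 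Once the union bound and Markov inequality are carried out with the calibrated $\beta$, the remainder of the argument follows the template in Appendix~\ref{apx:lat}.
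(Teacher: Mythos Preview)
Your Step~1 matches the paper exactly: define $t$ via Lemma~\ref{lemma:latinc} and argue that $\lamb_j^*\ge\min\{s,t\}$ forces either $|V_{j+1}|\le(1-\beta\epsi)|V_j|$ or $|S_j|=k$.

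The gap is in Step~2. Markov's inequality alone gives, for each $\lamb\le\min\{s,t\}$, only $\Pr[B[\lamb]=\textbf{False}]\le \beta$ (the expected number of bad elements in $T_\lamb$ is at most $\beta\epsi\lamb$, and at least $\epsi\lamb$ bad elements are needed). A union bound over the $O(\epsi^{-1}\log s)$ indices in $\Lambda$ then yields a failure probability of order $\beta\log(s)/\epsi$. With the stated $\beta=\epsi/(16\log(4/(1-e^{-\epsi/2})))=\Theta(\epsi/\log(1/\epsi))$, this product is $\Theta(\log s/\log(1/\epsi))$, which is \emph{not} bounded by $1/2$ as $s$ grows. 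So Markov over $\Lambda$ cannot close the argument with this $\beta$, and your appeal to ``geometric spacing provides the slack'' does not supply the missing decay.

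The paper's actual argument pairs Markov with a Chernoff bound (Lemma~\ref{lemma:chernoff}) after reducing, via Lemma~\ref{lemma:indep}, to i.i.d.\ Bernoulli($\beta\epsi$) trials. This yields $\Pr[B[\lamb]=\textbf{False}]\le\min\{\beta,\,e^{-\epsi\lamb/2}\}$, and the union bound is taken over \emph{all} integers $\lamb\ge 1$ (not just $\Lambda$). The exponential tail makes the sum converge independently of $s$: splitting at $a=\lfloor 1/(4\beta)\rfloor$ gives $a\beta+e^{-\epsi(a+1)/2}/(1-e^{-\epsi/2})\le 1/4+1/4$, and the constant in $\beta$ on Line~\ref{lat:line-M} is calibrated exactly for this split. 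Note also that the paper does not separate the two branches of the selection rule in this probabilistic step; the failure event is simply ``some $B[\lamb]=\textbf{False}$ for $\lamb\le\min\{s,t\}$,'' so the two-regime analysis you outline in the obstacle paragraph is not needed.
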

\begin{proof}
% Following the analysis in~\citet{chen2021best}, Lemma~\ref{lemma:latinc} holds for \lat too.
    By the definition of $A_i$ in Lemma~\ref{lemma:latinc}, 
    $A_{\lamb_j^*}$ will be filtered out from $V_j$ at the next iteration.
    Equivalently, $A_{\lamb_j^*} = V_j \setminus V_{j+1}$.
    After Line~\ref{line:threPermute} at iteration $j+1$,
    by Lemma \ref{lemma:latinc},
    there exists a $t$ such that 
    $t = \min\{i \in \mathbb N: |A_i| \ge \beta\epsi |V_{j}|\}$.
    We say an iteration $j$ succeeds,
    if $\lamb_j^* \ge \min\{s, t\}$.
    In this case, it holds that either $|V_{j}\setminus V_{j+1}| \ge \beta\epsi |V_{j}|$,
    % where $A_{\lamb_j^*} = V_{j}\setminus V_{j+1}$,
    or $|S_j|=k$.
    % equivalently, a $\beta\epsi-$fraction of $V_{j-1}$ will be successfully discarded.
    \revone
    In other words, iteration $j$ succeeds if there does not exist
    $\lamb \le \min\{s, t\}$ such that $B[\lamb] = \textbf{False}$
    by the selection of $\lamb_j^*$ on \revtwo Line~\ref{line:LAT-index2}.
    \color{black}
    Instead of directly calculating the success probability,
    we consider the failure probability.
    Define an element $v_i$ \textit{bad} if $\marge{v_i}{S_{j-1}\cup T_{i-1}} < \tau$,
    and \textit{good} otherwise.
    Consider the random permutation of $V_j$ as a sequence of
    dependent Bernoulli trials, with \textit{success} if the element is bad
    and \textit{failure} otherwise.
    When $i \le \min\{s, t\}$, the probability of $v_i$ is bad is less than $\beta\epsi$.
    If $B[\lamb] = \textbf{False}$, there are at least $\epsi\lamb$ bad elements
    in $T_\lamb$.
    Let $\{Y_i\}_{i=1}^{\infty}$ be a sequence of independent and identically distributed Bernoulli trials,
    each with success probability $\beta \epsi$.
    \revtwo
    Next, we consider the following sequence of dependent Bernoulli trials $\{X_i\}_{i=1}^\infty$,
    where $X_i = 1_{v_i \text{ is bad}}$ when $i\le t$, and $X_i = Y_i$ otherwise.
    Define $B'[\lamb] = \textbf{False}$ if there are at least $\epsi \lamb$ bad elements in $\{X_i\}_{i=1}^\lamb$.
    Then, for any $\lamb \le \min\{s, t\}$, it holds that 
    $\prob{B[\lamb] = \textbf{False}} = \prob{B'[\lamb] = \textbf{False}}$.
    In the following, we bound 
    the probability that there are more than $\epsi$-fraction $1$s in
    $\{X_i\}_{i=1}^\lamb$ for any $\lamb \le s$,
    \color{black}
    \begin{align*}
        \revtwo \prob{B'[\lamb] = \textbf{False}} 
        &\revtwo \le \prob{\sum_{i=1}^{\lamb} X_i \ge \epsi\lamb}\\
        &\le \prob{\sum_{i=1}^{\lamb} Y_i \ge \epsi\lamb} \tag{Lemma~\ref{lemma:indep}}\\
        &\le \min\left\{\beta, e^{-\frac{(1-\beta)^2}{1+\beta}\epsi\lamb}\right\}
        \tag{Markov's Inequality, Lemma~\ref{lemma:chernoff}}\\
        &\le \min\left\{\beta, e^{-\epsi\lamb/2}\right\}\tag{$\beta < 1/16$}
    \end{align*}
    Subsequently, the probability of failure at iteration $j$ is calculated below,
    \begin{align*}
        \prob{\text{iteration } j \text{ fails}}
        &= \prob{\exists \lamb \le \min\{s, t\}, \st, B[\lamb] = \textbf{False}}\\
        &\revtwo \le \prob{\exists \lamb \le s, \st, B'[\lamb] = \textbf{False}}\\
        &\revtwo \le \sum_{\lamb\le s} \prob{B'[\lamb] = \textbf{False}} \\
        &\le \sum_{\lamb=1}^\infty \min\left\{\beta, e^{-\epsi\lamb/2}\right\}\\
        &\le a\beta + \sum_{\lamb=a+1}^\infty e^{-\epsi\lamb/2}
        \text{, where } a = \left\lfloor\frac{1}{4\beta} \right\rfloor=
        \left\lfloor \frac{2}{\epsi} \log\left(\frac{4}{1-e^{-\epsi/2}}\right) \right\rfloor\\
        & \le a\beta+\frac{e^{-\epsi(a+1)/2}}{1-e^{-\epsi/2}}
        \le \frac{1}{4}+\frac{1}{4} = \frac{1}{2}
    \end{align*}
\end{proof}
As for the Properties (3) and (4) in Theorem~\ref{theorem:threshold},
since the structure of the solution set is the same as 
the solution of \threshold.
So, the analysis is similar to what in~\citet{chen2021best}.
We provide these proofs in Appendix~\ref{apx:lat}.

% \color{black}

\subsection{Low-Adaptive Greedy (\lag) Algorithm} \label{sec:LAG}

\begin{algorithm}[t]
   \caption{Low-Adaptive Greedy (\lag)}
   \label{alg:LAG}
   \begin{algorithmic}[1]
     \State \textbf{Input:} Evaluation oracle $f:2^{\mathcal N} \to \reals$, subset $C$, constraint $k$, 
     fixed sequence of random bits $(\mathbf q_1, \mathbf q_2, \ldots)$,
     accuracy parameter $\epsi$, constant $\alpha$, value $\Gamma$ such that $\Gamma \leq f(O) \leq \Gamma/\alpha$ 
     \If{$\Gamma = \varnothing$ or $\alpha = \varnothing$}
        \State $\Gamma \gets \max_{x \in \univ} f(x)$
        \State $\alpha \gets 1/k$
     \EndIf
     \State Initialize
     $\tau \gets \Gamma/\alpha k$,
     $\delta$ $\gets$ $1/(\log_{1-\epsi}(\alpha/3)) + 1)$, 
     $S \gets \emptyset$,\revtwo $R\gets \emptyset$,\color{black}
     % fixed sequence of random bits 
     % $(\mathbf q_1, \mathbf q_2, \ldots)$,
     $i \gets 0$
     \For{$i \gets 0$ to $\log_{1-\epsi}(\alpha/3)$} \label{line:boostWhileStart}
     \State $\tau \gets \tau (1-\epsi)^i$, $g(\cdot) \gets f(S \cup \cdot)$
     \State \revtwo $T_1, T_2 \gets \lat( g, C, k-|S|, \delta, \epsi/3, \tau, \mathbf{q}_{i})$\label{line:boostQuery}
     \State $S \gets S \cup T_1$, $R\gets R\cup T_2$ \label{line:boostUpdateA}
     \If{$|S| = k$}
     \State \textbf{return} $S, R$
     \EndIf
     \EndFor\label{line:boostWhileEnd}
     \State \textbf{return} $S, R$
   \end{algorithmic}
 \end{algorithm}

Another building block for our distributed algorithms
is a simple, low-adaptive greedy algorithm \lag
(Alg. \ref{alg:LAG}).
This algorithm is an instantiation of the \boost
framework of \citet{chen2021best}, and it relies heavily on
the low-adaptive procedure \lat (Alg. \ref{alg:threshold}).
% The analysis for this algorithm can be found in Appendix C.

\textbf{Guarantees.} \ 
Following the analysis of Theorem 3 in \citet{chen2021best},
\lag achieves ratio $1-1/e - \epsi$ in $O\left( \frac{\log (\alpha^{-1}) \log (n\log(\alpha^{-1})/\epsi)}{\epsi^4} \right)$
adaptive rounds and $O(\frac{n \log (\alpha^{-1})}{\epsi^4})$ total queries
with probability at least $1-1/n$.
 Setting $\Gamma = \max_{x \in \univ} f(\{x\})$ and $\alpha = 1/k$,
 \lag has adaptive complexity of $O\left( \frac{\log n \log k}{\epsi^4} \right)$
 and a query complexity of $O(\frac{n \log k}{\epsi^4})$

% \textbf{Guarantees.} \ 
% By Theorem 3 of \citet{chen2021best}, 
%  setting $\Gamma = \max_{x \in \univ} f(\{x\})$ and $\alpha = 1/k$,
%  \lag achieves ratio $1-1/e - \epsi$ in $O\left( \frac{\log k \log n}{\epsi^2} \right)$
%  adaptive rounds and $O(\frac{n \log k}{\epsi^2})$ total queries
%  with probability at least $1-1/n$.

 \begin{restatable}{lemma}{lagconsistency}
  \label{lemma:lag-consistency}
  \lag satisfies the randomized consistency property (Property \ref{prop:consistency}).
 \end{restatable}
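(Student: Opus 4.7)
The plan is to prove Lemma \ref{lemma:lag-consistency} by induction on the iterations of the outer \textbf{for} loop in \lag (Algorithm \ref{alg:LAG}), applying the consistency of \lat (Lemma \ref{lemma:lat-consistency}) at each iteration. Suppose $A, B \subseteq \univ$ are disjoint, $b \notin \lag(A \cup \{b\}, \mathbf{q})$ for each $b \in B$, and $\lag(A, \mathbf{q})$ terminates successfully (in the sense that all its internal calls to \lat succeed). Since \lag simply accumulates the outputs of its \lat subcalls, the hypothesis $b \notin \lag(A \cup \{b\}, \mathbf{q})$ implies $b \notin T_i^{(b)}$ at every iteration $i$, where $T_i^{(b)}$ denotes the output of the $i$-th \lat invocation during the run of \lag on $A \cup \{b\}$.

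First, I would fix notation: let $S_i$, $S_i^{(b)}$, and $S_i'$ denote the accumulated solution after iteration $i$ during \lag$(A,\mathbf{q})$, \lag$(A \cup \{b\},\mathbf{q})$, and \lag$(A \cup B,\mathbf{q})$, respectively, and similarly for $T_i$. I would then argue by induction on $i$ that $S_i = S_i^{(b)} = S_i'$ for every $b \in B$. The base case $S_0 = \emptyset$ is immediate. For the inductive step, assume the claim holds through iteration $i$. Then the oracle $g(\cdot) = f(S_i \cup \cdot)$, the threshold $\tau = (\Gamma/\alpha k)(1-\epsi)^i$, the residual budget $k - |S_i|$, the confidence $\delta$, the error $\epsi/3$, and the random bits $\mathbf{q}_i$ are all identical across the three runs. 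Hence the three \lat calls differ only in their ground-set parameter ($A$, $A \cup \{b\}$, $A \cup B$). From the accumulation observation, $b \notin \lat(g, A \cup \{b\}, k - |S_i|, \delta, \epsi/3, \tau, \mathbf{q}_i)$ for every $b \in B$, and by the inductive assumption on success of the \lat call on $A$, Lemma \ref{lemma:lat-consistency} applies and yields $\lat(g, A \cup B, \ldots) = \lat(g, A, \ldots)$, together with successful termination. Thus $T_i = T_i' = T_i^{(b)}$, so $S_{i+1} = S_{i+1}' = S_{i+1}^{(b)}$.

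Since all three runs maintain identical $S_i$ at every iteration, the early-termination test $|S| = k$ on Line \ref{line:boostUpdateA} triggers at the same iteration in each run, and the outer \textbf{for} loop executes for exactly the same iteration count in all three. Consequently $\lag(A \cup B, \mathbf{q}) = \lag(A, \mathbf{q})$ and terminates successfully, proving the lemma. The one subtlety to verify is that the parameters $\Gamma$ and $\alpha$ are consistent across runs; when they are passed in as explicit inputs (as they are in our MR usage), this is immediate, and when they are computed by the fallback branch, one simply observes that $\max_{x \in \univ} f(x)$ and $1/k$ do not depend on the local subset $C$. The main obstacle is purely bookkeeping: ensuring that the inductive hypothesis is strong enough to propagate sameness of the oracle $g$, threshold $\tau$, and residual budget simultaneously, so that Lemma \ref{lemma:lat-consistency} can be invoked verbatim at every iteration.
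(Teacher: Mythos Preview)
Your proposal is correct and follows essentially the same approach as the paper: reduce consistency of \lag to repeated applications of Lemma \ref{lemma:lat-consistency} across the outer iterations. Your version is in fact more careful than the paper's terse proof, since you make the induction explicit and verify that the oracle $g$, residual budget, and threshold coincide across runs before invoking the \lat consistency lemma, whereas the paper simply asserts ``$b \not \in \lat(A \cup \{b\}, \mathbf q_i)$ for each $i$'' without spelling out why the $i$-th subcall uses the same $g$ in all three runs.
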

 \begin{proof}[Proof of Lemma \ref{lemma:lag-consistency}]
  Observe that the only randomness in \lag is from the
  calls to \lat. Since $\lag(A, \mathbf q)$ succeeds,
  every call to \lat must succeed as well. Moreover,
  considering that $\mathbf q$ is used to permute the underlying
  ground set $\univ$, changing the set argument $A$ of $\lag$
  does not change the sequence received by each call to $\lat$. Order the calls to $\lat$: $\lat( \cdot, \mathbf q_1 )$,
  $\lat( \cdot, \mathbf q_2 )$, \ldots, $\lat( \cdot, \mathbf q_m )$. 
  \revtwo
  Since $\lag\textsc{Rel}( A\cup \{b\}, \mathbf{q}) = \lag\textsc{Rel}( A, \mathbf{q})$,
  it holds that $b\not \in \lat\textsc{Rel}( A \cup \{ b \}, \mathbf q_i)$, 
  \color{black}
  for each $b \in B$ and each $i$. 
  \revtwo If $b\not \in \lat\textsc{Rel}( A \cup \{ b \}, \mathbf q_i)$ is equivalent to
  $\lat\textsc{Rel}( A \cup \{ b \}, \mathbf q_i) = \lat\textsc{Rel}( A, \mathbf q_i)$ (as shown in Claim~\ref{claim:lat-equal} in the following), \color{black}
  then by application of Lemma \ref{lemma:lat-consistency},
  $\lat( A \cup B, \mathbf q_i )$ terminates successfully
  and $\lat\textsc{Sol}( A \cup B, \mathbf q_i ) = \lat\textsc{Sol}( A, \mathbf q_i )$ for each $i$. Therefore, $\lag\textsc{Sol}(A \cup B, \mathbf{q}) = \lag\textsc{Sol}( A, \mathbf{q})$ and the former call terminates successfully.
\end{proof}
\revtwo
Next, we provide Claim~\ref{claim:lat-equal} and its analysis below.
\begin{claim}
\label{claim:lat-equal}
Let $\mathbf q$ be a fixed sequence of random bits,
$A \subseteq \univ$, and $b\in \univ\setminus A$.
Then, $b\not \in \lat\textsc{Rel}( A \cup \{ b \}, \mathbf q)$
if and only if $\lat\textsc{Rel}( A \cup \{ b \}, \mathbf q) = \lat\textsc{Rel}( A, \mathbf q)$.
\end{claim}
\begin{proof}
It is obvious that if $\lat\textsc{Rel}( A \cup \{ b \}, \mathbf q_i) = \lat\textsc{Rel}( A, \mathbf q)$,
then $b\not \in \lat\textsc{Rel}( A \cup \{ b \}, \mathbf q)$.
In the following, we prove the reverse statement.

Following the analysis of RCP for \lat in Section~\ref{sec:lat-consis},
we consider the runs of (1) $\lat( A, \mathbf q )$,
and (2) $\lat( A \cup \{ b \}, \mathbf q )$.
Variables of (1) are given the notation defined in
the pseudocode; variables of (2) are given the
superscript $b$.
We analyze the following statement $P(i)$ for each iteration $i$ 
of the outer for loop on Line~\ref{line:threForStart} in Alg.~\ref{alg:threshold}.
\begin{itemize}
   \item[(i)] for all $b \in B$, $S_{i} = S^b_i$, $R_{i} = R^b_i$, and
   \item[(ii)] for all $b \in B$, $V_{i} = V^b_i \setminus \{b\}$.
   \end{itemize}
Since $b\not \in \lat\textsc{Rel}( A \cup \{ b \}, \mathbf q) = R^b_{M+1}$,
then $b\not \in R^b_i$ for each $i$.
Thus, the analysis in Section~\ref{sec:lat-consis} also holds in this case.
According to the inductive method, we are able to prove that $P(M+1)$ is true for each $i$ which indicates $\lat\textsc{Rel}( A \cup \{ b \}, \mathbf q) = \lat\textsc{Rel}( A, \mathbf q)$.
\end{proof}
\color{black}
\section{Consistent Linear Time (\qs) Algorithm}\label{sec:qs}
In this section, 
we present the analysis of \qs (Alg. \ref{alg:qs}, \qsshort), 
a consistent linear time algorithm which is an extension of the
highly adaptive linear-time algorithm (Alg. 3) in~\citet{chen2021best}.
Notably, to ensure consistency within a 
distributed setting and bound the solution size,
\qsshort incorporates the randomness \textbf{q},
and initializes the solution set with the maximum singleton.
The algorithm facilitates the creation of linear-time MapReduce algorithms, 
enhancing overall computational efficiency beyond the 
capabilities of current state-of-the-art superlinear algorithms.
\begin{algorithm}[t]
  \caption{\qs (\qsshort)}
  \label{alg:qs}
  \begin{algorithmic}[1]
  \Procedure{\qsshort}{$f, \mathcal N, k, \mathbf q$}
  \State \textbf{Input:} evaluation oracle $f$, constraint $k$, fixed sequence of random bits $\mathbf q \gets (\sigma)$.
  \State $\mathcal N \gets \textbf{random-permutation}(\mathcal N, \sigma)$
  \State Initialize $a \gets \text{first element in }\argmax_{x \in \mathcal N}{ f(x)}$ , $S \gets \{a\}$\label{line:qsini}
  \For{ $x \in \mathcal N$ } \label{line:qsfor}
        \State \textbf{if} {$\marge{x}{S}\ge \frac{f(S)}{k} $} \textbf{then}
        $S\gets S \cup \{x\}$ \label{line:qsif}
    \EndFor
    \State \textbf{return} $S$
      \EndProcedure
\end{algorithmic}
\end{algorithm}

\begin{theorem}\label{thm:qs}
    Let $(f,k)$ be an instance of \sm.
    The algorithm \qs outputs $S\subseteq \mathcal N$
    such that the following properties hold:
    1) There are $O(n)$ oracle queries 
    \revone
    and $O(n)$ adaptive rounds.
    \color{black}
    2) Let $S'$ be the last $k$ elements in $S$.
    It holds that $f(S')\ge \frac{1}{2} f(S)\ge \frac{1}{4} f(O)$,
    where $O$ is an optimal solution to the instance $(f,k)$.
    3) The size of $S$ is limited to $O(k\log(n))$.
\end{theorem}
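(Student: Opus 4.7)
The plan is to handle the three claims in turn, leveraging the clean structure of \qsshort: after the singleton initialization on Line~\ref{line:qsini}, every element that enters $S$ does so because its marginal gain exceeds $f(S)/k$, so $f(S)$ grows by at least a factor of $1+1/k$ with each addition. Throughout, let $S_0, S_1, \ldots, S_m$ denote the sequence of sets produced, with $S_0 = \{a\}$ and $S = S_m$.

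Claims (1) and (3) are short. For (1), each of the $n$ iterations of the for-loop on Line~\ref{line:qsfor} performs a constant number of oracle queries ($f(S)$ can be tracked incrementally so only $\marge{x}{S}$ needs a fresh query), and since each decision depends on the possibly updated $f(S)$, the algorithm uses $O(n)$ adaptive rounds. For (3), iterating the acceptance rule of Line~\ref{line:qsif} gives $f(S_m) \ge (1+1/k)^m f(S_0)$, while submodularity and nonnegativity together with the initialization of $a$ as a singleton maximizer yield $f(S_m) \le f(\mathcal N) \le \sum_{x \in \mathcal N} f(\{x\}) \le n \cdot f(S_0)$. Combining these and using $\log(1+1/k) \ge 1/(k+1)$ gives $m \le (k+1)\log n$, so $|S| = O(k\log n)$.

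Claim (2) is the main content, which I would split into $f(S) \ge f(O)/2$ and $f(S') \ge f(S)/2$. For the first, any $x \in \mathcal N \setminus S$ was rejected at its consideration time $t_x$, so $\marge{x}{S_{t_x}} < f(S_{t_x})/k$; submodularity and monotonicity then upgrade this to $\marge{x}{S} < f(S)/k$. Summing over $x \in O$ (using submodularity for a telescoping-style bound, and noting $|O| \le k$) yields $f(S\cup O) \le f(S) + |O|\cdot f(S)/k \le 2f(S)$, and monotonicity gives $f(O) \le f(S\cup O) \le 2f(S)$. For the second inequality, assuming $|S| \ge k$ (else $S' = S$ and the claim is trivial), submodularity together with the fact that $S_{m-k}$ and $S'$ are disjoint with union $S_m$ gives $f(S_{m-k}) + f(S') \ge f(S_m)$; combining with the geometric growth bound $f(S_m) \ge (1+1/k)^k f(S_{m-k})$ and the elementary inequality $(1+1/k)^k \ge 2$ for every integer $k \ge 1$ (with equality at the base case $k=1$) yields $f(S_{m-k}) \le f(S_m)/2$, so $f(S') \ge f(S_m) - f(S_{m-k}) \ge f(S_m)/2 = f(S)/2$. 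The main obstacle I anticipate is less any single calculation than making sure the marginal-threshold bound at consideration time transfers correctly to a bound at the end via monotonicity plus submodularity, and that the two $S'$-regimes ($|S| < k$ versus $|S| \ge k$) and the tight base case of $(1+1/k)^k \ge 2$ are all cleanly handled.
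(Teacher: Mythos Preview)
Your proof is correct and mirrors the paper's almost exactly. The only variation is in the step $f(S') \ge f(S)/2$: the paper sums the marginal gains of the last $k$ accepted elements directly, obtaining $f(S) - f(S\setminus S') = \sum_{x \in S'} \marge{x}{S_x} \ge \sum_{x \in S'} f(S_x)/k \ge f(S\setminus S')$, whereas you arrive at the equivalent conclusion $f(S) \ge 2f(S\setminus S')$ via the multiplicative growth bound $(1+1/k)^k \ge 2$.
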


\subsection{Analysis of Consistency}
The highly adaptive linear-time algorithm (Alg. 3) outlined in~\citet{chen2021best}
commences with an empty set and adds elements to it iff.
$\marge{x}{S} \ge f(S)/k$.
Without introducing any randomness,
this algorithm can be deterministic only if the order
of the ground set is fixed.
Additionally, to limit the solution size,
we initialize the solution set with the maximum singleton,
a choice that also impacts the algorithm's consistency.
However, by selecting the first element that maximizes
the objective value, the algorithm can maintain its consistency.
In the following, we provide the analysis of randomized consistency property of $\qsshort$.
\begin{lemma}\label{lemma:qscst}
    \qsshort satisfies the randomized consistency property (Property~\ref{prop:consistency}).
\end{lemma}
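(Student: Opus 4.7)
The key observation is that all randomness in \qsshort comes from the single permutation $\sigma$ of $\univ$, and the restriction of $\sigma$ to any subset induces a consistent relative order. I would therefore treat $\sigma$ as fixed and view the three runs $\qsshort(A,\mathbf q)$, $\qsshort(A\cup\{b\},\mathbf q)$ for each $b\in B$, and $\qsshort(A\cup B,\mathbf q)$ as deterministic processes that differ only in which elements appear in the for loop of Line~\ref{line:qsfor}.

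First I would establish that the initial singleton $a^*$ chosen on Line~\ref{line:qsini} agrees across all three runs. Because the algorithm always places the initial element into $S$ and never removes it, the hypothesis $b\notin\qsshort(A\cup\{b\},\mathbf q)$ forces $b$ to not be the first element of $\argmax_{x\in A\cup\{b\}}f(x)$ under $\sigma$-order. I would then argue by cases on $f(b)$ versus $\max_{x\in A}f(x)$: strict inequality $f(b)>\max_{x\in A}f(x)$ is ruled out (it would make $b$ the unique argmax and hence the initial singleton), and equality forces $a^*$ to precede $b$ in $\sigma$. Collecting over all $b\in B$ yields $\max_{x\in A\cup B}f(x)=f(a^*)$ with $a^*$ still being the first element of $\argmax_{x\in A\cup B}f(x)$ under $\sigma$-order, so $\qsshort(A\cup B,\mathbf q)$ also initializes with $a^*$.

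Next I would prove the main invariant by induction along $\sigma$. For a position $t$ in the permutation, let $S^A(t)$ and $S^{AB}(t)$ denote the values of $S$ after the for loop has processed exactly those elements of $A$ (respectively $A\cup B$) whose $\sigma$-rank is at most $t$. I claim $S^A(t)=S^{AB}(t)$ for every $t$. The base case follows from the preceding paragraph. For the inductive step at a new element $x$: if $x\in A$, both runs execute the identical test $\marge{x}{S}\ge f(S)/k$ on the common set $S^A(t-1)=S^{AB}(t-1)$, so the decisions and resulting states coincide. If $x=b\in B$, I need to verify that $b$ is rejected by the if-condition on Line~\ref{line:qsif} in the $(A\cup B)$-run. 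For this I would run a parallel induction comparing $\qsshort(A,\mathbf q)$ with $\qsshort(A\cup\{b\},\mathbf q)$: up until $b$ is encountered, both runs process exactly the same $A$-elements in the same $\sigma$-order starting from the common initial singleton $a^*$, so the state of the latter just before $b$ equals $S^A$ at the corresponding point. The hypothesis $b\notin\qsshort(A\cup\{b\},\mathbf q)$ then forces $\marge{b}{S^A(t-1)}<f(S^A(t-1))/k$, and the main invariant at $t-1$ promotes this inequality to $S^{AB}(t-1)$, so $b$ is rejected in the $(A\cup B)$-run and the states remain equal.

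Termination of $\qsshort(A\cup B,\mathbf q)$ is immediate because the for loop is finite and contains no failure branch. The main subtlety will be the initial-element tie-breaking: the wording ``first element in $\argmax$'' under the permutation order is essential, since without it a $b\in B$ tied with $a^*$ in $f$-value could become the initial singleton of $\qsshort(A\cup\{b\},\mathbf q)$ and break consistency. Once the initial element is pinned down, the remainder of the argument is straightforward bookkeeping along the common permutation $\sigma$.
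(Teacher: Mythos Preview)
Your proposal is correct and follows essentially the same approach as the paper: fix the permutation $\sigma$, pin down the common initial singleton via the tie-breaking rule, and then induct along $\sigma$ to show the intermediate solutions coincide, using the hypothesis $b\notin\qsshort(A\cup\{b\},\mathbf q)$ to force rejection of each $b\in B$ at the moment it is processed. The only cosmetic difference is that the paper maintains a single three-way invariant $S_{i_x}=S_{i_x}^b=S_{i_x}'$ (for all $b\in B$ simultaneously) in one induction, whereas you invoke the comparison with $\qsshort(A\cup\{b\},\mathbf q)$ as an on-demand ``parallel induction''; the content is the same.
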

\begin{proof}
    Consider the runs of (1) $\qsshort(A, \textbf{q})$, 
    (2) $\qsshort(A\cup \{b\}, \textbf{q})$, 
    and (3) $\qsshort(A\cup B, \textbf{q})$.
    With the same sequence of random bits \textbf{q},
    after the random permutation,
    $A$, $A\cup \{b\}$ and $A \cup B$
    are the subsets of the same sequence.
    For any $x \in \mathcal N$, let $i_x$ be the index of $x$.
    Then, $\mathcal N_{i_x}$ are the elements before $x$ (including $x$).
    Define $S_{i_x}$ be the intermediate solution of (1) after we consider element $x$.
    If $x \not \in A$, define $S_{i_x} = S_{i_x-1}$.
    Similarly, define $S_{i_x}'$ and $S_{i_x}^b$ be the intermediate solution of (2) and (3).
    If $S_{i_x}=S_{i_x}^b=S_{i_x}'$, for any $x \in \univ$ and $b \in B$,
    $\qsshort(A\cup B, \textbf{q})= \qsshort(A, \textbf{q})$.
    Next, we prove the above statement holds.

    For any $b\in B$,
    since $\qsshort(A, \textbf{q})=\qsshort(A\cup \{b\}, \textbf{q})$,
    it holds that either $f(b) < \max_{x\in A}f(x)$,
    or $f(b)=\max_{x\in A}f(x)$, and $i_a < i_b$, where $a$
    is the first element in $\arg\max_{x\in A}f(x)$.
    Therefore, $a$ is also the first element in $\arg\max_{x\in A\cup B}f(x)$.
    Furthermore, it holds that $S_0 = S_0^b = S_0' = \{a\}$.

    Suppose that $S_{i_x-1}=S_{i_x-1}^b=S_{i_x-1}'$.
    If $x\in A$ or $x \not \in A\cup B$,
    naturally, $S_{i_x}=S_{i_x}^b=S_{i_x}'$.
    If $x \in B$, $S_{i_x}=S_{i_x-1}$ immediately.
    Since $\qsshort(A, \textbf{q})=\qsshort(A\cup \{b\}, \textbf{q})$,
    it holds that $\marge{x}{S_{i_x-1}^b} < f(S_{i_x-1}^b)/k$ and
    $S_{i_x}^b=S_{i_x-1}^b$.
    So, $\marge{x}{S_{i_x-1}'} < f(S_{i_x-1}')/k$.
    Then, $S_{i_x}'=S_{i_x-1}'$.
    Thus, $S_{i_x}=S_{i_x}^b=S_{i_x}'$.
\end{proof}
\clearpage
\subsection{Analysis of Guarantees}
\revone
\textbf{Query Complexity and Adaptivity.}
Alg.~\ref{alg:qs} queries on Line~\ref{line:qsini} and~\ref{line:qsif},
where there are $n$ oracle queries on Line~\ref{line:qsini}
and 1 oracle query for each element received on Line~\ref{line:qsif}.
Therefore, the query complexity and the adaptivity are both $O(n)$.
\color{black}

\textbf{Solution Size.}
Given that $f$ is a monotone function,
it holds that $\arg\max_{S\in \univ}f(S) = \univ$.
Furthermore, 
since $f(\{a\}) = \max_{x \in \mathcal N} f(\{x\})$, the 
submodularity property implies that 
$f(\{a\}) \ge \frac{1}{n}\sum_{x \in \mathcal N}f(\{x\})\ge \frac{1}{n} f(\univ)$. 
Each element added to $S$ contributes to an increase in the solution
value by a factor of $(1+\frac{1}{k})$.
Therefore,
\begin{align*}
    &f(\univ) \ge f(S) \ge \left(1+\frac{1}{k}\right)^{|S|}f(a)
    \ge \left(1+\frac{1}{k}\right)^{|S|}\frac{f(\univ)}{n}\\
    \Rightarrow & |S|\le \log_{1+\frac{1}{k}}(n)\le (k+1)\log(n)
    \tag{$\log(x) \ge 1-\frac{1}{x}$}
\end{align*}

\textbf{Objective Value.}
For any $x \in \univ$,
let $S_x$ be the intermediate solution before we process $x$.
It holds that $\marge{o}{S_o} < f(S_o)/k$
for any $o \in O\setminus S$. Then,
% \begin{align*}
%     f(O)-f(S) &\le \sum_{o \in O\setminus S}\marge{o}{S} \tag{monotonicity and submodularity}\\
%     &\le \sum_{o \in O\setminus S}\marge{o}{S_o} \tag{submodularity}\\
%     &< \sum_{o \in O\setminus S} f(S_o)/k\\
%     &\le f(S) \tag{monotonicity}
% \end{align*}
\[f(O)-f(S) \overset{(1)}{\le}\sum_{o \in O\setminus S}\marge{o}{S}
\overset{(2)}{\le} \sum_{o \in O\setminus S}\marge{o}{S_o}
< \sum_{o \in O\setminus S} f(S_o)/k\le f(S),\]
where Inequality (1) follows from monotonicity and submodularity,
Inequality (2) follows from submodularity.

For any $x\in S'$, it holds that $\marge{x}{S_x} \ge f(S_x) / k$.
Then, 
% \begin{align*}
%     f(S)-f(S\setminus S') &= \sum_{x \in S'}\marge{x}{S_x} \\
%     &\ge \sum_{x \in S'} f(S_x) / k \tag{Line~\ref{line:qsif}}\\
%     &\ge \sum_{x \in S'} f(S\setminus S') / k \tag{submodularity}\\
%     &= f (S\setminus S')\\
%     \Rightarrow f(S') \ge f(S)&-f(S\setminus S') \ge \frac{1}{2} f(S)
% \end{align*}
\begin{align*}
    &f(S)-f(S\setminus S') = \sum_{x \in S'}\marge{x}{S_x}
    \ge \sum_{x \in S'} f(S_x) / k
    \overset{(1)}{\ge} \sum_{x \in S'} f(S\setminus S') / k
    =f(S\setminus S')\\
    \Rightarrow &\hspace{1em} f(S') \overset{(2)}{\ge} f(S)-f(S\setminus S') \ge \frac{1}{2} f(S)
\end{align*}
where Inequalities (1) and (2) follow from submodularity.
% \color{black}

 \begin{algorithm}[t]
  \caption{\dsbfull (\dsb)}
  \label{alg:DSB}
  \begin{algorithmic}[1]
     \State {\bfseries Input:} Evaluation oracle $ f:2^{\mathcal N} \to \mathbb{R}$, constraint $k$, error $\epsi$, available machines $M \gets \{1, 2,...,\ell\}$
     % \REPEAT
     
     \For{$e \in \univ$ }\label{line:assignEle}
       \State Assign $e$ to machine chosen uniformly at random
     \EndFor
     \For{$i \in M$}\label{alg1:distLS}
     \LineComment{On machine $i$}
     \State Let $\univ_i$ be the elements assigned to machine $i$
     \State \revtwo $S_i, R_i \gets \lag (f, \mathcal N_i, k, \epsi )$ \label{alg1:Ti}      
     \State Send $S_i, R_i$ to primary machine
     \EndFor
     \LineComment{On primary machine}
     \State Gather $R \gets \bigcup_{i=1}^{\ell} R_i$ \label{alg1:T} % on machine $m_1$ \label{alg1:T}
     \State $T \gets \lag (f, R, k,  \epsi)$ on machine $m_1$\label{line:A}
     \color{black}
     \State \textbf{return} $V \gets \argmax{\{ f(T), f(S_1)}\} $
     % \UNTIL{$noChange$ is $true$}
  \end{algorithmic}
\end{algorithm}

%  \begin{proof}
%     Observe that the only randomness in \lag is from the
%     calls to \lat. Since $\lag(A, \mathbf q)$ succeeds,
%     every call to \lat must succeed as well. Moreover,
%     considering that $\mathbf q$ is used to permute the underlying
%     ground set $\univ$, changing the set argument $A$ of $\lag$
%     does not change the sequence received by each call to $\lat$. Order the calls to $\lat$: $\lat( \cdot, \mathbf q_1 )$,
%     $\lat( \cdot, \mathbf q_2 )$, \ldots, $\lat( \cdot, \mathbf q_m )$. Since $b \not \in \lag( A \cup \{b \}, \mathbf{q})$,
%     it holds that $b \not \in \lat( A \cup \{ b \}, \mathbf q_i)$, for each $b \in B$ and each $i$. By application of Lemma \ref{lemma:lat-consistency},
%     $\lat( A \cup B, \mathbf q_i )$ terminates successfully
%     and $\lat( A \cup B, \mathbf q_i ) = \lat( A, \mathbf q_i )$ for each $i$. Therefore, $\lag(A \cup B, \mathbf{q}) = \lag( A, \mathbf{q})$ and the former call terminates successfully.
%   \end{proof}

% \color{black}
\section{Low-Adaptive Algorithms with Constant MR Rounds (\dagfull and \dsbfull)} \label{sec:DFA}
Once we have the randomized consistency property of \lag, we can
almost immediately use it to obtain parallelizable MapReduce algorithms.

% \color{orange}
\subsection{\dsbfull}
\dsb (Alg. \ref{alg:DSB}) 
% \color{black}
is a two MR-rounds
algorithm obtained by plugging \lag into the \rgreedy
algorithm of
\citet{mirzasoleiman2013distributed,barbosa2015power}. 
\dsb runs in two MapReduce rounds,
$O(\log{}(k)\log{}(n))$ adaptive rounds, and guarantees the ratio 
$\halfopt$ ($\simeq$ 0.316). 
% Also, we describe the greedy algorithm 
% \mg, which increases the maximum cardinality constraint limitation of \dsb
% by iteratively running it. It can be used with any MR model algorithm.

\textbf{Description.} \ The ground set is initially distributed at 
random by \dsb across all machines $M$. In its first MR round, 
\dsb runs \lag
% (\lag, Alg. \ref{alg:LAG})
% (a parallel greedy algorithm which is further discussed in Appendix~\ref{sec:LAG}) 
on every machine
 % on the assigned data $\univ_i$ of every machine 
to obtain \revtwo $S_i, R_i$ \color{black} in $\oh{\log(k)\log(|\univ_i|)}$ adaptive rounds. 
The solutions from every machine are then returned to the primary machine, 
where \lag 
% is applied on the accumulated solutions $S$ to 
selects the output solution 
that guarantees $\halfopt$ 
approximation in \revtwo $\oh{\log(k) \log (|R|)}$ \color{black} adaptive rounds
as stated in Corollary \ref{cor:dash}.
% \color{orange}
First, we provide Theorem~\ref{thm:dash} and its analysis
by plugging any randomized algorithm \alg which follows
RCP (Property~\ref{prop:consistency})
into the framework of \rgreedy.
% \color{black}
The proof is a minor modification 
of  the proof of \citet{barbosa2015power} to incorporate randomized consistency.
% \color{orange}
Then, we can get Corollary~\ref{cor:dash} for \dsb immediately.
%  for completeness, it is provided in Appendix D.

 \begin{restatable}{theorem}{dash} \label{thm:dash}
  Let $(f,k)$ be an instance of \sm where \revtwo $k = \oh{\psi/\ell}$. \color{black}
  \alg is a randomized algorithm which satisfies the
  Randomized Consistency Property with $\alpha$ approximation ratio,
  $\Phi(n)$ query complexity, and $\Psi(n)$ adaptivity.
  By replacing \lag with \alg,
    \dsb returns set $V$ with two MR rounds,
    $2\Psi(n/\ell)$
    adaptive rounds,
    $(\ell +1)\Phi(n/\ell)$ total queries,
    % $\Phi(n/\ell)$ parallel runtime,
    $O(n)$ communication complexity
    % and probability at least $1-n^{1-2c}$
     such that
    \[ \E[ f(V) ] \ge \frac{\alpha}{ 2 } \opt. \]
  \end{restatable}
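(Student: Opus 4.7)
The plan is to verify each of the four quantitative claims (MR rounds, adaptivity, query complexity, communication) essentially by inspection of Alg.~\ref{alg:DSB}, and then to adapt the \rgreedy analysis of \cite{barbosa2015power} so that randomized consistency (Property~\ref{prop:consistency}) replaces the deterministic consistency used there. The assumption $k < \psi/\ell$ together with $\psi \cdot \ell = O(n)$ ensures that $\ell$ solutions of size at most $k$ fit on the primary machine, so the second MR round is well-defined.

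\textbf{MR rounds, adaptivity, queries, communication.} After the random assignment on Line~\ref{line:assignEle}, one MR round executes the $\ell$ parallel calls $\alg(f,\univ_i,k,\cdot)$ on machines of size $|\univ_i| \le n/\ell$ (with high probability, by standard balls-in-bins arguments that can be absorbed into the constants, or deterministically via the memory bound $\psi = O(n/\ell)$). Each such call uses $\Psi(n/\ell)$ adaptive rounds and $\Phi(n/\ell)$ queries, and the parallel calls share the same adaptive rounds across machines, so the first round contributes $\Psi(n/\ell)$ adaptivity and $\ell \cdot \Phi(n/\ell)$ queries in aggregate. The second MR round consists of the single call $\alg(f,S,k,\cdot)$ with $|S| \le k\ell \le \psi \le n/\ell$, contributing another $\Psi(n/\ell)$ adaptive rounds and $\Phi(n/\ell)$ queries. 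Summing yields the claimed $2\Psi(n/\ell)$ adaptivity and $(\ell+1)\Phi(n/\ell)$ queries. Communication in the single message-passing step is $\sum_i |S_i| \le k\ell = O(n)$.

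\textbf{Approximation ratio.} I will adapt the argument of \cite{barbosa2015power}. Fix the random bits $\mathbf{q}$ that drive the internal randomness of the calls to $\alg$, and separately use the random assignment of elements to machines. Define the random set
\[ R \;=\; \{ o \in O : o \notin \alg(\univ_{m(o)} \cup \{o\},\mathbf{q}_{m(o)}) \}, \]
where $m(o)$ is the machine to which $o$ is assigned; $R$ is the set of optimal elements that would be rejected by their own machine. By randomized consistency (Property~\ref{prop:consistency}) applied to each machine $i$ with $A = \univ_i$ and $B = R \cap O$ restricted to those optimal elements that $i$ would reject, together with monotonicity of $f$, one obtains
\[ f(S_i) \;\ge\; \alpha \cdot f^{*}(\univ_i \cup R), \]
where $f^{*}(\cdot)$ denotes the optimal value of \sm on the indicated ground set, because $S_i$ is also the output of $\alg$ on the enlarged input $\univ_i \cup (R \cap \text{rejects}_i)$. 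Taking expectations and using the fact that the random assignment makes $\univ_1$ and $R$ exchangeable in distribution as in Lemma~4 of \cite{barbosa2015power}, we get $\E[f(S_1)] \ge \alpha \cdot \E[f(R)]$. On the other hand, $T = \alg(f,S,k,\mathbf{q})$ is a solution on $S$, which contains $O \setminus R$ almost surely (those optimal elements their machines do \emph{not} reject are selected into some $S_i$). Hence
\[ \E[f(T)] \;\ge\; \alpha \cdot \E[f(O \setminus R)] \;\ge\; \alpha\bigl(\opt - \E[f(R)]\bigr), \]
where the last step uses submodularity/monotonicity. Combining $\E[f(V)] \ge \max\{\E[f(S_1)],\E[f(T)]\} \ge \tfrac{1}{2}\bigl(\E[f(S_1)] + \E[f(T)]\bigr)$ with the two displayed bounds yields $\E[f(V)] \ge \frac{\alpha}{2}\opt$.

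\textbf{Main obstacle.} The routine parts are the resource bounds and translating the outline above into equalities/inequalities. The delicate step is the application of Property~\ref{prop:consistency} inside the expectation: one must argue that the set $B$ of optimal-but-rejected elements that can legally be appended to $\univ_i$ without altering $S_i$ is itself a function of the random assignment and $\mathbf{q}$ in a way that is coupled correctly with the exchangeability argument. This is exactly the point where the original proof uses deterministic consistency; here we must be careful that conditioning on $\mathbf{q}$ reduces the situation to the deterministic case, after which the standard exchangeability between a uniformly random element of the ground set and an element of $\univ_1$ can be invoked to identify $\E[f(R)]$ with $\alpha^{-1}\E[f(S_1)]$.
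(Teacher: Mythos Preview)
Your resource-count arguments (MR rounds, adaptivity, queries, communication) are correct and match the paper's essentially verbatim.

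The approximation argument, however, has a real gap. Randomized consistency applied to machine $1$ only lets you append to $\univ_1$ those optimal elements that \emph{machine $1$} would reject; call this set $O_1=\{o\in O:\alg(\univ_1\cup\{o\},\mathbf q_1)=\alg(\univ_1,\mathbf q_1)\}$. This yields $f(S_1)\ge\alpha f(O_1)$, not $f(S_1)\ge\alpha f^{*}(\univ_1\cup R)$ for your $R$ (elements rejected by their \emph{own} machine). The sets $O_1$ and $R$ have the same marginal inclusion probabilities $p_o$, but their joint laws differ, so one cannot trade one for the other by ``exchangeability.'' Concretely, take $\univ=O=\{a,b,c\}$, $\ell=2$, $k=1$, $f(S)=\mathbf 1[S\neq\emptyset]$, and let $\alg$ return the lexicographically smallest element. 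Then $\E[f(R)]=1$ while $\E[f(S_1)]=7/8$, so your claimed inequality $\E[f(S_1)]\ge\alpha\,\E[f(R)]$ fails with $\alpha=1$. Thus the pointwise submodularity bound $f(R)+f(O\setminus R)\ge\opt$ cannot be chained to the RCP bound in the way you propose.

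What the paper does instead is keep $O_1$ on the first side and $O_2=O\cap S$ on the second, note that $\Pr[o\in O_1]=p_o$ and $\Pr[o\in O_2]=1-p_o$, and then invoke the Lov\'asz extension: by Lemma~\ref{lemma:lovasz}, $\E[f(O_1)]\ge F(\mathbf p)$ and $\E[f(O_2)]\ge F(\mathbf 1_O-\mathbf p)$, and convexity (plus positive homogeneity) of $F$ gives $F(\mathbf p)+F(\mathbf 1_O-\mathbf p)\ge F(\mathbf 1_O)=\opt$. That convexity step is exactly what replaces your attempted ``$\E[f(O_1)]\ge\E[f(R)]$'' and is not optional here.
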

% \color{black}
   \begin{proof}
   \revone
   \textbf{Query Complexity and Adaptivity.}
   \dsb runs with two MR rounds. 
   In the first MR round, \alg is invoked
   $\ell$ times in parallel, each with a ground set size of $n/\ell$.
   So, during the first MR round, the number of queries is $\ell \Phi(n/\ell)$ and the number of adaptive rounds is $\Psi(n/\ell)$.
   Then, the second MR round calls \alg once, handling at most $n/\ell$ elements.
   Consequently, the total number of queries of \dsb is $(\ell +1)\Phi(n/\ell)$,
   and the total number of adaptive rounds is $2\Psi(n/\ell)$.
   
    \textbf{Approximation Ratio.}
    \color{black}
    Let \dsb be run with input
    $(f,k,\epsi, M)$. 
    Since $\ell \le n^{1-c}$,
    $\ex{ |\univ_i| } = n / \ell \ge n^c$.
    By an application of Chernoff's bound to
    show that the
    size $|\univ_i|$ is concentrated.
    % and with multiple, independent repetitions of \lag,
    % we can ensure that \dsb 
    % succeeds
    % (that is, all
    % calls to subroutines suceed)
    % with probability $1-(\ell+1)/n^c \ge 1-n^{1-2c}$.
    % % This is formally proved in Proposition ??
    % % in Appendix ?.
    % For the remainder of the analysis, we
    % condition on the event that all calls
    % to \lag succeed.
  
    Let $\univ(1/\ell)$ denote the random
    distribution over subsets of $\univ$
    where each element is included independently
    with probability $1/\ell$.
    For $x \in \univ$, let \revtwo
    $$p_x = \begin{cases}
      \Prob_{X \sim \univ(1 / \ell), \mathbf q} 
      \left[ \algrel( X \cup \{x\}, \mathbf q) = \algrel( X, \mathbf q ) \right]  \text{, if } x \in O \\
      0 \hspace{13em} \text{, otherwise }
    \end{cases}.$$ \color{black}
    Consider \revtwo $S_1 = \algsol( \univ_1, \mathbf{q} )$
    and $R_1 = \algrel( \univ_1, \mathbf{q} )$ on machine $m_1$.
    Let $O_1 = \{ o \in O : \algrel( \univ_1 \cup \{ o \}, \mathbf q ) = \algrel( \univ_1, \mathbf q ) \}$.  \color{black}
    % \color{orange}
    Since \alg follows the Randomized Consistency Property,
    % \color{black}
    % $\alg( \univ_1 \cup O_1, \mathbf q )$ succeeds and
    \revtwo
    $\algsol( \univ_1 \cup O_1, \mathbf q ) = \algsol(\univ_1, \mathbf q) = S_1$.
    \color{black}
    Therefore, 
    % let $\alpha = 1-1/e-\epsi$, 
    $f( S_1 ) \ge \alpha f( O_1 )$.
    Next, \revtwo let $O_2 = O \cap R$, where $R = \bigcup_{i=1}^{\ell} R_i$. \color{black}
    It holds that $f( T ) \ge \alpha f( O_2 )$.
    Let $o \in O$; $o$ is assigned to $\univ_c$ on some machine
    $m_c$. It holds that, \revtwo
    \begin{align*}
      \Pr[ o \in O_2 ] & = \Pr[ o \in \algrel( \univ_c ) | o \in \univ_c ] \\
                     & = \Pr[ o \in \algrel( \univ_c \cup \{ o \} )] \\
                     &= 1 - p_o. 
    \end{align*}
    \color{black}
    Therefore,
    \begin{align*}
      \ex{f(V)} &\ge \frac{1}{2}\left(\E[ f(S_1) ] + \E[ f(T) ]\right)\\
      &\ge \frac{\alpha}{2}\left( \E[ f( O_1 ) ] +\E[ f( O_2 ) ]\right)\\
       &\ge \frac{\alpha}{2}\left(F( \mathbf p ) + F( \mathbf{1}_O - \mathbf p ) \right)\\
      &\ge \frac{\alpha}{2} F( \mathbf{1}_O ) 
      % = \frac{1-1/e-\epsi}{2} \opt
      ,\numberthis \label{inq:dash-lovasz}
    \end{align*}
  where Inequality \ref{inq:dash-lovasz} follows from Lemma \ref{lemma:lovasz}
  and $F$ is convex. 
  \end{proof}

\begin{corollary}\label{cor:dash}
    Let $(f,k)$ be an instance of \sm where \revtwo $k = \oh{\psi/\ell}$. \color{black}
    \dsb returns set $V$ with two MR rounds,
    $O\left(\frac{1}{\epsi^4} \log(k)\log\left(n\right)\right)$
    adaptive rounds,
    $O\left(\frac{n \log(k)}{\epsi^4}\right)$ total queries,
    % \red{$O\left(\frac{n\log(k)}{\ell\epsi^4}\right)$} parallel runtime,
    $O(n)$ communication complexity,
    and probability at least $1-n^{1-2c}$
     such that
    \[ \E[ f(V) ] \ge \frac{1-1/e-\epsi}{ 2 } \opt. \]
\end{corollary}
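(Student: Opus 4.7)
\textbf{Proof proposal for Corollary \ref{cor:dash}.} The plan is to invoke Theorem \ref{thm:dash} as a black box, instantiated with \alg $=$ \lag, and then carry the parameter-dependent quantities through. The three ingredients we need about \lag are: (i) it satisfies the Randomized Consistency Property, (ii) its approximation ratio on any input ground set is $\alpha = 1 - 1/e - \epsi$, and (iii) its adaptivity and query complexity on an input of size $m$ are $\Psi(m) = O(\log(m)\log(k)/\epsi^4)$ and $\Phi(m) = O(m \log(k)/\epsi^4)$ respectively. Item (i) is Lemma \ref{lemma:lag-consistency}, and items (ii) and (iii) are the guarantees stated immediately after Algorithm \ref{alg:LAG}, obtained by setting $\Gamma = \max_{x \in \univ} f(\{x\})$ and $\alpha = 1/k$ inside \lag.

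Next, I would substitute these into Theorem \ref{thm:dash}. The theorem gives two MR rounds, $2\Psi(n/\ell)$ adaptive rounds, and $(\ell+1)\Phi(n/\ell)$ oracle queries. Since $\ell \le n^{1-c}$ and $\log(n/\ell) = O(\log n)$, these become $O(\log(n)\log(k)/\epsi^4)$ adaptive rounds and
\[
(\ell+1)\cdot O\!\left(\tfrac{n\log(k)}{\ell\epsi^4}\right) = O\!\left(\tfrac{n\log(k)}{\epsi^4}\right)
\]
total queries. The communication complexity of $O(n)$ follows because each machine transmits a set of size at most $k \le \psi/\ell$, yielding $O(k\ell) = O(n)$ data movement.

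For the approximation bound in expectation, I would condition on the event that every invocation of \lag succeeds. There are $\ell + 1$ calls to \lag in total, and by the \lag success guarantee, each fails with probability at most $1/m$ on its input of size $m$; a union bound across all calls, together with the Chernoff concentration of the machine loads $|\univ_i|$ that is already used in the proof of Theorem \ref{thm:dash}, shows success occurs with probability at least $1 - n^{1-2c}$ (the dominant bad event being that some machine's load deviates from $n/\ell$). On this good event, \lag achieves ratio $1 - 1/e - \epsi$, so applying the conclusion of Theorem \ref{thm:dash} with $\alpha = 1 - 1/e - \epsi$ yields
\[
\E[f(V)] \;\ge\; \frac{1 - 1/e - \epsi}{2}\,\opt.
\]

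The main obstacle is bookkeeping rather than anything deep: one has to be careful that the $1 - 1/e - \epsi$ ratio from \lag composes cleanly with the $1/2$ loss from the two-round reduction, and that the various failure probabilities (the internal $\delta/n$ failure of each \lat call inside each \lag, the $\ell + 1$ union bound, and the Chernoff tail on load balance) combine to $n^{1-2c}$ under the assumption $c \ge 1/2$. Everything else is direct substitution into Theorem \ref{thm:dash}.
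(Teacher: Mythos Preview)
Your proposal is correct and follows exactly the approach the paper intends: the paper states that ``we can get Corollary~\ref{cor:dash} for \dsb immediately'' from Theorem~\ref{thm:dash}, and your argument instantiates that theorem with $\alg = \lag$, invoking Lemma~\ref{lemma:lag-consistency} for the RCP and the stated guarantees of \lag for the ratio, adaptivity, and query complexity. Your derivation of the $1-n^{1-2c}$ success probability via a union bound over the $\ell+1 \le n^{1-c}+1$ calls to \lag (each on a ground set of size $\ge n^c$, hence failing with probability at most $n^{-c}$) is the right bookkeeping, which the paper leaves implicit.
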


\subsection{\dagfull} \label{apx:dagfull}
\begin{algorithm}[t]
   \caption{$\dagfull (\dagh)$}
   \label{alg:DAG}
  \begin{algorithmic}[2]
     \State {\bfseries Input:} evaluation oracle 
     $ f:2^{\mathcal N} \to \mathbb{R}$, constraint $k$, 
     error $\epsi$, available machines $M \gets \{1, 2,...,\ell\}$
     \State $ S\gets \emptyset, C_0 \gets \emptyset $
    \For{$r \gets 1$ to $\lceil\frac{1}{\epsi} \rceil$ }\label{alg2:iterRounds}
        \State $X_{r,i} \gets$ Elements assigned to machine $i$ chosen uniformly at random in round $r$
        \State $ \mathcal N_{r,i} \gets X_{r,i} \cup C_{r-1}$
        \For{$i \in M$ in parallel }\label{alg2:distGB}
        \revtwo
          \State $S_{r,i}, R_{r,i} \gets \lag (f, \mathcal N_{r,i}, k, \epsi )$ \label{alg2:AlgSoli}
          \State Send $S_{r,i}, R_{r,i}$ to each machine
          \color{black}
        \EndFor
        % \LineComment{On primary machine}
        \State $S \gets \argmax{\{ f(S), f(S_{r,1}), \cdots , f(S_{r,\ell})}\} $
        \revtwo
        \State $ C_r \gets \bigcup_{i=1}^{\ell} R_{r,i} \cup C_{r-1} $ \label{alg2:AlgRel}
        \color{black}
        \EndFor
     \State \textbf{return} \textit{S}
  \end{algorithmic}
\end{algorithm}
% \textbf{\dagfull (\dagh, Alg. \ref{alg:DAG}) .} 
Next, we obtain the nearly the optimal ratio by applying \lag and
randomized consistency to
\ddist proposed by \citet{kazemi2021regularized}. \ddist is a distributed 
algorithm for regularized submodular maximization
that relies upon (a distorted version of) the
standard greedy algorithm. For SMCC, the distorted
greedy algorithm reduces
to the standard greedy algorithm \sg.
% without loss of approximation ratio for an 
% $\alpha$-approximation greedy algorithm.
% To the best of our knowledge,
% no algorithms besides the standard greedy algorithm of
% \citet{Fisher1978}
% and the continuous greedy algorithm of 
% \citet{Calinescu2007, feldman2011unified}
% have been shown to work in this framework.
% In Section \ref{apx:dagfull}, we show that 
% \color{orange}
In the following, we show that any algorithm
that satisfies randomized consistency
can be used in place of \sg as stated in Theorem~\ref{thm:GDASH}.
By introducing \lag into \ddist, \dagh
% to yield an algorithm (\dagh, Alg. \ref{alg:DAG}) that 
% \color{black}
achieves the near optimal ($1 - 1/e - \epsi$) 
expected ratio in $\lceil \frac{1}{\epsi} \rceil$ MapReduce rounds,
$O( \log (n) \log (k) )$ adaptive rounds, and $O(n \log(k ))$
total queries. We also generalize this to any algorithm
that satisfies randomized consistency. 

\textbf{The Framework of \citet{kazemi2021regularized}.} \ 
\ddist has $\lceil1/\epsi \rceil$ MR rounds where each round $r$ works as follows:
First, it distributes the ground set into $m$ machines uniformly at random.
Then, each machine $i$ runs 
\sg (when modular term $\ell(\cdot)=0$) 
% in the situation of monotone submodular functions 
on the data $\univ_{r,i}$ that combines the
elements distributed before each round $X_{r,i}$
and the elements forwarded from the previous rounds $C_{r-1}$
to get the solution $S_{r,i}$. 
% and forward $S_{r,i}$ to each machine.
At the end, the final solution,
which is the best among $S_{r,1}$ and all the previous solutions, is returned.
To improve the adaptive rounds of \ddist, we replace standard greedy with \lag
to get \dagh.
% \color{orange}
\begin{restatable}{theorem}{thmGDASH}
\label{thm:GDASH}
    Let $(f,k)$ be an instance of \sm where \revtwo $k = \oh{\epsi\psi/\ell}$. \color{black}
    \alg is a randomized algorithm which satisfies the
    Randomized Consistency Property with $\alpha$ approximation ratio,
    $\Phi(n)$ query complexity, and $\Psi(n)$ adaptivity.
    By replacing \lag with \alg,
    \dagh returns set $S$ with $\oh{1/\epsi}$ MR rounds and 
    $\frac{1}{\epsi}\Psi\left(\frac{n}{\ell}\right)$
    adaptive rounds,
    $\frac{\ell}{\epsi}\Phi\left(\frac{n}{\ell}\right)$ total queries,
    % $\frac{1}{\epsi}\Phi\left(\frac{n}{\ell}\right)$ parallel runtime,
    $O\left(\frac{n}{\epsi}\right)$ communication complexity
    % and probability at least $1-\epsi^{-1}n^{1-2c}$
    such that,
    \[\ex{f(S)} \ge (\alpha-\epsi) \opt.\]
\end{restatable}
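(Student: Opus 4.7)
The plan splits the theorem into two parts: the resource complexity, which is essentially bookkeeping, and the approximation ratio, which requires iterating the per-round RCP-plus-Lovász argument from Theorem \ref{thm:dash}.

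The complexity bounds follow directly from the algorithm's structure. \dagh runs for $\lceil 1/\epsi \rceil$ MR rounds, each invoking \alg in parallel on $\ell$ machines with input $\univ_{r,i} = X_{r,i} \cup C_{r-1}$. A Chernoff concentration bound on the random partition on Line \ref{alg2:distGB}'s setup gives $|X_{r,i}| = O(n/\ell)$ with high probability, while $|C_{r-1}| \le (r-1)\ell k \le \ell k / \epsi \le \psi$ under the hypothesis $k < \epsi\psi/\ell$. Hence $|\univ_{r,i}| = O(n/\ell)$, and summing \alg's per-invocation cost over the $\lceil 1/\epsi \rceil$ rounds yields adaptivity $\frac{1}{\epsi}\Psi(n/\ell)$, total queries $\frac{\ell}{\epsi}\Phi(n/\ell)$, and communication $O(\ell k/\epsi) = O(n/\epsi)$ (only the $\ell$ solutions per round, each of size at most $k$, need to be broadcast).

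For the approximation ratio, I would generalize the per-round argument of Theorem \ref{thm:dash}. Fix round $r$ and condition on $C_{r-1}$; consider the primary machine's output $S_{r,1} = \alg(\univ_{r,1}, \mathbf{q})$. Define the ``absorbed'' OPT elements $O_r^{\mathrm{rej}} = \{o \in O \setminus \univ_{r,1} : \alg(\univ_{r,1} \cup \{o\}, \mathbf{q}) = S_{r,1}\}$. By Property \ref{prop:consistency}, $\alg(\univ_{r,1} \cup O_r^{\mathrm{rej}}, \mathbf{q}) = S_{r,1}$, yielding $f(S_{r,1}) \ge \alpha f(G_{r-1} \cup (O \cap X_{r,1}) \cup O_r^{\mathrm{rej}})$, where $G_{r-1} = O \cap C_{r-1}$. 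For each $o \in O \setminus C_{r-1}$ let $p_{r,o} = \Pr[o \notin (O \cap X_{r,1}) \cup O_r^{\mathrm{rej}} \mid C_{r-1}]$; using that $X_{r,1}$ and $\mathbf{q}$ are independent of $C_{r-1}$, Lemma \ref{lemma:lovasz} on the Lovász extension $F$ gives a per-round lower bound of the form $\ex{f(S_{r,1}) \mid C_{r-1}} + \alpha F(\mathbf{p}_r) \ge \alpha \opt$, where $\mathbf{p}_r$ is the vector of $p_{r,o}$ on $O \setminus C_{r-1}$ and $0$ on $G_{r-1}$.

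The iteration exploits the coupling between rejections and captures: for $o \in O \setminus C_{r-1}$, $o$ enters $C_r$ exactly when it is placed in $X_{r,c}$ for its machine $c$ and is not rejected there, so by symmetry across machines and RCP, $\Pr[o \notin C_r \mid o \notin C_{r-1}]$ is controlled by the same rejection quantity. This lets us bound $\E[F(\mathbf{p}_r) \mid C_{r-1}]$ by the expected uncaptured mass $\E[f(O) - f(O \cap C_r) \mid C_{r-1}]$ up to lower-order terms, producing a recurrence: either $\E[f(S_{r,1}) \mid C_{r-1}] \ge (\alpha - \epsi)\opt$, in which case we are done because $S$ is the running maximum, or the potential $\E[f(O) - f(O \cap C_r)]$ shrinks by a factor bounded away from $1$. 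Telescoping over $\lceil 1/\epsi \rceil$ rounds drives the residual gap to at most $\epsi \opt$, giving $\ex{f(S)} \ge (\alpha - \epsi)\opt$.

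The principal obstacle is making this iteration rigorous despite the cross-round correlations: although $\{X_{r,i}\}$ and $\mathbf{q}$ in round $r$ are independent of $C_{r-1}$, $C_{r-1}$ itself is a random function of the history, so the per-round Lovász bound must be telescoped under a suitable filtration with care in how the shifting support $O \setminus C_{r-1}$ of $\mathbf{p}_r$ is tracked. A secondary subtlety, already present in \dsb, is that only the single machine-1 output drives the per-round Lovász estimate, so per-round progress is modest and the final $(\alpha - \epsi)$ ratio must emerge from amortization across all $\lceil 1/\epsi \rceil$ rounds rather than from any single round's guarantee—which is precisely why \dagh requires $O(1/\epsi)$ MR rounds rather than the two of \dsb.
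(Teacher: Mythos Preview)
Your complexity bookkeeping matches the paper and is fine. The approximation argument, however, has a genuine gap: you set up a conditional recurrence (``either the round already achieves $(\alpha-\epsi)\opt$, or the uncaptured-OPT potential shrinks by a constant factor'') but never establish the dichotomy, and you yourself flag the cross-round correlations as the ``principal obstacle.'' That obstacle is real for your approach: the per-round Lov\'asz bound you write down involves a vector $\mathbf{p}_r$ supported on $O\setminus C_{r-1}$, and controlling $\E[F(\mathbf{p}_r)\mid C_{r-1}]$ by the change in $f(O\cap C_r)$ is not immediate (Lov\'asz is convex, not concave, so you cannot push the expectation inside in the direction you need). Even if the recurrence held, a constant-factor geometric shrinkage over $1/\epsi$ rounds does not automatically deliver the additive loss of $\epsi$ that the theorem claims.

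The paper avoids conditioning on $C_{r-1}$ altogether. It defines the \emph{unconditional} vectors $p_o^r = \Pr[\,o\notin C_{r-1}\text{ and }o\in\alg(X\cup C_{r-1}\cup\{o\},\mathbf q)\,]$, proves the accounting identity $\Pr[o\in C_r]=\sum_{r'\le r}p_o^{r'}$, and then extracts \emph{two} lower bounds on each $\E[f(S_{r,1})]$: one via RCP against $O_{r,2}=(C_{r-1}\cap O)\cup\{\text{rejected}\}$, giving $\E[f(S_{r,1})]\ge\alpha F(\mathbf 1_O-\mathbf p^r)$, and one trivially against $O_{r,3}=C_{r-1}\cap O$, giving $\E[f(S_{r,1})]\ge\alpha F\!\left(\sum_{r'<r}\mathbf p^{r'}\right)$. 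Using the first bound for rounds $1,\dots,1/\epsi-1$ and the second bound for the last round, the arguments of $F$ sum to exactly $(1/\epsi-1)\mathbf 1_O$; a single application of convexity plus the scaling property of the Lov\'asz extension then yields $\sum_r\E[f(S_{r,1})]\ge\alpha(1/\epsi-1)\opt$, and since $\E[f(S)]\ge\epsi\sum_r\E[f(S_{r,1})]$ the result follows. The key idea you are missing is this second, complementary bound and the one-shot convexity step in place of a telescoped recurrence.
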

% \color{black}
\begin{proof}
\revone
\textbf{Query Complexity and Adaptivity.}
\dagh operates with $\lceil \frac{1}{\epsi} \rceil$ MR rounds,
where each MR round calls \alg $\ell$ times in parallel.
As each call of \alg works on the ground set with a size of at most $\frac{n}{\ell}$,
the total number of queries for \dagh is $\frac{\ell}{\epsi}\Phi\left(\frac{n}{\ell}\right)$,
and the total number of adaptive rounds is $\frac{1}{\epsi}\Psi\left(\frac{n}{\ell}\right)$.

\textbf{Approximation ratio.}
\color{black}
Let \dagh be run with input $(f,k,\epsi, M)$. 
% \color{orange}
Similar to the analysis of Theorem~\ref{thm:dash},
The size of $\univ_i$ is concentrated.
% with $\ell/\epsi$ independent calls of \lag,
% where the size of the ground set for each \lag is larger than $n^c$,
% \dagh succeeds with probability 
% $1-\ell/(\epsi n^c) \ge 1-\epsi^{-1}n^{1-2c}$.
% For the remainder of the analysis, we 
% replace \lag with any randomized algorithm \alg which follows
% Property~\ref{prop:consistency},
% and condition on the event
% that all calls to \alg succeed.
% \color{black}
Let $O$ be the optimal solution.
For any $x \in \mathcal N$,
define that,
$$p_x^r = \begin{cases}
  \Prob_{X \sim \univ(1 / \ell), \mathbf q} 
  \left[ x \not \in C_{r-1} \text{ and }\right. 
  \\ \revtwo \hspace{1em}  \left.x \in \algrel(X \cup C_{r-1} \cup \{x\}, \mathbf{q}) \right] 
  \hspace{1em}\text{, if } x \in O \\
  0 \hspace{13em} \text{, otherwise }
\end{cases}.$$
Then we provide the following lemma.
\begin{lemma}\label{lemma:dagprob}
For any $x \in O$ and $1 \le r \le 1/\epsi$,
$\prob{x \in C_r} = \sum_{r'=1}^{r} p_x^{r'}$.
\end{lemma}
\begin{proof}
\begin{align*}
\prob{x \in C_r} &= \sum_{r'=1}^r \prob{x \in C_{r'} \backslash C_{r'-1}}\\
&= \revtwo \sum_{r'=1}^r \prob{x \in \cup_{i=1}^{\ell} R_{r',i}\backslash C_{r'-1}}\\
&=\revtwo \sum_{r'=1}^r \sum_{i=1}^{\ell}\frac{1}{\ell}
\prob{x \in R_{r',i}\backslash C_{r'-1}| x \in \mathcal{N}_{r',i}}\\
&=\sum_{r'=1}^r \sum_{i=1}^{\ell}\frac{1}{\ell}
\Prob \left( x \not \in C_{r'-1} \text{ and }\right. \\
       & \revtwo \quad \left. x \in \algrel( X_{r',i}\cup C_{r'-1} \cup \{x\}, \mathbf q)\right) \\
&=\sum_{r'=1}^r p_x^{r'}
\end{align*}
\end{proof}
The rest of the proof bounds $f(S_{r,1})$ in the following two ways.

First, let \revtwo $O_{r,1} = \{o \in O: o \not \in 
\algrel\left(X_{r,1} \cup C_{r-1}\cup \{o\}, \mathbf q\right) \}$, \color{black}
and $O_{r,2} = \left(C_{r-1} \cap O \right)\cup O_{r,1}$.
% By Lemma~\ref{lemma:lag-consistency}, 
% \color{orange}
Since \alg follows the Randomized Consistency Property,
% \color{black}
it holds that \revtwo 
$\algsol(X_{r,1} \cup C_{r-1}, \mathbf q)=
\algsol(X_{r,1} \cup C_{r-1} \cup O_{r,1}, \mathbf q)=S_{r,1}$, \color{black}
% Since the approximation ratio for \alg is $1-1/e-\epsi$,
and $f\left(S_{r,1}\right) \ge \alpha f\left(O_{r,2}\right)$.
So, for any $o \in O$, \revtwo
\begin{align*}
\prob{o \in O_{r,2}} = \Prob \left(o \in C_{r-1} \text{ or }
o \not \in \algrel\left(X_{r,1} \cup C_{r-1}\cup \{o\}, \mathbf q\right)\right)
= 1- p_o^{r}.
\end{align*}
\color{black}
Therefore,
\begin{align*}
\ex{f\left(S_{r,1}\right)}\ge \alpha \ex{f\left(O_{r,2}\right)}
\ge \alpha F\left(\mathbf{1}_O - \mathbf{p}^r\right).\numberthis \label{eq:o2}
\end{align*}

Second, let $O_{r,3} = C_{r-1} \cap O$.
Similarly, it holds that
$f\left(S_{r,1}\right) \ge \alpha f\left(O_{r,3}\right)$.
And for any $o \in O$, by Lemma~\ref{lemma:dagprob}, it holds that,
\begin{equation*}
\prob{o \in O_{r,3}} = \prob{o \in C_{r-1}}
= \sum_{r'=1}^{r-1} p_x^{r'},
\end{equation*}
Therefore,
\begin{align*}
\ex{f\left(S_{r,1}\right)}\ge \alpha\ex{f\left(O_{r,3}\right)}
\ge \alpha F\left(\sum_{r'=1}^{r-1} \mathbf{p}^{r'}\right).\numberthis \label{eq:o3}
\end{align*}

By Inequalities~\ref{eq:o2} and \ref{eq:o3},
we bound the approximation ratio of \dagh by the following,
\begin{align*}
\ex{f(S)} & \ge \epsi \sum_{r=1}^{1/\epsi} \ex{f\left(S_{r,1}\right)}\\
& \ge \epsi \alpha \cdot
\left(F\left(\sum_{r'=1}^{1/\epsi-1} \mathbf{p}^{r'}\right) + 
 \sum_{r=1}^{1/\epsi-1}F\left(\mathbf{1}_O - \mathbf{p}^{r'}\right)\right)\\
 &\overset{(a)}{\ge} \alpha(1-\epsi) F\left(\mathbf{1}_O\right)\\
 &\ge (\alpha-\epsi) \opt,
\end{align*}
where Inequality (a) follows from Lemma~\ref{lemma:lovasz} and $F$ is convex.
\end{proof}
% \color{orange}
\begin{corollary}\label{cor:GDASH}
    Let $(f,k)$ be an instance of \sm where \revtwo $k = \oh{\epsi\psi/\ell}$. \color{black}
  \dagh returns set $S$ with $\oh{1/\epsi}$ MR round and 
  $O\left(\frac{1}{\epsi^5}\log(k)\log(n)\right)$ 
  adaptive rounds,
  $O\left(\frac{n \log(k)}{\epsi^5}\right)$ total queries,
  % \red{$O\left(\frac{n\log(k)}{\ell \epsi^5}\right)$} parallel runtime,
  $O\left(\frac{n}{\epsi}\right)$ communication complexity,
  and probability at least $1-\epsi^{-1}n^{1-2c}$
  such that,
  \[\ex{f(S)} \ge (1-1/e-\epsi)\opt.\]
\end{corollary}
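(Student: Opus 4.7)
\textbf{Proof proposal for Corollary \ref{cor:GDASH}.}
The plan is to derive the corollary as a direct instantiation of Theorem \ref{thm:GDASH} with \alg set to \lag, then absorb the necessary $\epsi$-rescaling and union bounds. First, I would invoke Lemma \ref{lemma:lag-consistency} together with the guarantees of \lag recalled in Section \ref{sec:LAG}: when parameterized by $\epsi$ with $\Gamma = \max_{x \in \univ} f(\{x\})$ and $\alpha = 1/k$, \lag satisfies the Randomized Consistency Property, achieves ratio $1 - 1/e - \epsi$, adaptivity $\Psi(n) = O(\log(n)\log(k)/\epsi^4)$, and query complexity $\Phi(n) = O(n\log(k)/\epsi^4)$, with success probability at least $1 - 1/n$. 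These are exactly the ingredients required to apply Theorem \ref{thm:GDASH}.

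Next, I would substitute these values into Theorem \ref{thm:GDASH}. Running \lag on a per-machine ground set of size at most $n/\ell$ yields per-machine adaptivity $\Psi(n/\ell) = O(\log(n/\ell)\log(k)/\epsi^4) = O(\log(n)\log(k)/\epsi^4)$ and query complexity $\Phi(n/\ell) = O((n/\ell)\log(k)/\epsi^4)$. Multiplying by the $1/\epsi$ MR rounds gives total adaptivity $O(\log(n)\log(k)/\epsi^5)$ and, since the queries are aggregated across the $\ell$ machines in each round, total queries $(\ell/\epsi)\Phi(n/\ell) = O(n\log(k)/\epsi^5)$. The communication complexity of $O(n/\epsi)$ and the $O(1/\epsi)$ MR-round count are inherited verbatim from Theorem \ref{thm:GDASH}. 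The expected value bound from Theorem \ref{thm:GDASH} reads $\E[f(S)] \ge (\alpha - \epsi)\opt = (1 - 1/e - 2\epsi)\opt$, and the standard trick of replacing $\epsi$ by $\epsi/2$ (which only changes constants inside the $O(\cdot)$) yields the claimed $(1 - 1/e - \epsi)$ ratio.

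For the probability statement, I would take a union bound over two sources of randomness. The first is the concentration of $|\univ_{r,i}|$ around $n/\ell$: exactly as in the proof of Theorem \ref{thm:dash}, a Chernoff bound applied to each of the $\ell$ machines across each of the $O(1/\epsi)$ MR rounds yields a failure probability of at most $O(\epsi^{-1} n^{1-2c})$, using $\ell \le n^{1-c}$ and hence $\E[|\univ_{r,i}|] \ge n^c$. The second is the $\le 1/n$ failure probability of each invocation of \lag; summing over the $\ell/\epsi$ calls contributes an additional $O(\epsi^{-1}\ell/n) = O(\epsi^{-1} n^{-c})$, which is dominated by $\epsi^{-1} n^{1-2c}$ under $c \ge 1/2$. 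Combining these gives the claimed success probability of at least $1 - \epsi^{-1} n^{1-2c}$.

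The proof is essentially a bookkeeping exercise, so I do not expect any real obstacle; the only care point is making sure the $\epsi$ accounting is consistent between the $\epsi$ parameter of \lag and the $\epsi$ parameter of the distributed round schedule, which the standard rescaling argument handles cleanly.
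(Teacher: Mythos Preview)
Your proposal is correct and follows exactly the intended derivation: the paper states Corollary~\ref{cor:GDASH} immediately after Theorem~\ref{thm:GDASH} without a separate proof, and the only way to obtain it is precisely what you do---instantiate \alg with \lag (using Lemma~\ref{lemma:lag-consistency} for RCP and the Section~\ref{sec:LAG} guarantees for $\alpha$, $\Psi$, $\Phi$), then rescale $\epsi$. One small slip: the failure probability of each \lag call is $1/|\univ_{r,i}| \approx \ell/n$, not $1/n$, so the union bound over $\ell/\epsi$ calls gives $\ell^2/(\epsi n) \le \epsi^{-1} n^{1-2c}$ directly rather than a dominated term; this does not change the final bound.
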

% \color{black}

\section{\datfull: Two MR-Rounds Algorithm with Improved Theoretical Properties}\label{sec:dat}
In this section, we present a novel $(0.375 - \epsi$)-approximate, two MR-rounds algorithm \datfull (\dat)
that 
achieves nearly optimal $O(\log(n))$ adaptivity in $O(n\log (k))$ total time complexity.

\textbf{Description.} \ The \dat algorithm (Alg. \ref{alg:DATOPT})
is a two MR-rounds algorithm that runs \lat concurrently on every machine with
a specified threshold value of $\alpha \opt / k$.
The primary machine then builds up its solution
$S_1$ by adding elements with $\lat$ from the pool
of solutions returned by the other machines.
Notice that there is a small amount of data duplication
as elements of the ground set
are not randomly partitioned in the same way as in the
other algorithms.
This version of the algorithm
requires to know the $\opt$ value; in Appendix~\ref{sec:DATOPT} we show how to remove
this assumption. 
% Additionally, in Appendix \ref{app:Vondrak}, we discuss the 2-round algorithm in that provides $1/2$-approximation ratio but requires a significantly more data duplication in order to achieve its theoretical guarantees.
In Appendix~\ref{app:Vondrak}, we further discuss the similar two MR round algorithm of ~\citet{liu2018submodular}, that provides an improved $1/2$-approximation but requires four times the data duplication of \dat.

% of threshold $\tau_{i,j}$ in the range $[\frac{\alpha\Delta_i^*}{k},
% \alpha\Delta_i^*]$; where $\alpha$ is the approximation of \dat and $\Delta_i^*$ is the maximum singleton in $\univ_i$. Every solution 
% returned to the primary machine are placed into bins based on their 
% corresponding threshold guess $\tau_{i,j}$ such that
% $\frac{\Delta^*}{k}(1+\epsi)^x \le \tau_{i,j} \le \frac{\Delta^*}{k}(1+\epsi)^{x+1}$; 
% where $\Delta^*$ is the maximum singleton in $\univ$. Since there must exist a
% threshold $\tau^*$ that is close enough to $\alpha \opt/k$ 
% % such that $\tau^* \le \alpha\opt/k \le (1+\epsi)\tau^*$
% ; running \lat (Line \ref{alg1:T_j}) on every
% bin in the range $[\frac{\alpha\Delta^*}{k},\alpha\Delta^*]$ and
% selecting the best solution guarantees the $\alpha$-approximation of 
% \dat in $O(\log{}(n))$ adaptive rounds.

\begin{algorithm}[t]
  \caption{\datfull Knowing \opt}
  \label{alg:DATOPT}
  \begin{algorithmic}[1]
     \State {\bfseries Input:} Evaluation oracle $ f:2^{\mathcal N} \to \mathbb{R}$, constraint $k$, error $\epsi$, available machines $M \gets \{1, 2,...,\ell\}$, and \opt
     % \REPEAT
          \State Initialize $\delta \gets 1/(\ell+1)$, $\mathbf q \gets$ a fixed sequence of random bits.
          \State Set $\alpha \gets \frac{3}{8}$, 
     $(\mathbf q_1, \ldots, \mathbf q_{\ell+1}) \gets \mathbf q$
     \For{$e \in \univ$ do }\label{alg0:assignEle}
       \State Assign $e$ to each machine independently with probability $1/\ell$
     \EndFor
      
     \For{$i \in M$}\label{alg0:distLAT}
     \LineComment{On machine $i$}
     \State Let $\univ_i$ be the elements assigned to machine $i$
     \State \revtwo $S_{i}, R_i \gets \lat (f, \mathcal N_i, k, \delta, \epsi,(\alpha+\epsi)\opt/k,\mathbf q_i)$ \label{alg0:Ti} \color{black}
     \State \textbf{if} $|S_i| = k$ \textbf{then} return $T'\gets S_i$
     \State \revtwo Send $S_{i}, R_i$ to primary machine \color{black}
     \EndFor
     \LineComment{On primary machine} 
     \State \revtwo $R \gets \bigcup_{i=1}^{\ell} R_i$ \color{black}
     \State Let $g(\cdot) \gets f(S_1 \cup \cdot) - f(S_1)$
     
     \State \revtwo $T \gets \lat (g, R, k-|S_1|,  \delta, \epsi, (\alpha+\epsi)\opt/k, \mathbf q_{\ell+1})$ \color{black} \label{alg0:T}
     
     \State $T' \gets S_1 \cup T$ \label{alg0:TFinal}
     \State \textbf{return} $T'$
     % \UNTIL{$noChange$ is $true$}
  \end{algorithmic}
\end{algorithm}
\begin{theorem}
Let $(f,k)$ be an instance of \sm where \revtwo $k = \oh{\frac{\psi}{\ell}}$. \color{black}
  \dat knowing \opt returns set $T'$ with two MR rounds,
  $O\left(\frac{1}{\epsi^3}\log\left(n\right)\right)$
  adaptive rounds,
  $O\left(\frac{n}{\epsi^3}\right)$ total queries,
  % \red{$O\left(\frac{n}{\ell\epsi^3}\right)$} parallel runtime,
  $O(n)$ communication complexity,
  and probability at least $1-n^{-c}$
   such that
  \[\ex{f(T')} \ge \left(\frac{3}{8}-\epsi\right)\opt.\]
\end{theorem}
\begin{proof}
\revone
\textbf{Query Complexity and Adaptivity.}
\dat runs with two MR rounds, where the first MR round invokes \lat $\ell$ times in parallel, and the second MR round invokes \lat once.
By Theorem~\ref{theorem:threshold}, 
each call of \lat with a ground set of size at most $\frac{n}{\ell}$
queries $\oh{\frac{n}{\ell\epsi^3}}$ within $\oh{\frac{\log(n/\ell)}{\epsi^3}}$ adaptive rounds. 
The total number of queries is $\oh{\frac{n}{\epsi^3}}$,
and the total number of adaptive rounds is $\oh{\frac{\log(n)}{\epsi^3}}$.

\textbf{Approximation Ratio.}
\color{black}
  In Algorithm~\ref{alg:DATOPT}, 
  there are $\ell+1$ independent calls of \lat.
  With $|\mathcal N_{i}|\ge n^c$,
  the success probability of each call of \lat
  is larger than $1-\frac{1}{n^c(\ell+1)}$.
  Thus, Algorithm~\ref{alg:DATOPT} succeeds
  with probability larger than $1-n^{-c}$.
  For the remainder of the analysis, 
  we condition on the event that all calls to \lat succeed.

  In the case that $|T'| = k$, by \revtwo Theorem~\ref{theorem:threshold}
  in Section~\ref{sec:lat} and $\tau = \left(\frac{3}{8}+\epsi\right)\frac{\opt}{k}$, it holds that
  $f(T') \ge \frac{1-2\epsi}{1+\epsi}\tau \cdot k \ge
   \left(\frac{3}{8} - \epsi\right)\opt .$ \color{black}
  Otherwise, we consider the case that $|T'| < k$ in the following.
  \revtwo Let $(\textsc{TSMSol}(\univ, \mathbf q), \textsc{TSMRel}(\univ, \mathbf q))$ be the pair of sets returned by $\lat(\univ, \mathbf q)$.
  For any $x \in \univ$, let
    \[p_x = \begin{cases}
      \Prob_{X \sim \univ(1 / \ell), \mathbf q} 
      \left[ \textsc{TSMRel}( X \cup \{x\}, \mathbf q)\right. 
      \\\hspace{6em}  \left. = \textsc{TSMRel}( X, \mathbf q ) \right] 
      \hspace{1em}\text{, if } x \in O \\
      0 \hspace{13em} \text{, otherwise }
    \end{cases}.\]
  Let $O_1 = \{o \in O: o \not \in \textsc{TSMRel} (N_1 \cup \{o\}, q)\}$, $O_2 = R \cap O$. 
  By Randomized Consistency Property, it holds that
  $S_1 = \textsc{TSMSol} (N_1, q) = \textsc{TSMSol} (N_1 \cup O_1, q)$. \color{black}
  For any $o \in O_1$, $o$ is not selected in $S_1$. Since, $|S_1| < k$, 
  \revtwo by Property (4) in Theorem~\ref{theorem:threshold}, \color{black} 
  $\marge{o}{T'} < \marge{o}{S_1} < \tau.$
  Also, for any $o \in O_2\backslash T$, $o$ is not selected in $T$. Similarly, $\marge{o}{T'} < \tau.$
  Then, we can get,
  \begin{align*}
    &f(O_1 \cup O_2) - f(T') \le f(O_1 \cup O_2 \cup T') - f(T')\\
    & \le \sum_{o \in O_1 \cup O_2 \backslash T'}\marge{o}{T'}
     \le k \cdot \tau = \revtwo \left(\frac{3}{8}+\epsi\right)\opt.\numberthis \label{ineq:dat1}
  \end{align*}
  Next, we provide the following lemma to complete the rest of the proof.
\begin{restatable}{lemma}{unionprob}
\label{lemma:union_prob}
For any $o \in O$, it holds that 
   $\prob{o \in O_1 \cup O_2} \ge 3/4$.
\end{restatable} 
  % %%\vspace{0.5em}
  Then,
  by Lemma~\ref{lemma:lovasz} in Appendix~\ref{sec:lov},
  $\ex{f(O_1 \cup O_2)} \ge 3/4 \cdot F(\mathbf{1}_O)
  =3/4\cdot \opt.$
  From this and Inequality \ref{ineq:dat1},
  we can bound the approximation ratio for Algorithm \dat
  knowing \opt by follows, \revtwo
  $\ex{f(T')} \ge \left(\frac{3}{8}-\epsi\right)\opt.$ \color{black}
\end{proof}

% \color{blue}
 \section{\dls (\dlsshort)} \label{sec:DLS_proof2}
 \begin{algorithm}[t]
   \caption{\dls}
   \label{alg:dls2}
   \begin{algorithmic}[1]
      \State {\bfseries Input:} Evaluation oracle $ f:2^{\mathcal N} \to \mathbb{R}$, constraint $k$, error $\epsi$, available machines $M \gets \{1, 2,...,\ell\}$
      % \REPEAT
      % \State $a \gets \emptyset$
      \For{$e \in \univ$ }\label{line:assignEle2}
       %  \State $a \gets \argmax{\{ f(a), f(e)}\}$
        \State Assign $e$ to machine chosen uniformly at random
      \EndFor
     %  \State Assign $a$ to all the machines $M$
      \For{$i \in M$}\label{alg1:distLS2}
      \LineComment{On machine $i$}
      \State Let $\univ_i$ be the elements assigned to machine $i$
      \State $S_i \gets \qsshort (f, \mathcal N_i, k,  \mathbf q)$ \label{alg1:Ti2}      
      \State Send $S_i$ to primary machine
      \EndFor
      \LineComment{On primary machine}
      \State Gather $S \gets \bigcup_{i=1}^{\ell} S_i$ \label{alg1:T2} % on machine $m_1$ \label{alg1:T}
      % \State $T \gets  \lag (f, S, k, \epsi,  \mathbf q)$ on machine $m_1$\label{line:A2}
      \State $T_1 \gets \qsshort (f, S, k, \mathbf q)$ on machine $m_1$\label{line:A2}
      \State $T_1' \gets $ last $k$ elements in $T_1$
      \State $T_2 \gets \tg\left(f, T_1, k, \epsi, f(T_1'), 1/2\right)$ \Comment{Post-Processing}
      % \State $T^* \gets$ last $k$ elements added to $T$
      \State $S_1' \gets$ last $k$ elements added to $S_1$
      \State \textbf{return} $V \gets \argmax{\{f(S_1'), f(T_1'), f(T_2)}\} $
      % \State \textbf{return} $V$
      % \UNTIL{$noChange$ is $true$}
   \end{algorithmic}
  \end{algorithm}

In this section, we introduce \dlsshort, the first linear 
time MapReduce algorithm with constant approximation. 
We derive \dlsshort from \dsbshort by incorporating the analysis of 
\qsshort (Alg. \ref{alg:qs}), an algorithm with $O(n)$ query complexity, 
into the distributed setting. 
This integration allows \dlsshort 
to operate within two MapReduce rounds, 
demonstrating an adaptive complexity of 
$O(\frac{n}{\ell})$ and a query complexity of $O(n)$. 
Also, as stated in Theorem \ref{thm:dash2}, 
it guarantees an approximation ratio of $\frac{1}{8}$. 
Additionally, we improve the objective value by implementing
\tg (Alg.~\ref{alg:tg} in Appendix~\ref{apx:tg}) on the set returned at the second MapReduce round.
The integration of \qsshort into \dls 
provides enhanced capabilities and improves the 
efficiency of the overall algorithm.

\textbf{Description:} 
Initially, \dlsshort involves randomly distributing the 
ground set across all machines $M$. In the first MR round, 
\dls applies \qsshort on each machine to obtain $S_i$ 
within $\oh{n/\ell}$ query calls. 
The solutions from all machines are then returned to the primary machine, 
where \qsshort selects the output solution $T_1$.
By selecting the best last $k$ among all solutions, 
\dlsshort ensures a $\frac{1}{8}$-approximation. 
Besides returning the last $k$ elements of the solution $T_1$
% derived from the second MapReduce round,
which is an $1/2$-approximation of $T_1$,
we employ \tg, a $(1-1/e-\epsi)$ approximation algorithm,
to boost the objective value.
We provide the theoretical guarantees of \dlsshort as Theorem \ref{thm:dash2}.
The proof involves a 
minor modification of the proof provided in Theorem \ref{thm:dash}.
% in order to include the randomized consistency of \lasshort (Lemma \ref{lemma:low-consistency}).

 \begin{restatable}{theorem}{dash2} \label{thm:dash2}
  Let $(f,k)$ be an instance of \sm where $k < \frac{\psi}{\ell\log(\psi)}$.
    \dlsshort returns a set $V$ with two MR rounds,
   % %  $O\left(\frac{1}{\epsi^3} \left(\log(n)+ \log(k\ell\log(n))\right)\right)$ adaptive rounds,
    % $O\left(\frac{\left(\log(n/\ell)+ \epsi\log(k)\log(k\ell\log(n/\ell))\right)}{\epsi^3} \right)$
    % $O\left(\log(n)\frac{1+\epsi\log(k)}{\epsi^3}\right)$
    $\oh{\frac{n}{\ell}}$
    adaptive rounds,
  %  %  $O\left(\left(\frac{1}{\epsi k}+1\right)\left(\frac{n+k\ell\log(n)}{\epsi^3}\right)\right)$ total queries,
  %   $O\left(\frac{n+\epsi k\ell\log(n/\ell)\log(k)}{\epsi^3}\right)$
  % $O\left(\left(\frac{n+\epsi k\ell\log(n)\log(k)}{\epsi^3}\right)\right)$  
  $\oh{n}$
  total queries,
  %  %  $O\left(\left(\frac{1}{\epsi k}+1\right)\left(\frac{n/\ell+k\ell\log(n)}{\epsi^3}\right)\right)$  parallel runtime,
  %   $O\left(\frac{n/\ell+\epsi k\ell\log(n/\ell)\log(k)}{\epsi^3}\right)$  parallel
  % $O\left(\left(\frac{n/\ell+\epsi k\ell\log(n)\log(k)}{\epsi^3}\right)\right)$ 
  % $\oh{\frac{n}{\ell \epsi}}$ runtime,
    $O(n)$ communication complexity
     such that
    \[ \E[ f(V) ] \ge \frac{1}{8} \opt. \]
  \end{restatable}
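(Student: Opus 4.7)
My plan is to follow the proof template of Theorem~\ref{thm:dash}, substituting \qsshort for the generic algorithm \alg and invoking its randomized consistency (Lemma~\ref{lemma:qscst}) together with the last-$k$-elements guarantee from Theorem~\ref{thm:qs}. The approximation argument is structurally parallel to the one for \dsb; the added care is in the complexity bookkeeping, because unlike the generic template where each machine's output has at most $k$ elements, here each $S_i$ has size $O(k\log(n/\ell))$ by property~(3) of Theorem~\ref{thm:qs}, so the gathered set $S$ is the main bottleneck.

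For the resource bounds, each first-round call to \qsshort on $\univ_i$ uses $O(|\univ_i|)=O(n/\ell)$ queries in $O(n/\ell)$ adaptive rounds by Theorem~\ref{thm:qs}. Consequently $|S|=\sum_i|S_i|=O(\ell k\log n)$, and the hypothesis $k<\psi/(\ell\log\psi)$ combined with $\psi=O(n/\ell)$ is exactly what is needed to guarantee that $S$ fits in a single machine, that the second-round call to \qsshort on $S$ contributes $O(|S|)=O(n)$ additional queries in $O(n/\ell)$ adaptive rounds, and that the \tg post-processing on $T_1$ is absorbed into these same bounds. Communication totals $O(n)$ since the initial assignment ships each element once and the second round transmits a set of size $O(n)$ to the primary machine.

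For the approximation ratio, define
\[
p_x=\Prob_{X\sim\univ(1/\ell),\,\mathbf{q}}\!\left[\qsshort(X\cup\{x\},\mathbf{q})=\qsshort(X,\mathbf{q})\right]
\]
for $x\in O$ and $p_x=0$ otherwise, mirroring the proof of Theorem~\ref{thm:dash}. Let $O_1=\{o\in O:\qsshort(\univ_1\cup\{o\},\mathbf{q})=\qsshort(\univ_1,\mathbf{q})\}$ and $O_2=O\cap S$. By iterated application of Lemma~\ref{lemma:qscst}, $\qsshort(\univ_1\cup O_1,\mathbf{q})=S_1$, so property~(2) of Theorem~\ref{thm:qs} applied to the instance with ground set $\univ_1\cup O_1$ (in which $O_1\subseteq O$ is feasible since $|O_1|\le k$) yields $f(S_1')\ge\tfrac14 f(O_1)$. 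Analogously, $f(T_1')\ge\tfrac14 f(O_2)$. For any $o\in O$ assigned to machine $c$, $\prob{o\in O_2}=\prob{o\in\qsshort(\univ_c\cup\{o\},\mathbf{q})}=1-p_o$, so by Lemma~\ref{lemma:lovasz} and convexity of $F$ we obtain $\E[f(O_1)]\ge F(\mathbf{p})$ and $\E[f(O_2)]\ge F(\mathbf{1}_O-\mathbf{p})$. Combining,
\begin{align*}
\E[f(V)] &\ge \tfrac12\bigl(\E[f(S_1')]+\E[f(T_1')]\bigr)
         \ge \tfrac18\bigl(F(\mathbf{p})+F(\mathbf{1}_O-\mathbf{p})\bigr)
         \ge \tfrac18 F(\mathbf{1}_O)=\tfrac18\opt.
\end{align*}

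The main obstacle I anticipate is ensuring the $1/4$-approximation of Theorem~\ref{thm:qs} survives the substitution $O\mapsto O_1$ when \qsshort is evaluated on $\univ_1\cup O_1$, and, on the complexity side, verifying that the enlarged set $|S|=\Theta(\ell k\log n)$ still fits into one machine and yields $O(n)$ total queries with $O(n/\ell)$ adaptivity; this is precisely what forces the tighter size constraint $k<\psi/(\ell\log\psi)$, in contrast to the $k<\psi/\ell$ in Theorem~\ref{thm:dash}.
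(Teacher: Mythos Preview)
Your proposal is correct and follows essentially the same approach as the paper: define the probability vector $\mathbf{p}$, the sets $O_1$ and $O_2$, invoke the randomized consistency of \qsshort (Lemma~\ref{lemma:qscst}) to obtain $f(S_1')\ge\tfrac14 f(O_1)$ and $f(T_1')\ge\tfrac14 f(O_2)$ via Theorem~\ref{thm:qs}, and conclude with the Lov\'asz-extension convexity argument. Your complexity accounting is actually slightly more explicit than the paper's (you spell out $|S|=O(\ell k\log n)$ and tie the constraint on $k$ to fitting $S$ on one machine), though note that once you have $|S|\le\psi=O(n/\ell)$, the second-round contribution is $O(n/\ell)$ queries and adaptive rounds, not merely $O(n)$.
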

\revone
\subsection{Analysis of Query Complexity and Adaptivity}
\dlsshort runs with two MR rounds.
   In the first MR round, \qsshort is invoked
   $\ell$ times in parallel, each with $\oh{n/\ell}$ queries and $\oh{n/\ell}$ adaptive rounds by Theorem~\ref{thm:qs}.
   So, during the first MR round, the number of queries is $\oh{n}$ and the number of adaptive rounds is $\oh{n/\ell}$.
   Then, the second MR round calls \qsshort and \tg once respectively, handling at most $n/\ell$ elements.
   By Theorem~\ref{thm:qs} and~\ref{thm:tg}, the number of queries is $\oh{n/\ell}$ and the number of adaptive rounds is $\oh{n/\ell}$.
   Consequently, the total number of queries of \dlsshort is $\oh{n}$,
   and the total number of adaptive rounds is $\oh{n/\ell}$.

\color{black}
\subsection{Analysis of Approximation Ratio}
   Let \dlsshort be executed with input $(f,k,\epsi, M)$. 
   Since $\ell \le n^{1-c}$, it follows that the expected 
   size of each subset $|\univ_i|$ satisfies 
   $\ex{ |\univ_i| } = n / \ell \ge n^c$. 
   To ensure that the size $|\univ_i|$ is concentrated, 
   we apply Chernoff's bound. 
   % By independently repeating 
   % \lasshort multiple times, we can guarantee that \dlsshort 
   % succeeds, meaning that all calls to the subroutines are 
   % successful, with a probability of at least 
   % $1-(\ell+1)/n^c \ge 1-n^{1-2c}$. 
   % Throughout the subsequent analysis, we condition on the event 
   % that all calls to \lasshort succeeds.
   Let $\univ(1/\ell)$ denote the random
   distribution over subsets of $\univ$
   where each element is included independently
   with probability $1/\ell$. Let $\mathbf p \in [0, 1]^n$ be the following vector.
   For $x \in \univ$, let
   $$p_x = \begin{cases}
     \Prob_{X \sim \univ(1 / \ell), \mathbf q} 
     \left[ \qsshort( X \cup \{x\}, \mathbf q) = \qsshort( X, \mathbf q ) \right] 
     \text{, if } x \in O \\
     0 \hspace{13em} \text{, otherwise }
   \end{cases}.$$
   Consider $S_1 = \qsshort( \univ_1, \mathbf{q} )$ on machine $m_1$.
  Let $O_1 = \{ o \in O : \qsshort( \univ_1 \cup \{ o \}, \mathbf q ) = \qsshort( \univ_1, \mathbf q ) \}$. By Lemma \ref{lemma:qscst},
  $\qsshort( \univ_1 \cup O_1, \mathbf q ) = S_1$.
  % Therefore, let $\alpha = \frac{1}{2}$, $f( S_1 ) \ge \alpha f( O_1 )$.
  Therefore, by Theorem~\ref{thm:qs}, it holds that $f( S_1' ) \ge  f( O_1 )/4$.
  Next, let $O_2 = O \cap S$, where $S = \bigcup_{i=1}^{\ell} S_i$.
  % It holds that $f( T ) \ge \alpha f( O_2 )$.
  Similarly, by Theorem~\ref{thm:qs}, it holds that $f( T_1' ) \ge f( O_2 )/4$.
  % Since, $T_1$ is an $\alpha$-approximation of $O_2$, by setting $\Gamma = f(T_1)$ and $\alpha = \frac{1}{4+\frac{4(2-\epsi)}{(1-\epsi)(1-2\epsi)}\epsi}$,
  % \lag achieves ratio $1-1/e-\epsi$ with $\oh{\log(n)/\epsi^2}$ adaptive rounds and $\oh{n/\epsi^2}$ query calls.
  % Thus, 
  % it holds that $f( T_2 ) \ge (1-\frac{1}{e}-\epsi) f( O_2 )$ (Theorem 3 of \citet{chen2021best}).
  Let $o \in O$; $o$ is assigned to $\univ_c$ on some machine
  $m_c$. It holds that,
   \begin{align*}
     \Pr[ o \in O_2 ] &= \Pr[ o \in \qsshort( \univ_c ) | o \in \univ_c ] \\
                    &= \Pr[ o \in \qsshort( \univ_c \cup \{ o \} )] \\
                    &= 1 - p_o. 
   \end{align*}
 
 Therefore,
  \begin{align*}
    % \ex{f(V)} &\ge \frac{1}{2}\left(\E[ f(S_1) ] + \E[ f(T) ]\right)\\
    \ex{f(V)} &\ge \frac{1}{2}\left(\E[ f(S_1') ] + \E[ f(T_1') ]\right)\\   
    &\ge \frac{1}{2}\left(\E[ f( O_1 )/4 ] + \E[ f( O_2 )/4 ]\right)\\
    &\ge \frac{1}{8}\left(F( \mathbf p ) + F( \mathbf{1}_O - \mathbf p ) \right)\\
    &\ge \frac{1}{8} F( \mathbf{1}_O ) = \frac{1}{8} \opt,\numberthis \label{inq:dash-lovasz-2}
  \end{align*}
 where 
 % inequality \ref{inq:LinSeq2} follows from Lemma \ref{lemma:LinSeq} and 
 inequality \ref{inq:dash-lovasz-2} follows from Lemma \ref{lemma:lovasz}
 and $F$ is convex.
 % \\ \\
 % By Lemma \ref{lemma:LinSeq} , we have,
 % \begin{align*}
 %   \ex{f(V')} &\ge \frac{1}{\left(2 + \frac{2\epsi(1-\epsi)}{(1-\epsi)(1-2\epsi)}\right)}\ex{f(V)}  \\
 %   & \ge \frac{1}{8\cdot \left(1+\frac{\epsi(2-\epsi)}{(1-\epsi)(1-2\epsi)}\right)} \opt 
 %     % \numberthis \label{inq:lemma-9-1}
 %   \end{align*}
  % where Inequality \ref{inq:lemma-9-1} follows from Lemma \ref{lemma:9} and $\ex{f(V)} = \frac{1}{4}$
  
\subsection{Post-Processing}
In this algorithm, we employ a simple post-processing procedure.
Since $T_1'$ is an $1/2-$approximation of $T_1$ with size $k$,
$T_1'$ is also an $1/2-$approximation of the instance $(f,k)$ on
the ground set $T_1$.
By running any linear time algorithm that has a better approximation ratio
on $T_1$,
we are able to boost the objective value returned by the algorithm
with the same theoretical guarantees.
By Theorem~\ref{thm:tg}, \tg achieves $(1-1/e-\epsi)-$approximation
in linear time with a guess on the optimal solution.
Therefore, with input $\alpha=1/2$ and $\Gamma=f(T_1')$,
utilizing \tg can effectively enhance the objective value.
% \color{black}

\section{Towards Larger $k$: A Memory-Efficient, Distributed Framework (\mg)} \label{sec:MG}
% \textbf{Motivation.} \ \textit{Since the size of the groundset $S$ accumulated on the primary machine follows $|S| \propto k \cdot|M|$: large distributed setting and/or large $k$ values could lead to memory overload on the primary machine, hence limiting the practicality of MR algorithms. Thus we ask: Can MR algorithms eliminate this restriction with minimal loss in approximation and minimal change in their design?}
%\textbf{Motivation.} \ \textit{As the ground set $S$ accumulated on the primary machine after a MR round follows $|S| \propto k \cdot \ell$: large distributed setups or large $k$ values could lead to memory overload on the primary machine,  limiting the practicality of MR algorithms. Thus we ask: Can MR algorithms eliminate this restriction with minimal loss in approximation and minimal change in their design?}
In this section, we propose a general-purpose plug-in framework for distributed algorithms, \mgfull (\mg, Alg. \ref{alg:MetaDASH}). \mg increases
the largest possible constraint value from $O( n / \ell^2 )$ to $O(n / \ell)$ in the value oracle model. Under some
additional assumptions, we remove all restrictions on the constraint value. 

% \textbf{Description.}
% In MR Model, a dataset of size $n$ is distributed on $\ell$ machines, each with memory size of $\Psi$ elements. The total memory of the machines is constrained to be $\Psi\cdot\ell = O(n)$. After each round of computation, a machine may send $O(\Psi)$ amount of data to the primary machine. However, to avoid memory overload, MR model algorithms cannot process the objective for $k \ge \frac{\Psi}{\ell}$. \mg (Alg. \ref{alg:MetaDASH}) is a greedy plug-in framework for MR model algorithm that runs a $\gamma$-approximate MR algorithm \alg iteratively over $m$ iterations with a smaller cardinality constraint of $k' = k/m$. As a result, \alg can solve any instance of $k$ with an approximation of $(1 - e^{-\gamma})$. The analysis of \mg is provided in the Appendix~\ref{sec:MG_proof}.

\def\NoNumber#1{{\def\alglinenumber##1{}\State #1}\addtocounter{ALG@line}{-1}}
\begin{algorithm}[t]
\caption{$\mg (f, k, M, \alg, \mathbf q )$}
\label{alg:MetaDASH}
\begin{algorithmic}[1]
 \State {\bfseries Input:} evaluation oracle $ f:2^{\mathcal N} \to \mathbb{R}$, constraint $k$, available machines $M \gets \{1, 2,...,\ell\}$, MR algorithm \alg, random bits $\mathbf q$
 % \REPEAT
 % \State $n \gets |\mathcal N|$, $c(k') \gets$ Memory occupied by $k'$
  \State $n \gets |\mathcal N|$, $\Psi \gets$ Memory capacity (\# elements) of primary machine
  % , \alg $\gets$ $\gamma$-approximate MR algorithm
 % \State $c(k') \gets$ Memory occupied by $k'$
 \State Choose $k' \gets \max \{k' \in \mathbb{N} : k' \le \frac{\Psi}{\ell}\}$
 % \NoNumber{\ \ \ \ \ where $\beta$ is a constant parameter}
 \State $m \gets \lceil k / k' \rceil$, $(\mathbf q_1,\ldots, \mathbf q_m) \gets \mathbf q$
 % \State $k_{|M|} \gets k / |M|$
 \For{$i \gets 1 \ to \ m$  }\label{line:iterDASH}
   \State Let $g(\cdot) \gets f(\cdot \cup S_i) - f(S_i)$
   % \State $k' \gets \min\{k', k - |S_i|\}$
   \State $A_i \gets \alg (g, M, \min\{k', k - |S_i|\}, \mathbf{q}_i )$
   \State $S_{i+1} \gets S_i \cup A_i$

 \EndFor
 \State \textbf{return} $S_m$
 
 % \UNTIL{$noChange$ is $true$}
\end{algorithmic}
\end{algorithm}

% We provide evaluation results of \mg in Appendix~\ref{app:mg} (Fig. \ref{fig:ExpMG}) 
% to compare the performance of \mg to \dsb and \rgreedy. 
% The experiment set investigates whether there is a loss in solution 
% value of \mg in comparison to the algorithm it runs iteratively (\dsb in this case).

As discussed in Section \ref{intro}, 
the $k$ value is limited to a fraction of the machine memory
for MapReduce algorithms: $k \le O( n / \ell^2 )$,
since those algorithms need to merge a group of solutions and pass
it to a single machine. \mg works around this limitation as follows:
\mg can be thought of as a greedy algorithm
that uses an approximate greedy selection through $\alg$.
One machine manages the partial solutions $\{ S_i \}$, built up
over $m$ iterations. 
In this way, each call to $\alg$ is within the constraint
restriction of \alg, \ie $O( n / \ell^2 )$, but a larger solution
of up to size $O(n / \ell )$ can be constructed.

The restriction on $k$ of \mg of $O(n / \ell)$ comes from passing the data of the current
solution to the next round.
Intuitively, if we can send some auxiliary information about the function
instead of the elements selected, the $k$ value can be unrestricted.
\begin{assume}\label{assume:1}
  Let $f$ be a set function with ground set $\univ$ of size $n$. If for all $S \subseteq \univ$,
  there exists a bitvector $\mathbf v_S$, such that
  the function $g( X ) = f( S \cup X ) - f(S)$ can be
  computed from $X$ and $\mathbf v_S$, then $f$ satisfies Assumption~\ref{assume:1}.
\end{assume}
We show in Appendix~\ref{appendix:obj} that all four applications evaluated in Section \ref{sec:exp}
satisfy Assumption~\ref{assume:1}. As an example, consider MaxCover,
which can be expressed as $f(S) = \sum_{i \in \mathcal N}f_i(S)$,
where $f_i(S) = \mathbf{1}_{\{i \text{ is covered by } S\}}$.
Let $g_i(X) = f_i(S \cup X) - f_i(S)$,
and $\mathbf v_S = (f_1(S), \ldots, f_n(S))$.
Then, $f_i(S \cup X) = \mathbf{1}_{\{i \text{ is covered by } S\}} \lor 
\mathbf{1}_{\{i \text{ is covered by } X\}}$,
where the first term is given by $\mathbf v_S$ and the second term
is calculated by $X$.
Therefore, since $g(X) = \sum_{i \in \mathcal N} g_i(X)$,
$g(X)$ can be computed from $X$ and $\mathbf v_S$.

% \begin{restatable}{theorem}{metagreedy}
%   \label{thm:metagreedy}
%   Let ($f,k$) be an instance of SMCC. If
%   \mg can run on this instance,
%   then $\ex{ f(S_{m} )} \ge (1-e^{-\gamma})\opt$, 
%   where $\gamma$ is the expected approximation ratio of \alg.
%   If $\Psi \ge 2n/\ell$, then \mg can run
%   for all $k \le n / (\ell - 1)$.
%   Alternatively,
%   suppose each machine has enough space to store a bitvector of length $n$ in addition
%   to its elements of the ground set, and that
%   $f$ satisfies Assumption~\ref{assume:1}. Then,
%   \mg can run for all $k \le n$. 
% \end{restatable}
\begin{restatable}{theorem}{metagreedy}
  \label{thm:metagreedy}
  Let ($f,k$) be an instance of SMCC distributed over $\ell$ machines. 
  For generic objectives, where the data of current
  solution need to be passed to the next round,
  $k \le \min\{\frac{\ell\psi-n}{\ell-1}+1, \psi-\ell+1\}$;
  for special objectives, where only one piece of compressed data
  need to be passed to the next round,
  $k \le n$.
  Let $S_{m}$ be the set returned by \mg . 
  Then 
  $\ex{f(S_{m})} \ge (1-e^{-\gamma})\opt$, 
  where $\gamma$ is the expected approximation of \alg.
\end{restatable}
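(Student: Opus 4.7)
The proof splits naturally into two parts: the approximation guarantee and the memory feasibility. For the approximation ratio, my plan is to mimic the classical block-greedy recurrence. Define $\hat f_i(T) = f(S_i \cup T) - f(S_i)$, which is monotone submodular whenever $f$ is. Fix an optimal solution $O$ with $|O| \le k$ and partition it arbitrarily into $m = \lceil k/k' \rceil$ blocks $O_1, \ldots, O_m$ each of size at most $k'$, so each $O_j$ is a feasible solution for the sub-instance $(\hat f_i, k')$ handed to \alg. Since \alg is a $\gamma$-approximation in expectation on that sub-instance,
\begin{align*}
\ex{\hat f_i(A_i) \mid S_i} \;\ge\; \gamma \cdot \max_j \hat f_i(O_j) \;\ge\; \frac{\gamma}{m}\sum_{j=1}^{m} \hat f_i(O_j) \;\ge\; \frac{\gamma}{m}\hat f_i(O),
\end{align*}
where the last step uses submodularity of $\hat f_i$. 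Monotonicity of $f$ gives $\hat f_i(O) \ge \opt - f(S_i)$, so $\ex{f(S_{i+1}) - f(S_i) \mid S_i} \ge (\gamma/m)(\opt - f(S_i))$.

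Unfixing $S_i$ and unrolling the recurrence yields
\begin{align*}
\opt - \ex{f(S_m)} \;\le\; \left(1 - \frac{\gamma}{m}\right)^{m}\!\opt \;\le\; e^{-\gamma}\opt,
\end{align*}
which rearranges to $\ex{f(S_m)} \ge (1 - e^{-\gamma})\opt$, as required. The only subtlety here is that the last call uses size budget $k - |S_i|$ rather than $k'$; this still lets us cover the last block $O_m$ of size at most $k'$ because the remaining budget is at least $k - (m-1)k' \ge k'$ when $k$ is not a multiple of $k'$, and otherwise equals $k'$, so the approximation step goes through unchanged.

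For the memory feasibility, the plan is a direct counting argument. At the start of round $i$ each worker machine must hold its share of the ground set together with the accumulated solution $S_i$; a worker therefore needs at most $|S_i| + \lceil (n - |S_i|)/\ell \rceil \le \Psi$, while the primary machine aggregates the $\ell$ partial solutions along with $S_i$ and needs at most $|S_i| + k'\ell$. Writing $|S_i| = k - x$ for some $1 \le x \le k$ and substituting $k' \le \Psi/\ell$ gives the two inequalities $k \le (\ell\Psi - n)/(\ell-1) + 1$ and $k \le \Psi - \ell + 1$. Plugging $\Psi \ge 2n/\ell$ into both bounds (and using $\ell \le n^{1-c}$ so that $\ell \le n/\ell$) yields $k \le n/(\ell-1)$. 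Under Condition 1 the message forwarded between rounds is just the bitvector $\mathbf v_{S_i}$ of length $n$ rather than $S_i$ itself, so the worker constraint relaxes to $n + \lceil n/\ell \rceil$ of total storage and the primary constraint to $n + k'\ell$, both of which are covered by the stated hypothesis; hence $k$ can range up to $n$.

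\textbf{Main obstacle.} The approximation-ratio half is routine given the framework in \cite{chen2021best,barbosa2015power,kazemi2021regularized}; the only care needed is that \alg's randomized $\gamma$-approximation guarantee composes cleanly across rounds, which it does because each round is analyzed conditionally on $S_i$ and the expectation is taken over \alg's independent randomness. The heavier bookkeeping is the memory accounting, in particular the primary-machine term $k'\ell$: one must be careful that the $m$ in the denominator of $k'\ell/m$ coming from "$k\ell/m \le \Psi$" really is the same $m = \lceil k/k' \rceil$ used to set $k'$, so that the two memory inequalities together give the clean bound $k \le n/(\ell-1)$ under $\Psi \ge 2n/\ell$. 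Verifying the Condition 1 alternative is essentially an observation, since the per-round incremental oracle $g$ is fully determined by $\mathbf v_{S_i}$ together with the local ground-set shard.
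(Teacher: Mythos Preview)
Your proposal is correct and follows essentially the same route as the paper: the approximation guarantee is obtained via the block-greedy recurrence (partition $O$ into $m$ blocks, use the $\gamma$-approximation of \alg on each sub-instance, then unroll $\ex{f(S_{i+1})-f(S_i)\mid S_i}\ge\tfrac{\gamma}{m}(\opt-f(S_i))$), and the memory bounds come from the same per-machine counting of $|S_i|$ plus either the ground-set shard or the aggregated $k'\ell$ solutions. Your write-up is slightly more explicit in places (the $\max_j \ge \tfrac{1}{m}\sum_j$ step and the last-round budget check), but the argument is identical.
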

\begin{proof}
  Let $O$ be an optimal
  and $O_1, O_2, ..., O_{m}$ be a partition of $O$ into $m$ pieces, each of size $\le k'$. Also
  $\forall_i \hat{f}_i$ is monotone SMCC since $f$ is monotone SMCC.

  \begin{align*}
    \ex{f(S_{i+1}) - f(S_i)|S_i} & = \ex{f(S_i \cup A_i) - f(S_i)|S_i} \nonumber \\
                        & = \ex{\hat{f}_i(A_i)|S_i} \nonumber \\
                        & \overset{(a)}{\ge} \frac{1}{m} \sum_{j=1}^{m} \gamma \hat{f}_i(O_j) \nonumber \\
                        & \ge \frac{\gamma}{m} \hat{f}_i(O) \nonumber\\
                        & = \frac{\gamma}{m} (f(S_i \cup O) - f(S_i)) \nonumber \\
                        & \ge \frac{\gamma}{m} (OPT - f(S_i)) \nonumber \label{eq:mg},  \nonumber
  \end{align*}
  where Inequality (a) follows from  $\hat{f}_i$ is SMCC; 
  \alg is $\gamma$-approximation.
  Unfix $S_i$, it holds that
  \begin{align*}
    OPT - \ex{f(S_{i+1})} & \le  \left( 1 - \frac{\gamma}{m}\right) [OPT - \ex{f(S_i) }] \nonumber \\
   & \le \left( 1 - \frac{\gamma}{m}\right) ^{i} [OPT - f(\emptyset)] \nonumber \\
   & = \left( 1 - \frac{\gamma}{m}\right) ^{i} \cdot OPT \nonumber \\ \nonumber \\
      \therefore \ex{f(S_{m})} & \ge \left[ 1 - (1 - \frac{\gamma}{m})^m\right] \cdot OPT \nonumber \\                       
      & \ge (1 - e^{-\gamma}) \cdot OPT 
  \end{align*}

  To analyze the memory requirements of \mg,
  consider the following. To compute
  $g(\cdot) \gets f(\cdot \cup S_i) - f(S_i)$,
  the current solution $S_i$ need to be passed
  to each machine in $M$ for the next call of \alg.
  Suppose $|S| = k - x$ where $1\le x \le k$.
  The size of data stored on any non-primary machine in cluster $M$,
  as well as on the primary machine of $M$ can be bounded as follows:
  \begin{equation*}
  \begin{aligned}
  &k-x+(n-k+x)/\ell \le \Psi\\
  &k-x+(k\ell)/m \le \Psi
  \end{aligned}
  \, \Rightarrow\,
  \begin{aligned}
  &k \le (\ell \Psi-n)/(\ell-1)+1\\
  &k \le \Psi - \ell +1
  \end{aligned}.
  \end{equation*}
  Therefore, if $\Psi \ge 2n/ \ell$, \mg can
  run since $\ell \le n / \ell$ in our MapReduce model.
  Under the alternative assumption, it is clear
  that \mg can run for all $k \le n$. 
\end{proof}

% \begin{proof}
%   The proof of the ratio is in Appendix G. 
% To analyze the memory requirements of \mg,
% consider the following. To compute
% $g(\cdot) \gets f(\cdot \cup S_i) - f(S_i)$,
% the current solution $S_i$ need to be passed
% to each machine in $M$ for the next call of \alg.
% Suppose $|S| = k - x$ where $1\le x \le k$.
% The size of data stored on any non-primary machine in cluster $M$,
% as well as on the primary machine of $M$ can be bounded as follows:
% \begin{equation*}
% \begin{aligned}
% &k-x+(n-k+x)/\ell \le \Psi\\
% &k-x+(k\ell)/m \le \Psi
% \end{aligned}
% \, \Rightarrow\,
% \begin{aligned}
% &k \le (\ell \Psi-n)/(\ell-1)+1\\
% &k \le \Psi - \ell +1
% \end{aligned}.
% \end{equation*}
% Therefore, if $\Psi \ge 2n/ \ell$, \mg can
% run since $\ell \le n / \ell$ in our MapReduce model.
% Under the alternative assumption, it is clear
% that \mg can run for all $k \le n$.
% \end{proof}

\section{Empirical Evaluation} \label{sec:exp}

This section presents empirical comparison of \dsb, \dat, \dagh, \dlsshort
% \color{blue} and \dlsshort \color{black}
to the state-of-the-art 
distributed algorithms. Distributed algorithms 
include \rgreedy (\rgshort) of \citet{barbosa2015power}
and \ddist (\ddshort) of 
\citet{kazemi2021regularized}.
%and \bic (\bicshort) of 
%\citet{Epasto2017}. 
% \textbf{Adaptive} algorithm comparison includes 
% state-of-the-art \lspgb of \citet{chen2021best} on centralized 
% data and $(\lspgb)^{\mathcal H}$ (see Table \ref{table:Exp2}) on distributed data.

% %%\vspace{0.5mm}
% \textbf{Parameters.} \ 
% Error $\epsi$ for 
% all algorithms is set to 0.05. \\
\begin{table}[]
  \centering
  \begin{threeparttable}
      \caption{Environment Setup} \label{table:datasets}
    \begin{tabular}{l|C{2.5cm}|C{2.5cm}|C{2.5cm}}
      \hline
      \textbf{Experiment}                       & \textbf{Number of Machines ($\ell$)\tnote{*}} & \textbf{Number of Threads Per Machine ($\varUpsilon$)} & \textbf{Dataset Size (n)} \\ 
      \hline
      Experiment 1 (Fig.~\ref{fig:Fig1})        & 8                                    & 4                                                      & 10K - 100K                \\
      Experiment 2 (Fig.~\ref{fig:FigExp1J1})   & 64                                   & 32                                                     & 1M - 5M                   \\
      Experiment 3 (Fig.~\ref{fig:scaleB})      & 8                         & 4                                                      & 3M                        \\
      Experiment 3 (Fig.~\ref{fig:scaleC})      & 8                                & 4                                                      & 100K                      \\
      % Experiment 4 (Fig.~\ref{fig:FigExp1J1})   & 64                                   & 32                                                     & 1M - 5M                   \\
      Experiment 4 (Fig.~\ref{fig:FigExp2J1})   & 32                                   & 1                                                      & 100K                      \\ 
      \hline
    \end{tabular}
    \begin{tablenotes}\footnotesize
      \item[*] Prior MR (greedy) algorithms were assessed using $\ell \cdot \varUpsilon$ machines, utilizing each thread as an individual core. Thus, all algorithms were configured to use all available cores.
      \item[$\ddag$] We kept a variant of \dsbshort evaluated on 8 machines (with 4 cores each) as a baseline. All other algorithms use 2 machines (with 4 cores each).
    \end{tablenotes}
  \end{threeparttable}
\end{table}

% \textbf{Datasets.} Our datasets are categorized into two groups: 
% \ \textbf{Centralized Data} include datasets that are smaller than memory size of individual machines and
% range in size from 
% $n$= 10,000 to 100,000. These were included in the evaluation to enable the completion of expensive
% algorithms such as \ddshort, \dat, and \dagh. \textbf{Experiment 1} evaluates the algorithms
% on centralized data.
% \textbf{Decentralized Data} are large data files that are partitioned uniformly at random 
% and assigned to $\ell$ machines before evaluation.
% Algorithms return rows of data files as solutions during MR rounds. 
% Additional dataset details 
% are provided in Appendix H.
% Groundsets range from $n$= 50,000 to 3,072,441.
% \textbf{Experiment 2} evaluates the algorithms on decentralized data. 

% \color{blue}

\textbf{Environment.} Our experiments utilized diverse computational setups. Experiments 1, and 3 employed an eight-machine cluster, each with four CPU cores, totaling 32 cores. Experiment 2 was conducted on a larger 64-machine cluster, each featuring 32 CPU cores. Experiment 4 operated on a cluster of 32 single-core machines. Notably, in Experiment 1, our algorithms used $\ell = 8$ machines, while prior MapReduce (MR) algorithms employed $\ell = 32$, fully utilizing 32 cores. 
% In Experiment 2, algorithms utilized eight cores, and we included a baseline variant of \dsbshort assessed on eight machines, each with four cores, thus utilizing 32 cores in total. 
MPICH version 3.3a2 was installed on every machine, and we used the python library $mpi4py$ for implementing and parallelizing all algorithms with the Message Passing Interface (MPI). These algorithms were executed using the $mpirun$ command, with runtime tracked using $mpi4py.MPI.Wtime()$ at the algorithm's start and completion.

\textbf{Datasets.} 
% The datasets used in our study are divided into two categories: 
% \textbf{Centralized Data} refers to datasets that are smaller than the memory size of 
% individual machines. 
Table \ref{table:datasets} presents dataset sizes for our experiments. In Experiment 1 and 3 (Fig.~\ref{fig:scaleC}), we assess algorithm performance on small datasets, varying from $n$ = 10,000 to 100,000. These sizes enable evaluation of computationally intensive algorithms, such as \ddshort, \dat, and \dagh. Experiment 2, 3 (Fig. \ref{fig:scaleB}), and 4 focus on larger datasets, ranging from $n$ = 50,000 to 5 million.

% Table \ref{table:datasets} provides dataset size details for the experiments. In experiment 1 and 3 (Fig.~\ref{fig:scaleC}), we assess the performance of algorithms 
% on small data, ranging in size from $n$ = 10,000 to 100,000. These smaller 
% datasets were included to facilitate the evaluation of computationally expensive algorithms, 
% such as \ddshort, \dat, and \dagh. 
% We evaluate experiment 2, 3 (Fig. \ref{fig:scaleB}), 4 and 5 on large data files ranging in size from $n$ = 50K to 5M. 

% Experiment 4 focuses on evaluating the efficiency of algorithms, 
% such as \dsb, \dlsshort and \rgreedy, 
% on a large cluster, with groundsets ranging from $n$ = 1,000,000 to 5,000,000.
% \textbf{Decentralized Data} refers to large data files that are randomly partitioned and 
% assigned to $\ell$ machines for evaluation. In experiment 2, 
% we evaluate the algorithms on 
% decentralized data, ranging from $n$ = 50,000 to 3,072,441. 
% Further details can be found in Table~\ref{table:datasets}.
% \color{black}

\textbf{Applications.} We evaluated 
image summarization (ImageSumm), influence maximization (InfMax), 
revenue maximization (RevMax) and maximum coverage (MaxCover). 
Details are provided in Appendix \ref{app:app}.

\revone
\textbf{Experiment Objectives.} The primary objective of our experiment set is to comprehensively evaluate the practicality of the distributed algorithms across varying cluster and dataset sizes. The specific objectives of each experiment are outlined below:

\begin{itemize}
\item Experiment 1\footnote[1]{Results published in Thirty-Seventh AAAI Conference on Artificial Intelligence, AAAI 2023}: Baseline experiment aimed at assessing the performance of the algorithms using small datasets and cluster setup.
\item Experiment 2: Assess the performance of the algorithms on large datasets and cluster setup.
\item Experiment 3$^*$: Investigate the influence of the number of nodes in a cluster on algorithm performance.
% \item Experiment 4: Compare the performance of \dlsshort against \dsb and \rgreedy to analyze its effectiveness.
\item Experiment 4: Examine the impact of increasing cardinality constraints on the performance of \mg.
\end{itemize}
\color{black}

\subsection{Experiment 1 - Comparative Analysis on Small Datasets}\label{subsec:exp1}
The results of Experiment 1 (Fig. \ref{fig:Fig1}) show that all algorithms provide similar solution
values (with \dat being a little worse than the others).
However, there is a large difference in runtime,
with \dsb the fastest by orders of magnitude. 
% in parallel runtime. 
% Across all instances of $k$ that \rgshort and \bicshort completes within $timeout$, \dsb is on average 300 and 700 times faster respectively; and maintains a 99\% solution value of both algorithms.
% \dagh provides 
% marginally better solution than \dsb in all applications but ImageSumm 
% while remaining 14 and 35 times faster than \rgshort and \bicshort; it 
% is also 130 times faster than its approximation equivalent \ddshort algorithm. 
The availability of only 4 threads per machine severely limits 
the parallelization of \dat, resulting in longer runtime;
access to $\log{}_{1+\epsi}(k)$ threads per machine 
should result in faster runtime than \dsb.

\begin{figure*}[h]
  \centering
  \setcounter{subfigure}{0}
        \subfigure[]{\includegraphics[height=0.175\textheight]{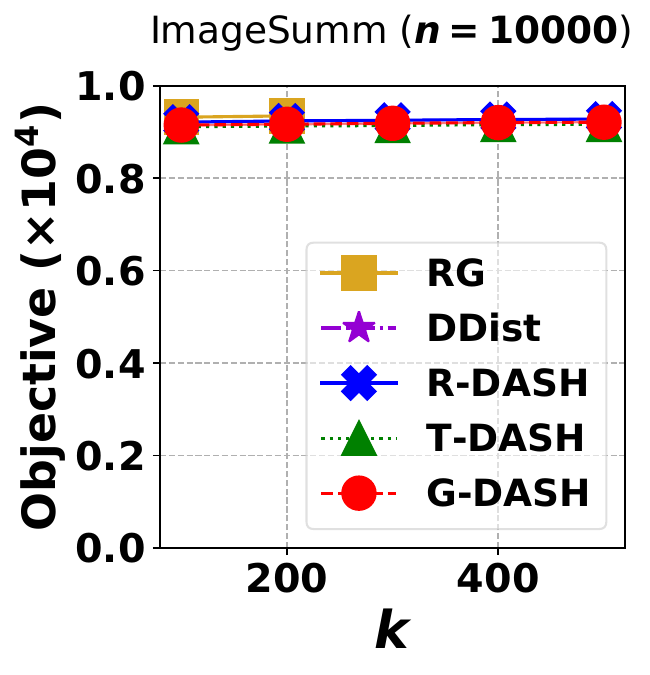}} %%\vspace{0.5\baselineskip} 
        \subfigure[]{\includegraphics[height=0.175\textheight]{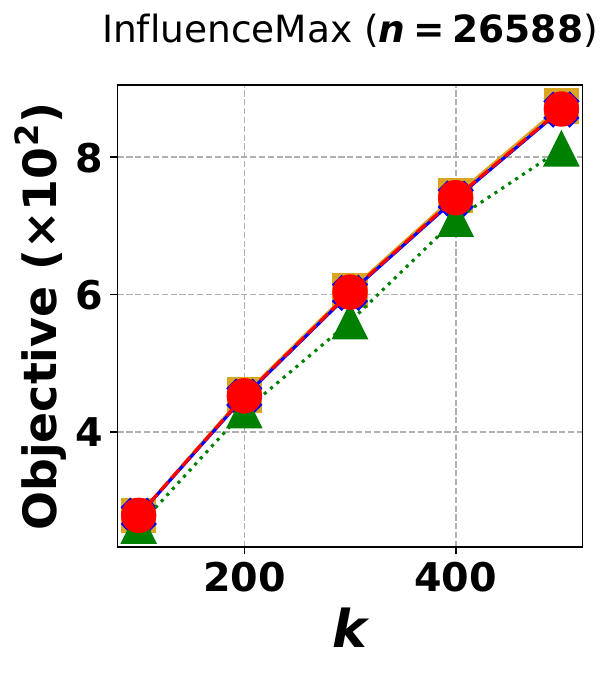}} %%\vspace{0.5\baselineskip}
        \subfigure[]{\includegraphics[height=0.175\textheight]{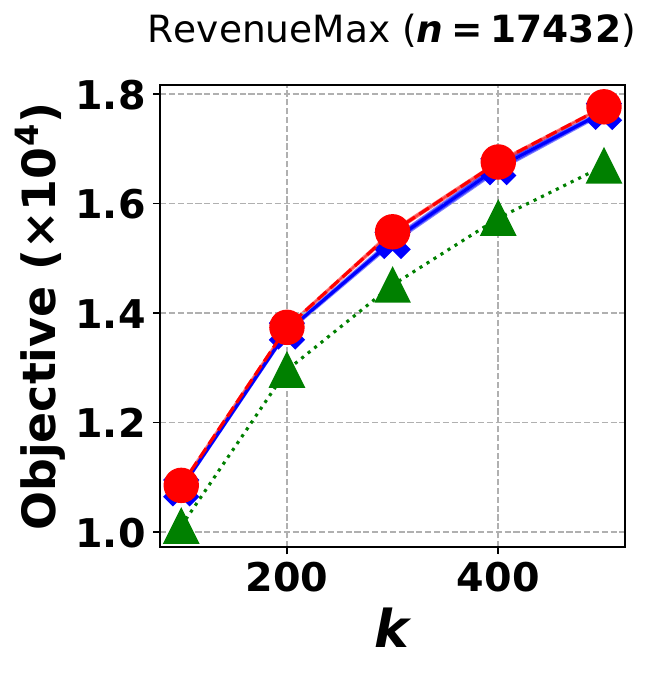}} %%\vspace{0.5\baselineskip} 
        \subfigure[]{\includegraphics[height=0.175\textheight]{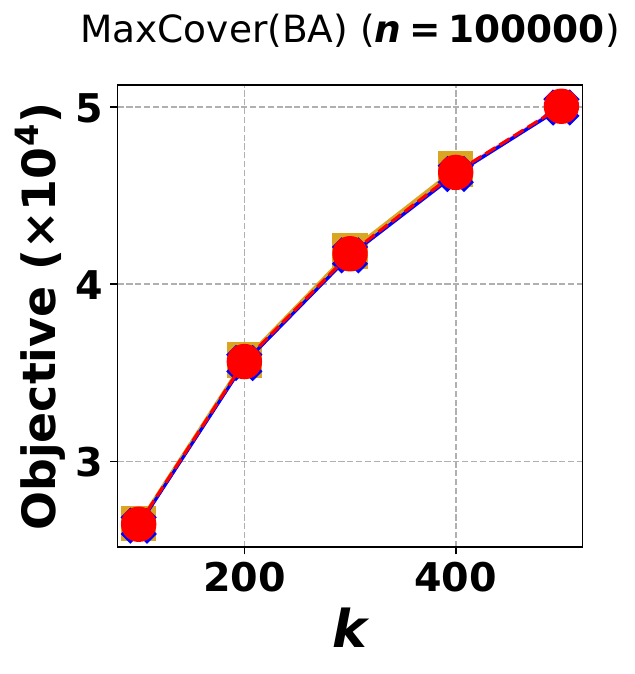}} %%\vspace{0.5\baselineskip}
        % \subfigure[]{\includegraphics[width=0.49\textwidth, height=0.165\textheight]{figs/Fig1/time-movies10K_exp1.pdf}} %%\vspace{0.15\baselineskip}
        \subfigure[]{\includegraphics[height=0.175\textheight]{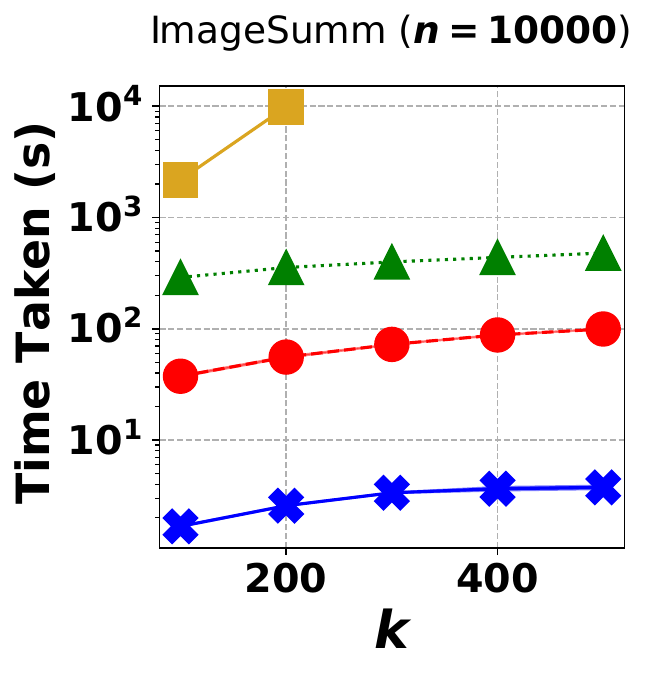}} %%\vspace{0.15\baselineskip} 
        \subfigure[]{\includegraphics[height=0.175\textheight]{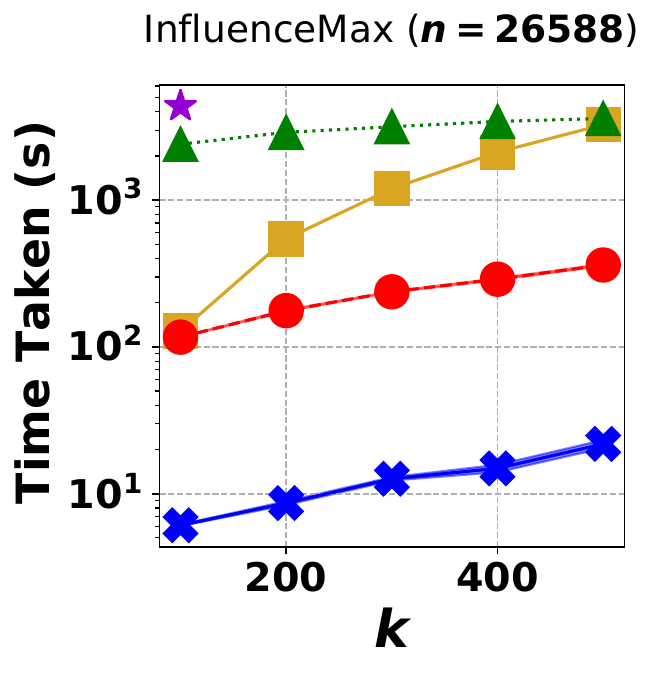}} %%\vspace{0.15\baselineskip}
       \subfigure[]{\includegraphics[height=0.175\textheight]{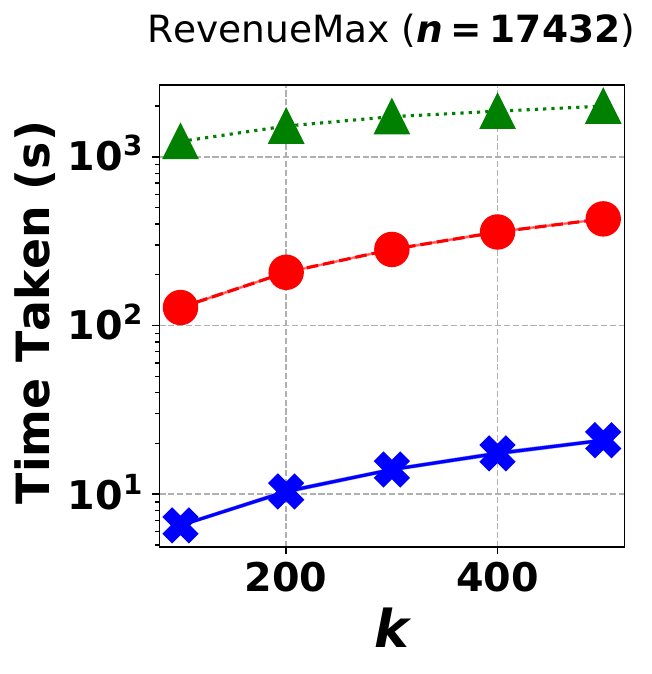}} %%\vspace{0.15\baselineskip} 
        \subfigure[]{\includegraphics[ height=0.175\textheight]{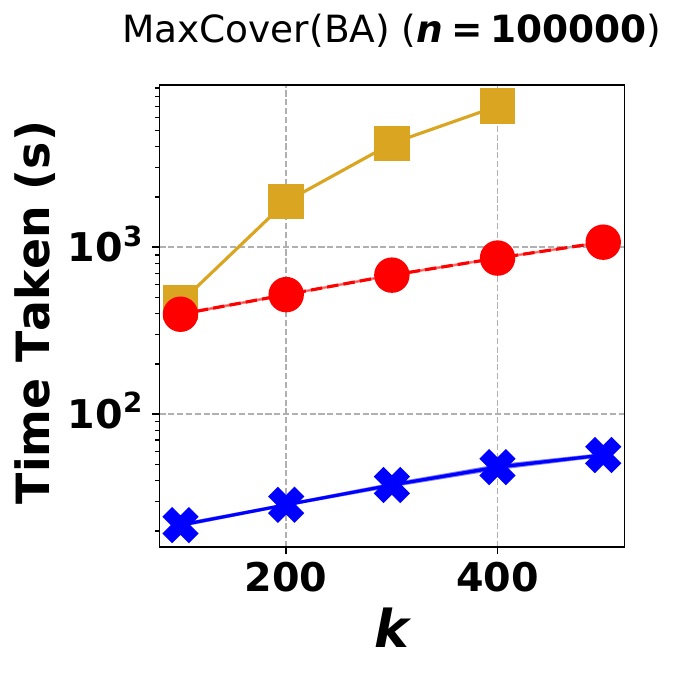}}
  \caption{Performance comparison of
distributed algorithms on 
ImageSumm, InfluenceMax, RevenueMax and MaxCover;  
\rgreedy (\rgshort) is run with \sg as the algorithm \alg \cite{barbosa2015power} to ensure 
the $\frac{1}{2}(1-1/e)$  ratio.
All \sg implementations used lazy greedy to improve the runtime.
$Timeout$ for each application: 6 hours per algorithm.} \label{fig:Fig1} 
\end{figure*}

\subsection{Experiment 2 - Performance Analysis of \dsbshort, \rgreedy and \dlsshort on a Large Cluster}\label{subsec:exp4}

\begin{figure*}[h]
  \centering
  \setcounter{subfigure}{0}
  \subfigure[]{\includegraphics[width=0.31\textwidth]{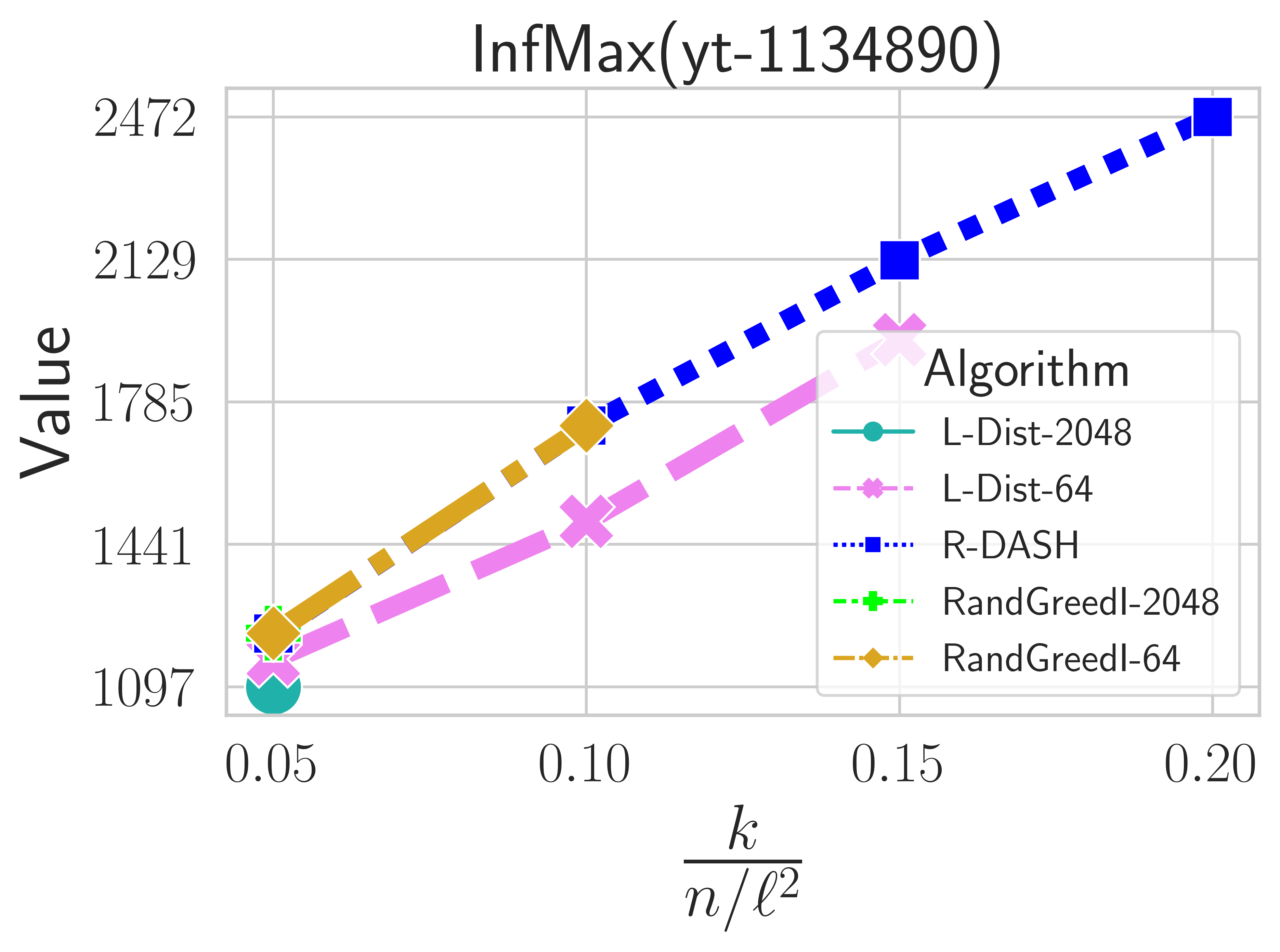} \label{fig:FigExp1-1J1}} %%\vspace{0.15\baselineskip} 
  \subfigure[]{\includegraphics[width=0.31\textwidth]{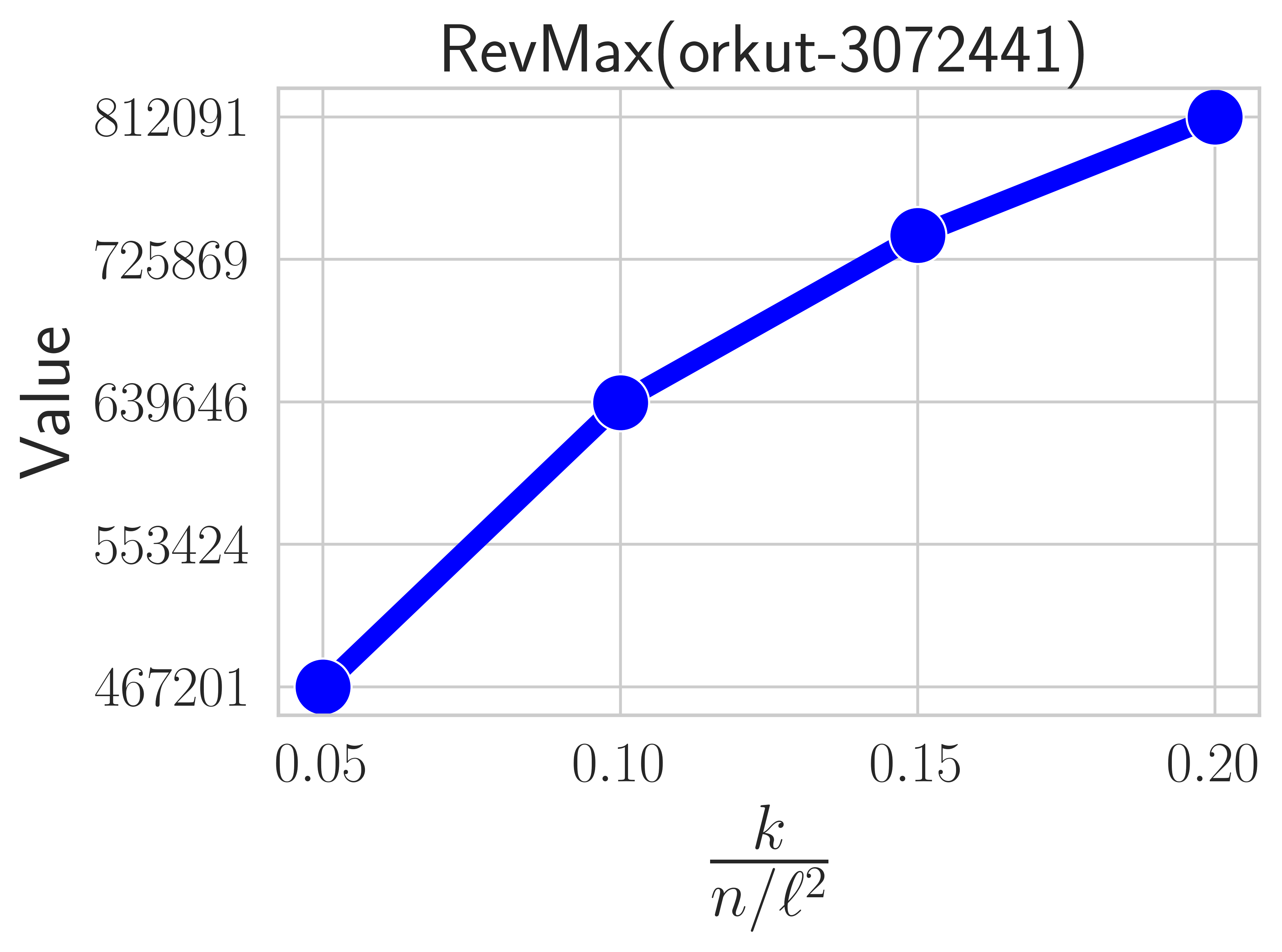} \label{fig:FigExp1-2J1}} %%\vspace{0.15\baselineskip}
  \subfigure[]{\includegraphics[width=0.31\textwidth]{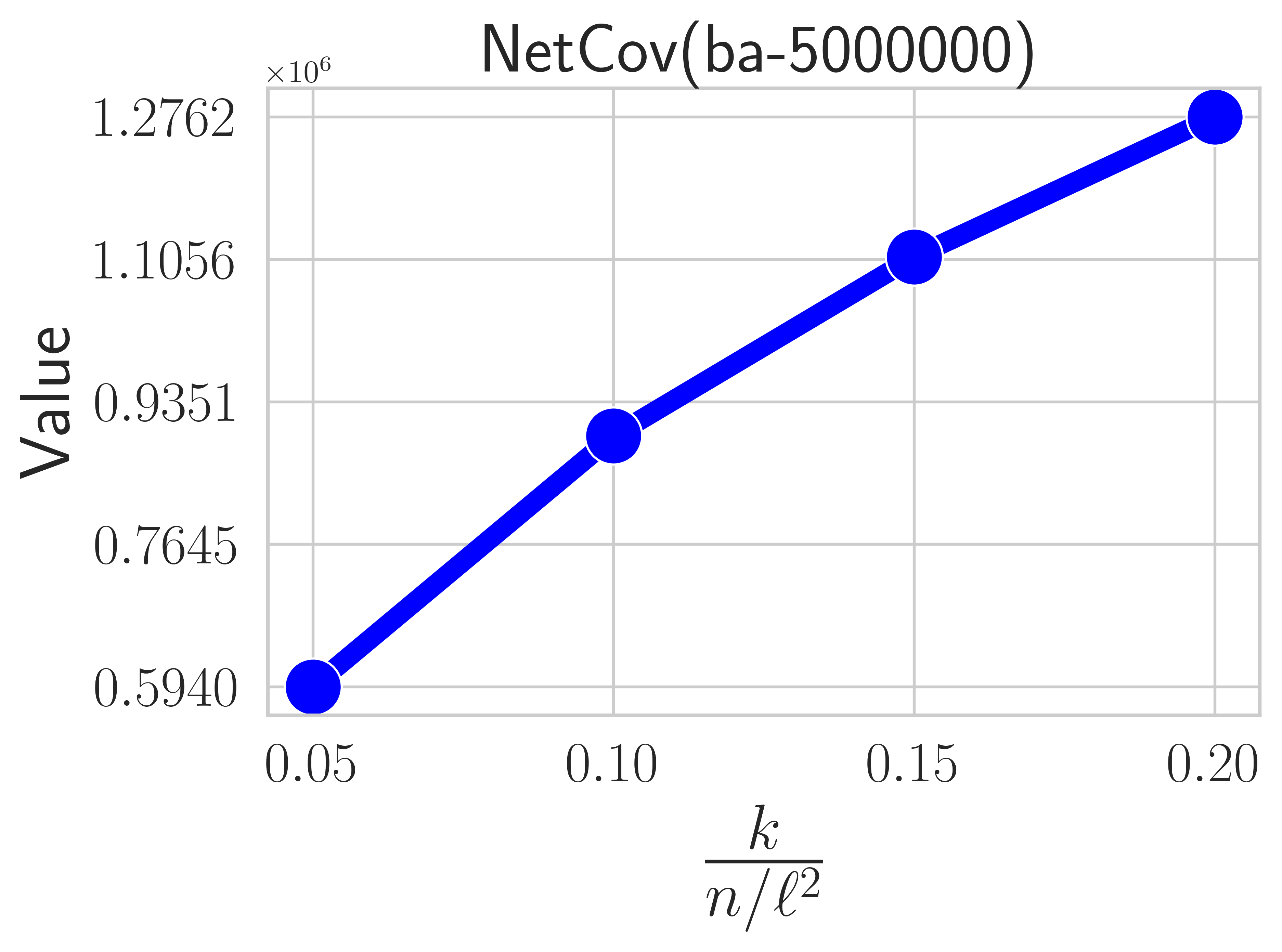} \label{fig:FigExp1-3J1}} %%\vspace{0.15\baselineskip}
\subfigure[]{\includegraphics[width=0.31\textwidth]{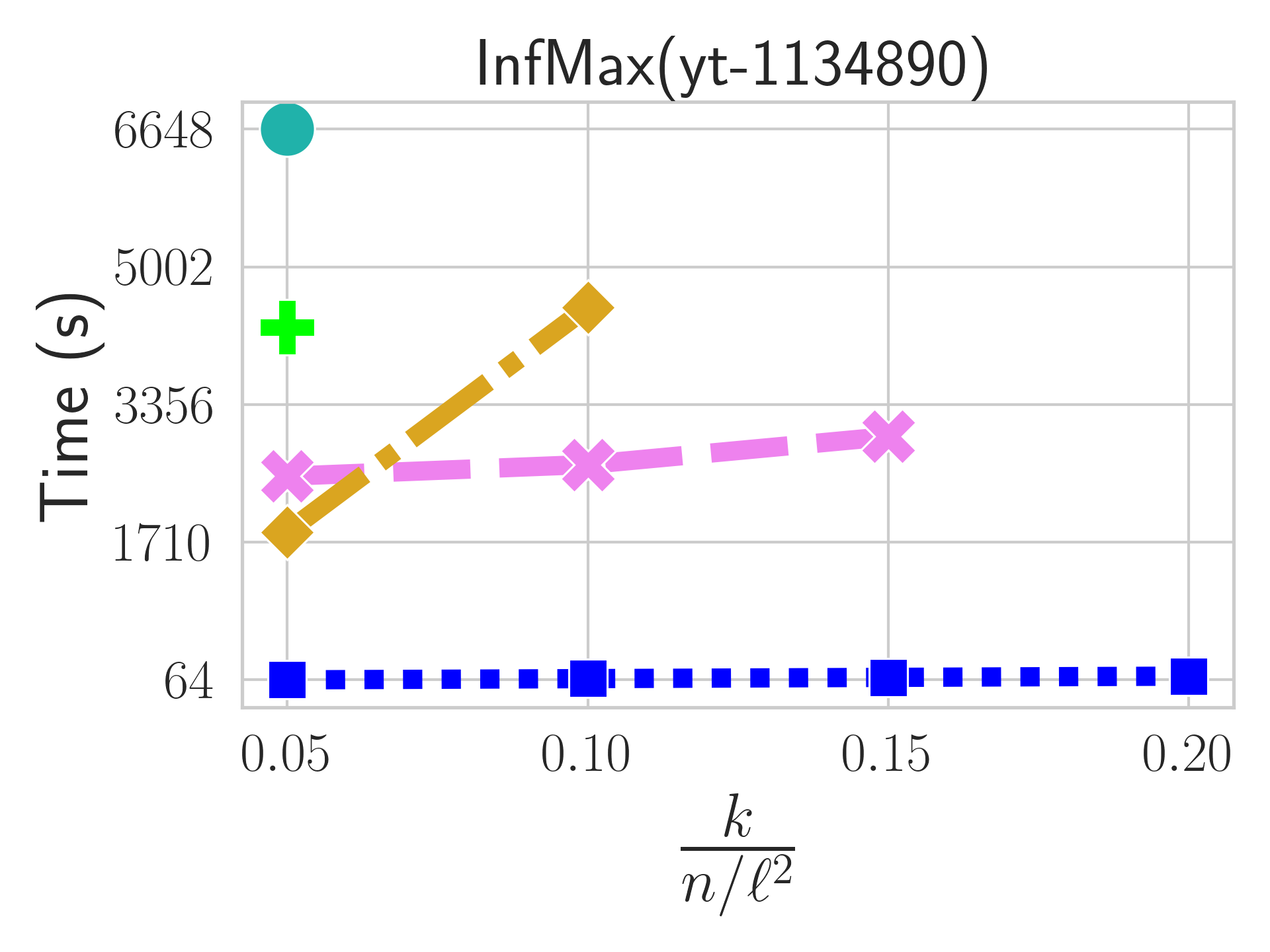} \label{fig:FigExp1-4J1}} %%\vspace{0.15\baselineskip} 
\subfigure[]{\includegraphics[width=0.31\textwidth]{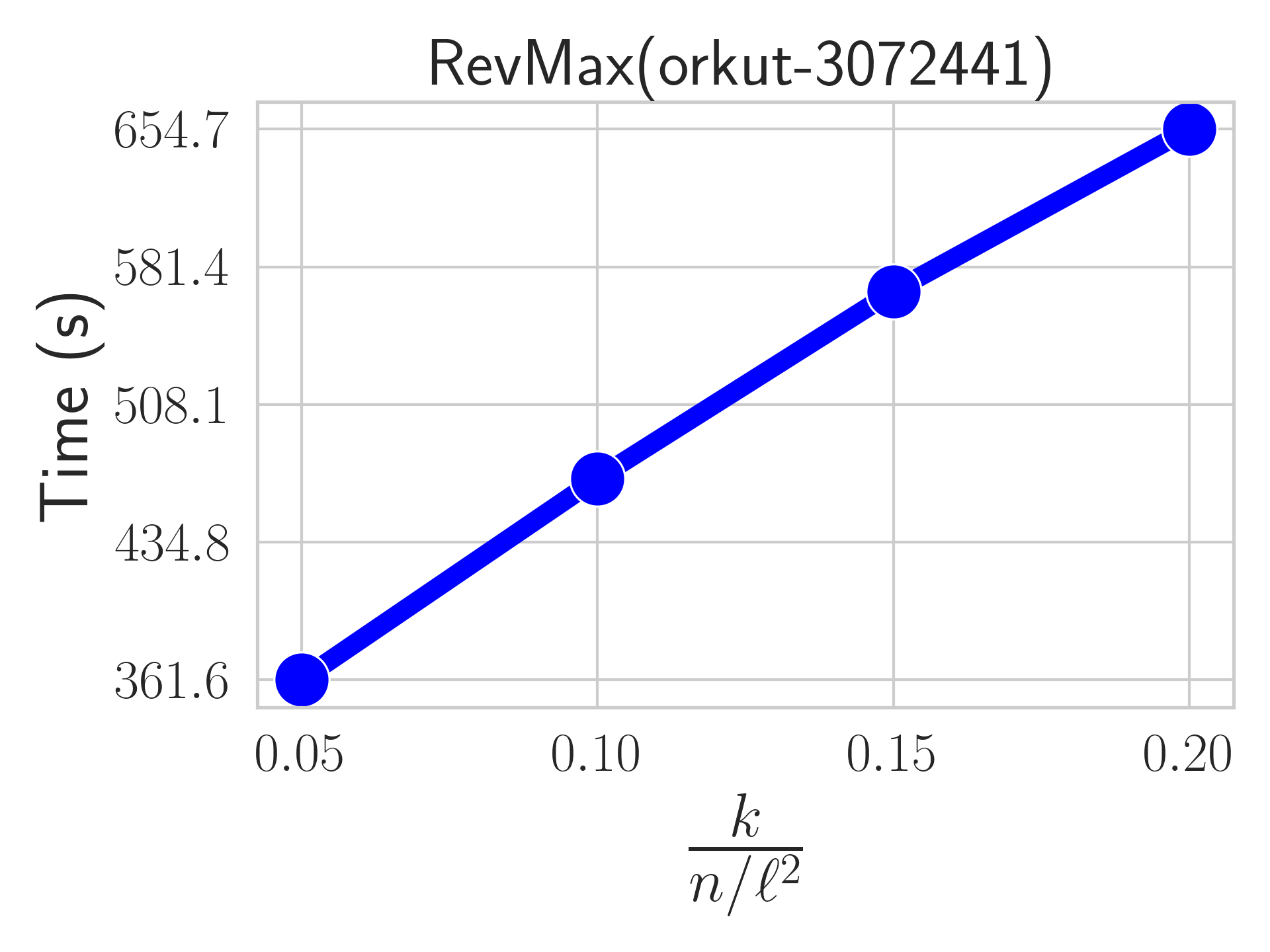} \label{fig:FigExp1-5J1}} %%\vspace{0.15\baselineskip}
\subfigure[]{\includegraphics[width=0.31\textwidth]{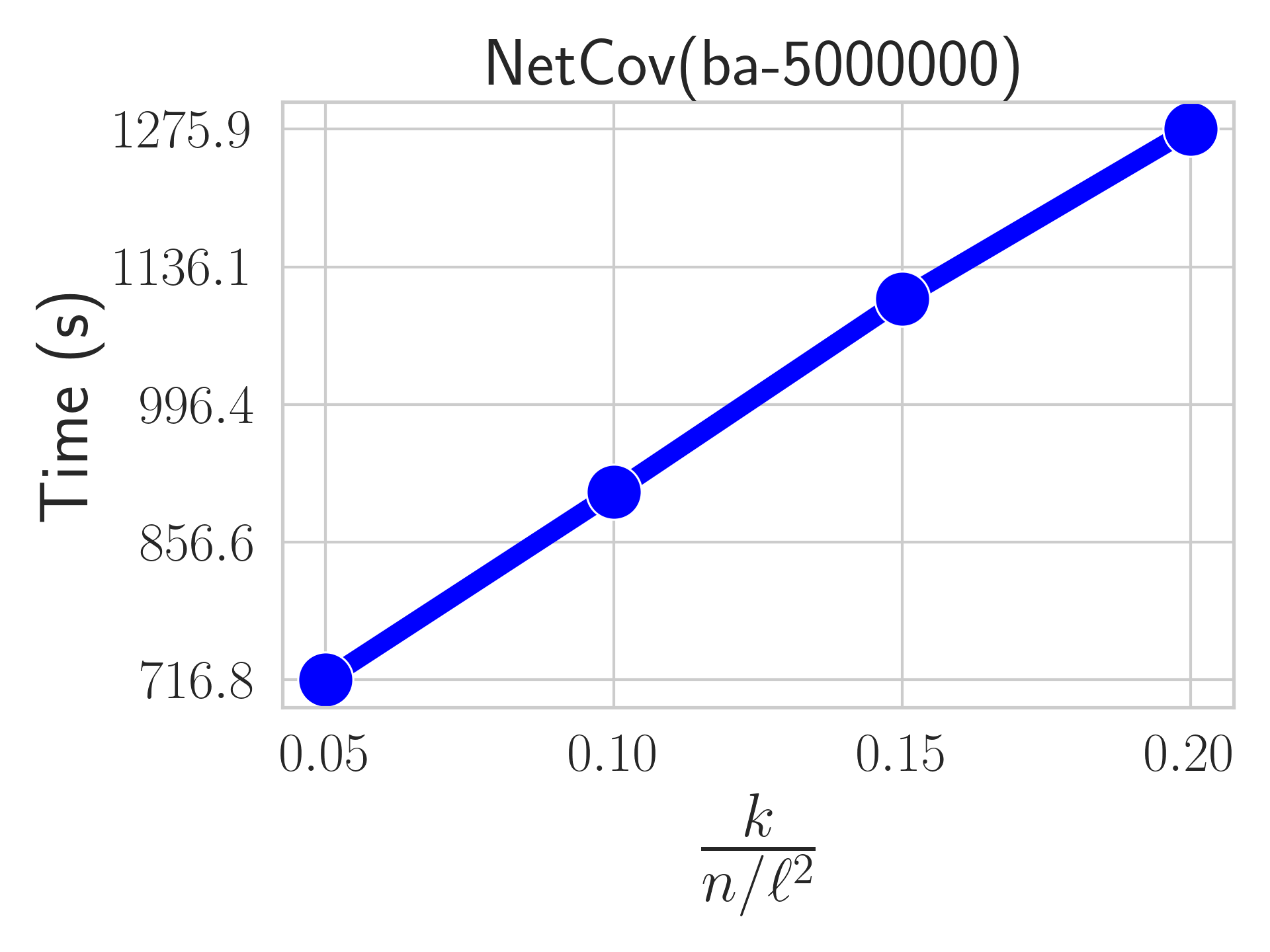} \label{fig:FigExp1-6J1}} %%\vspace{0.15\baselineskip}
  \caption{Empirical comparison of \dsbshort and \dlsshort. The plotted metrics are solution value (Fig. \ref{fig:FigExp1-1J1}-\ref{fig:FigExp1-3J1}) and runtime (Fig. \ref{fig:FigExp1-4J1}-\ref{fig:FigExp1-6J1}).} \label{fig:FigExp1J1}
\end{figure*}

% \subsubsection{Results}

This section presents a comparative analysis of \dsbshort, \dlsshort, and \rgreedy 
on a large 64-node cluster, each node equipped with 32 cores. 
We assess two versions of \dlsshort and \rgreedy, one with $\ell = 64$ 
and the other with $\ell = 2048$, with a predetermined time limit of 
3 hours for each application. The plotted results depict the instances 
completed by each algorithm within this time constraint.

In terms of solution quality, as depicted in Figure \ref{fig:FigExp1-1J1}, statistically
indistinguishable results, while \dlsshort exhibits an average drop of 9\% in solution quality 
across completed instances. Examining runtime (Figure \ref{fig:FigExp1-4J1}), \dsbshort consistently
outperforms \dlsshort and \rgreedy by orders of magnitude. As discussed in Experiment 3, for both
\dlsshort and \rgreedy, instances with fewer distributions ($\ell = 64$) display faster runtimes
compared to instances using a larger setup ($\ell = 2048$), attributed to $k$ values surpassing 
$\frac{n}{\ell^2}$ ($\simeq 0.27$ for InfMax). The subsequent experiment demonstrates how \mgshort 
effectively addresses this challenge. Comparing the fastest variant of each algorithm, we observe that 
for the smallest $k$ values, \rgreedy outperforms \dlsshort. However, as $k$ increases, \rgreedy's 
runtime linearly grows, while \dlsshort's runtime remains unaffected due to its linear time 
complexity, enabling \dlsshort to outperform \rgreedy for larger constraints.

% In Fig. \ref{fig:scaleB}, we show a linear speedup for \dsb with
% the number of machines $\ell$.
% Figure \ref{fig:scaleC} shows that even with enough available memory, 
% increasing $\ell$ can lead to worse performance,when $k > \frac{n}{\ell^2}$. 
% As illustrated in figure \ref{fig:scaleC}, 
% initially, $\rgreedy$ with $\ell = 32$
% is faster than $\rgreedy$ with $\ell = 8$, as expected. 
% However, once $k > \frac{n}{32^2}$, we see the relative performance of
% \rgreedy with $\ell = 32$ worsen rapidly. This can be attributed to \rgreedy's optimal performance at $k = \frac{n}{\ell^2}$, which degrades with increasing $k$.
% With only 1 thread on each machine, the total time of running \rgreedy 
% on $\ell$ machines can be calculated by the following two parts: 
% 1) The running time for one machine in the first MR round, 
% which is proportional to $(n/\ell-(k-1)/2)k$; 
% 2) The running time for the primary machine in the second MR round 
% (post-processing step), which is proportional to $(k\ell-(k-1)/2)k$. 
% Thus, the total running time is proportional to $nk/\ell + \ell k^2-k(k-1)$, 
% which gets the optimal performance when $\ell=\sqrt{n/k}$. 
% These results further justify why it is preferable to parallelize 
% within a machine rather than treat each processer as a separate machine 
% over which to distribute the data. Additionally in Experiment 5, we also demonstrate that when solving $k > \frac{n}{\ell^2}$, utilizing \mgshort can enable MR algorithms to produce solutions much faster with no drop in solution value. 

\subsection{Experiment 3 - Scalability Assessment} \label{subsec:exp3}
Figure \ref{fig:scaleB} illustrates a linear speedup for \dsb 
as the number of machines $\ell$ increases. Figure \ref{fig:scaleC} 
highlights an intriguing observation that, despite having sufficient 
available memory, increasing $\ell$ can result in inferior performance 
when $k > \frac{n}{\ell^2}$. Specifically, as depicted in Figure \ref{fig:scaleC}, 
we initially witness the expected faster execution of $\rgreedy$ with 
$\ell = 32$ compared to $\rgreedy$ with $\ell = 8$. However, once $k > \frac{n}{32^2}$, 
the relative performance of $\rgreedy$ with $\ell = 32$ rapidly deteriorates. 
This decline can be attributed to the degradation of $\rgreedy$'s optimal performance 
beyond $k = \frac{n}{\ell^2}$.

% \color{blue}
When running $\rgreedy$ with a single thread on each machine, 
the total running time on $\ell$ machines can be computed based on 
two components. First, the running time for one machine in the first MR round, 
which is proportional to $(n/\ell-(k-1)/2)k$. Second, the running time for the 
primary machine in the second MR round (post-processing step), which is proportional 
to $(k\ell-(k-1)/2)k$. Consequently, the total running time is proportional to $nk/\ell + \ell k^2-k(k-1)$. 
Optimal performance is achieved when $\ell=\sqrt{n/k}$, which justifies the preference 
for parallelization within a machine to maintain a lower $\ell$ rather than distributing 
the data across separate processors.

Furthermore, in Experiment 5 (Section~\ref{sec:Exp5}), we demonstrate that utilizing 
\mgshort enables MR algorithms to produce solutions much faster with no compromise 
in solution value, particularly when solving for $k > \frac{n}{\ell^2}$. These results 
provide further support for the advantage of incorporating \mgshort in achieving 
efficient and effective parallelization in MR algorithms.

\begin{figure}[h]
  \centering
    %%\vspace{0.5\baselineskip}
  %\subfigure[]{\includegraphics[width=0.22\textwidth, height=0.16\textheight]{}} \label{fig:scaleA} 
  \subfigure[]{\label{fig:scaleB} \includegraphics[height=0.19\textheight]{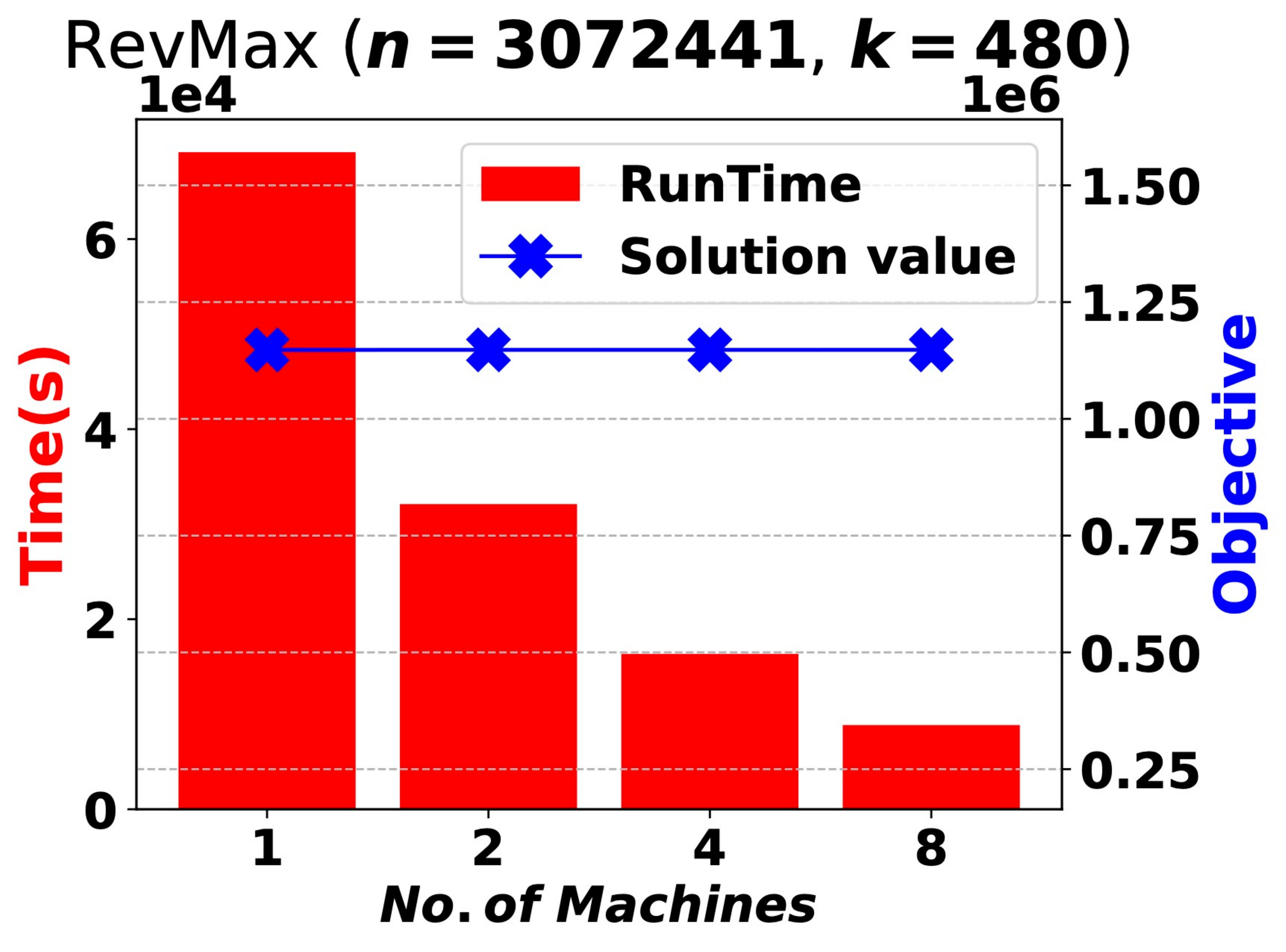}} \hspace{0.35em}
  \subfigure[]{\label{fig:scaleC} \includegraphics[height=0.19\textheight]{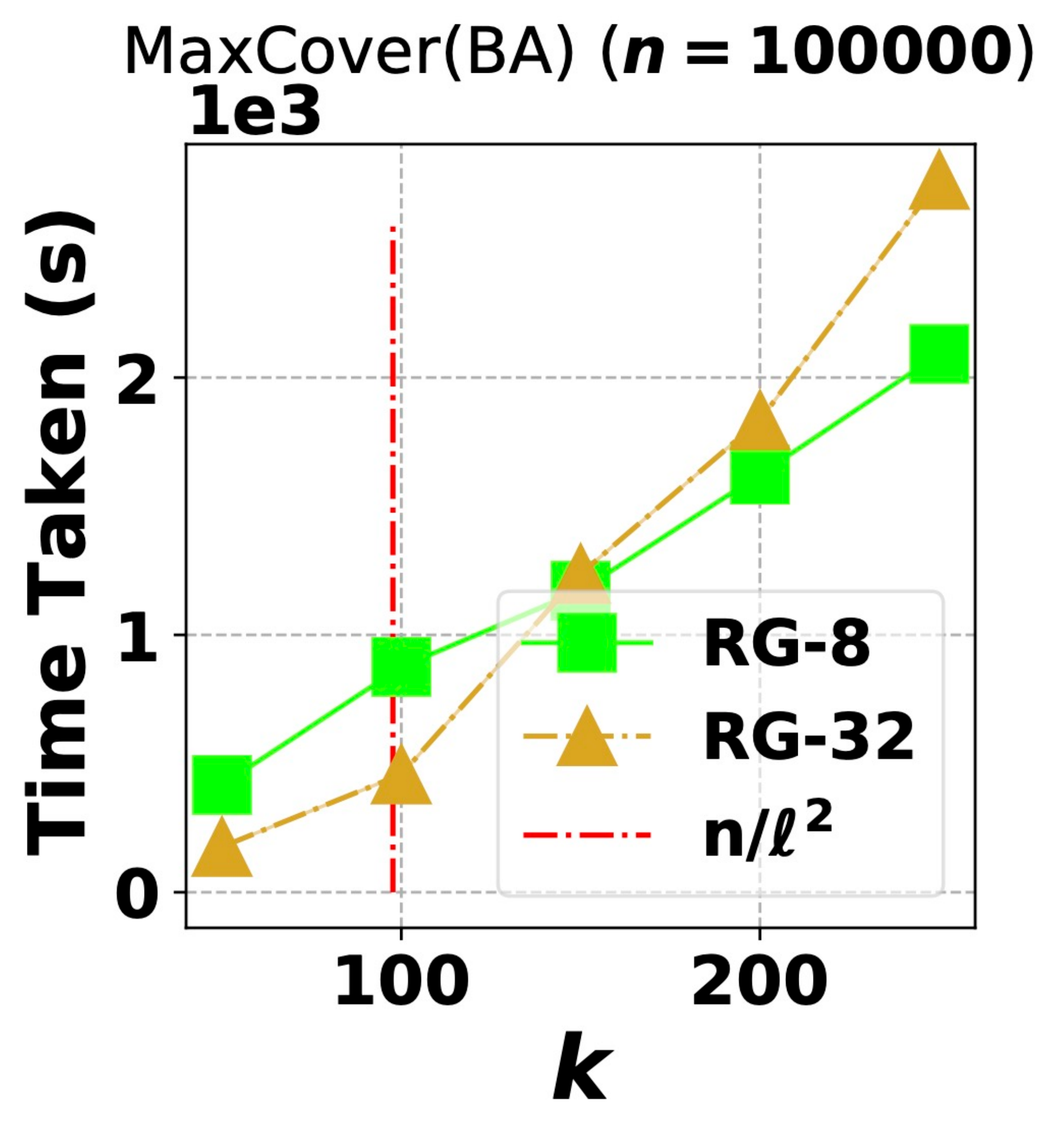}} %%\vspace{0.35\baselineskip}
  \caption{ \textbf{(a):} Scalability of \dsb vs. $\ell$ \textbf{(b):} \rgreedy with $\ell=8$ Vs. $\ell=32$. } \label{fig:Fig4m}
\end{figure}

\subsection{Experiment 4 - Performance Analysis of MED}\label{sec:Exp5}

\begin{figure*}[h]
  \centering
  \setcounter{subfigure}{0}
  \subfigure[]{\includegraphics[width=0.31\textwidth]{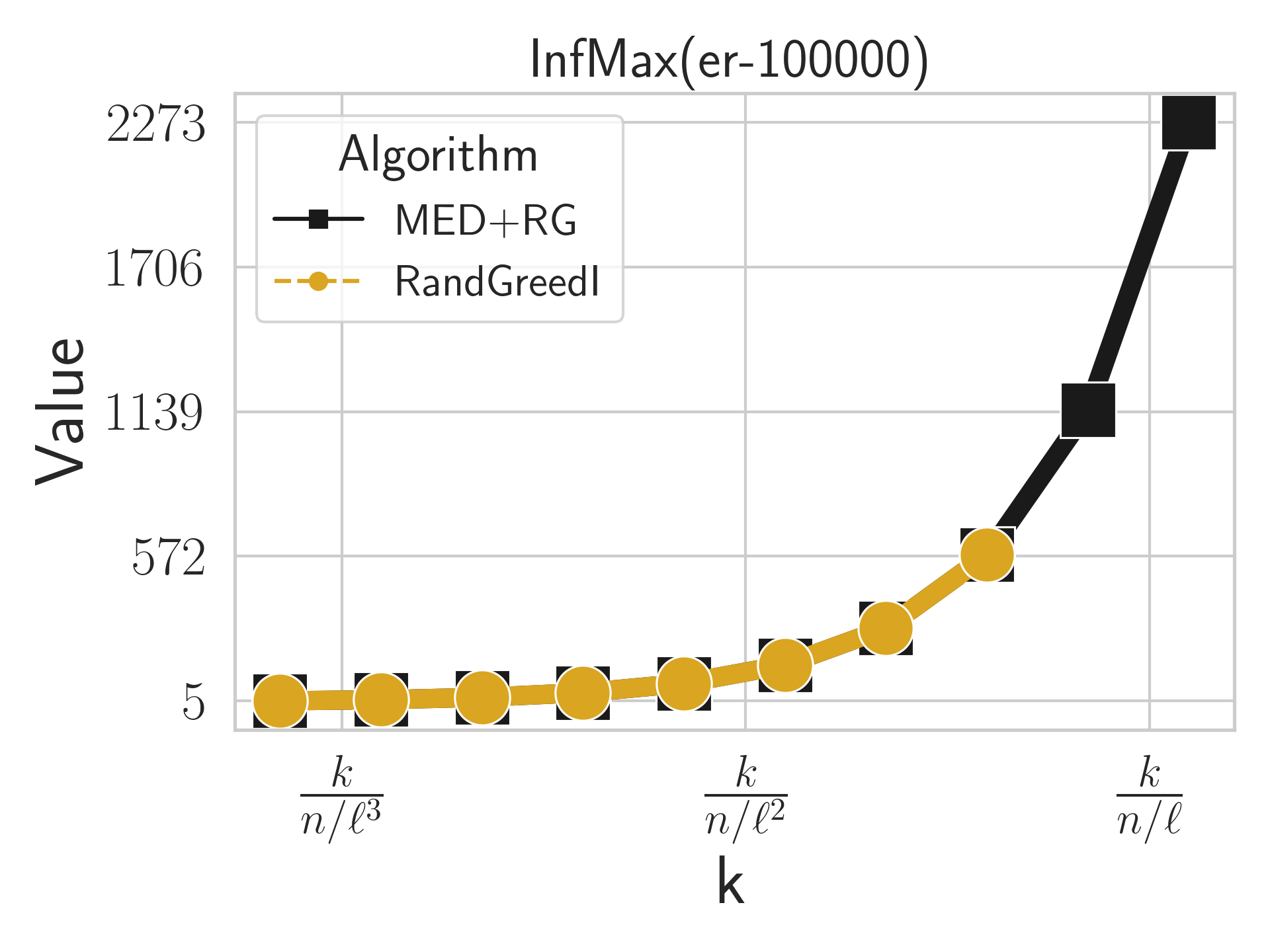} \label{fig:FigExp2-1J1}} %%\vspace{0.15\baselineskip} 
  \subfigure[]{\includegraphics[width=0.31\textwidth]{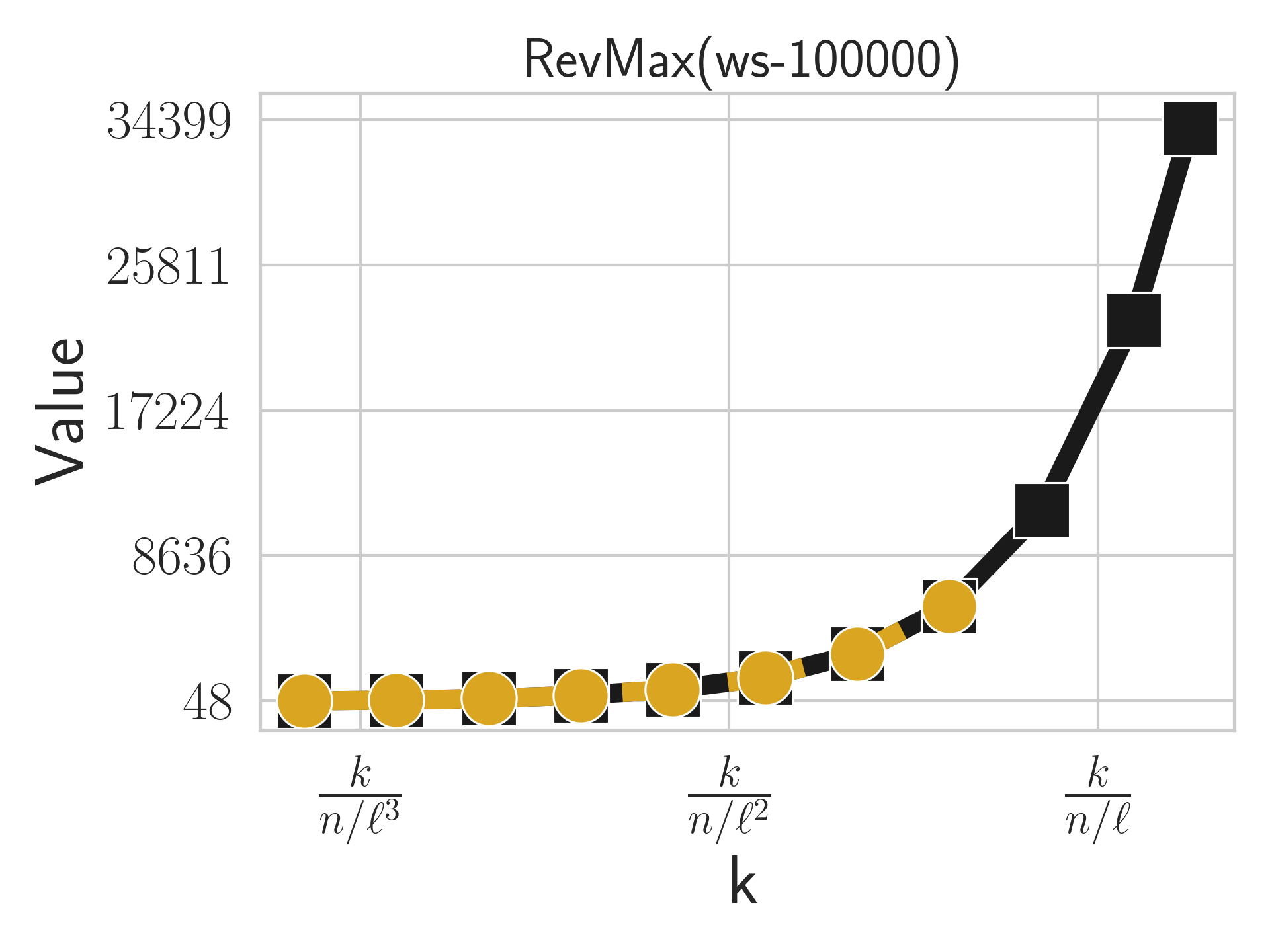} \label{fig:FigExp2-2J1}} %%\vspace{0.15\baselineskip}
  \subfigure[]{\includegraphics[width=0.31\textwidth]{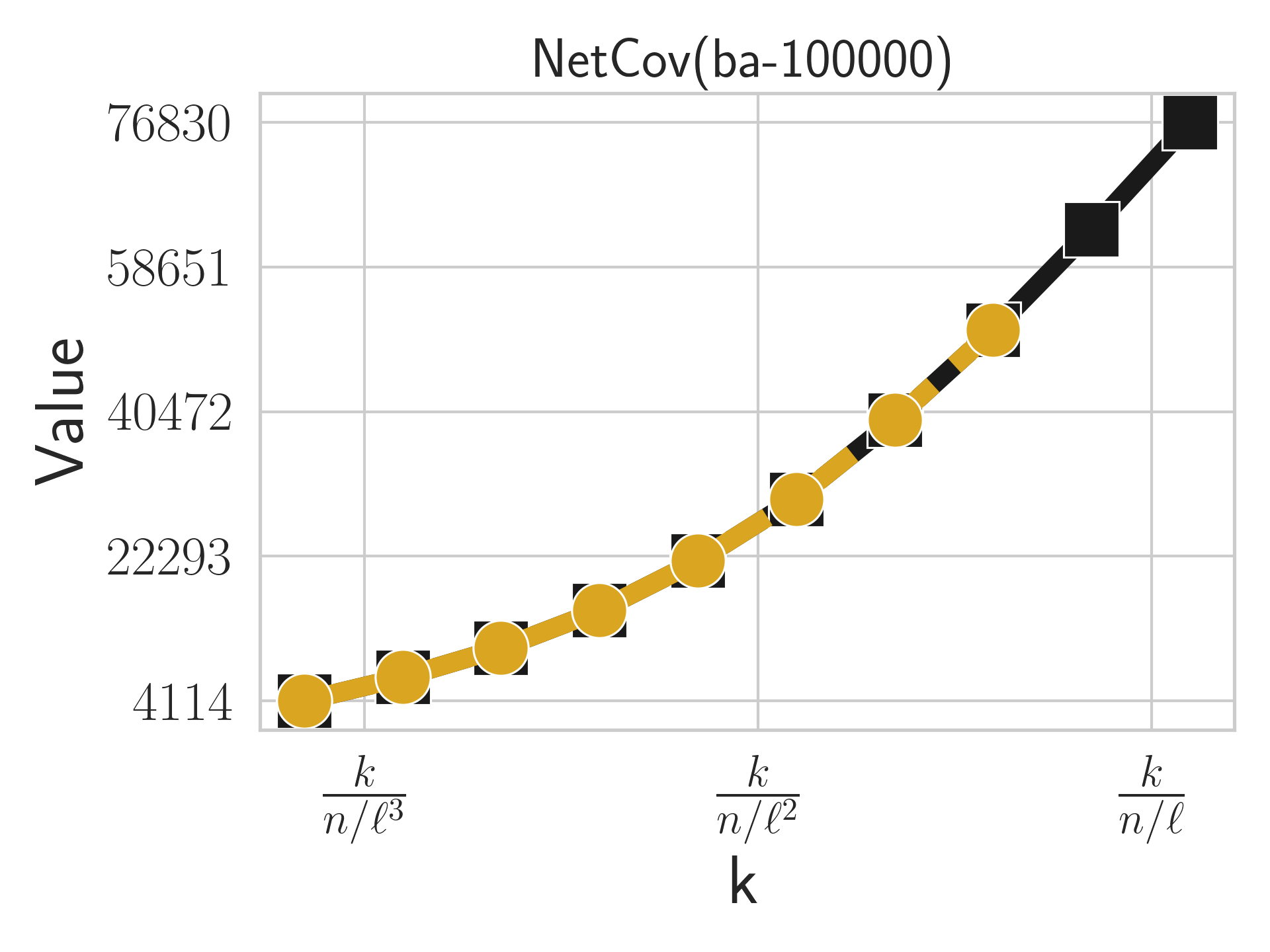} \label{fig:FigExp2-3J1}} %%\vspace{0.15\baselineskip}
\subfigure[]{\includegraphics[width=0.31\textwidth]{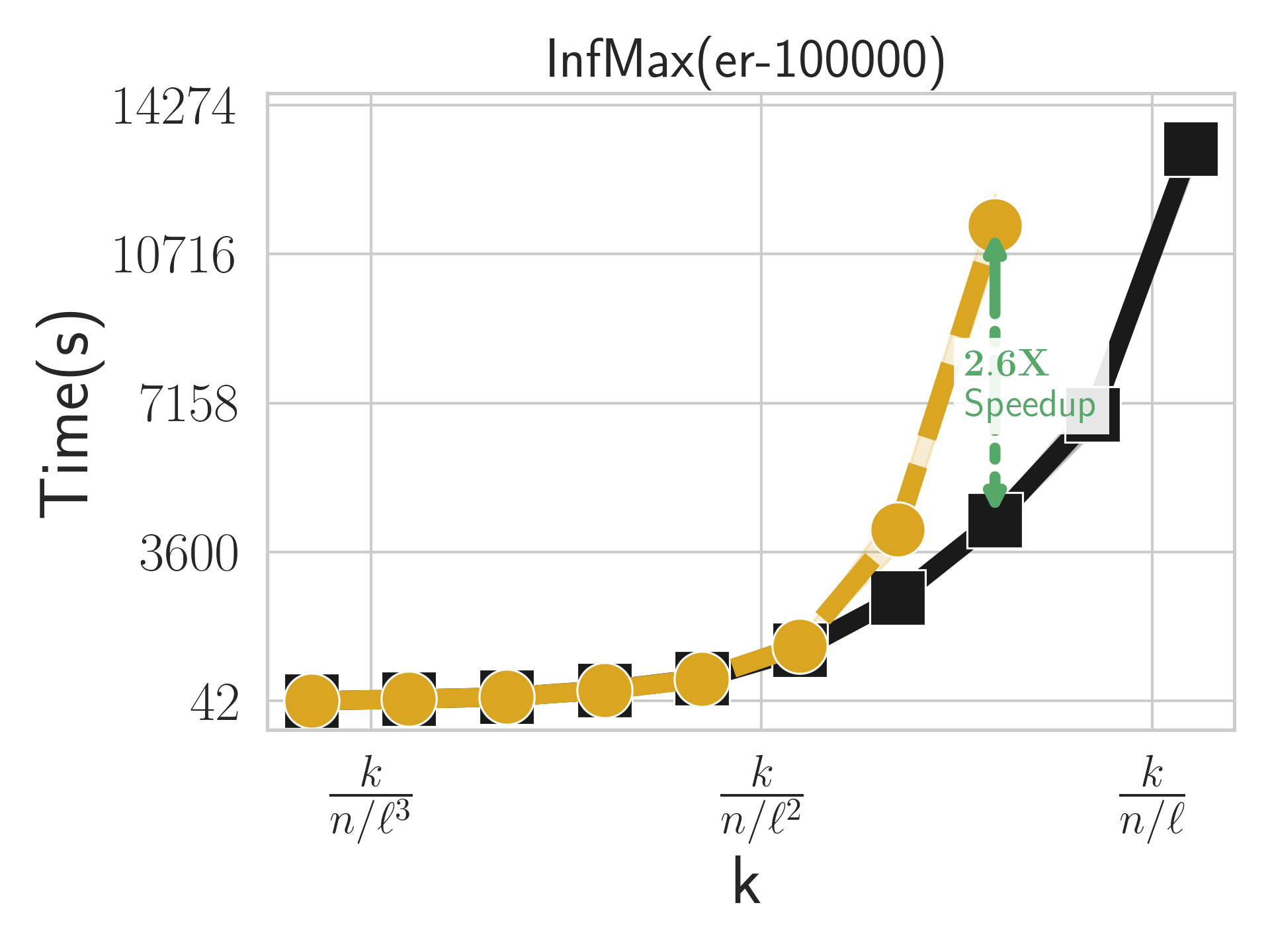} \label{fig:FigExp2-4J1}} %%\vspace{0.15\baselineskip} 
\subfigure[]{\includegraphics[width=0.31\textwidth]{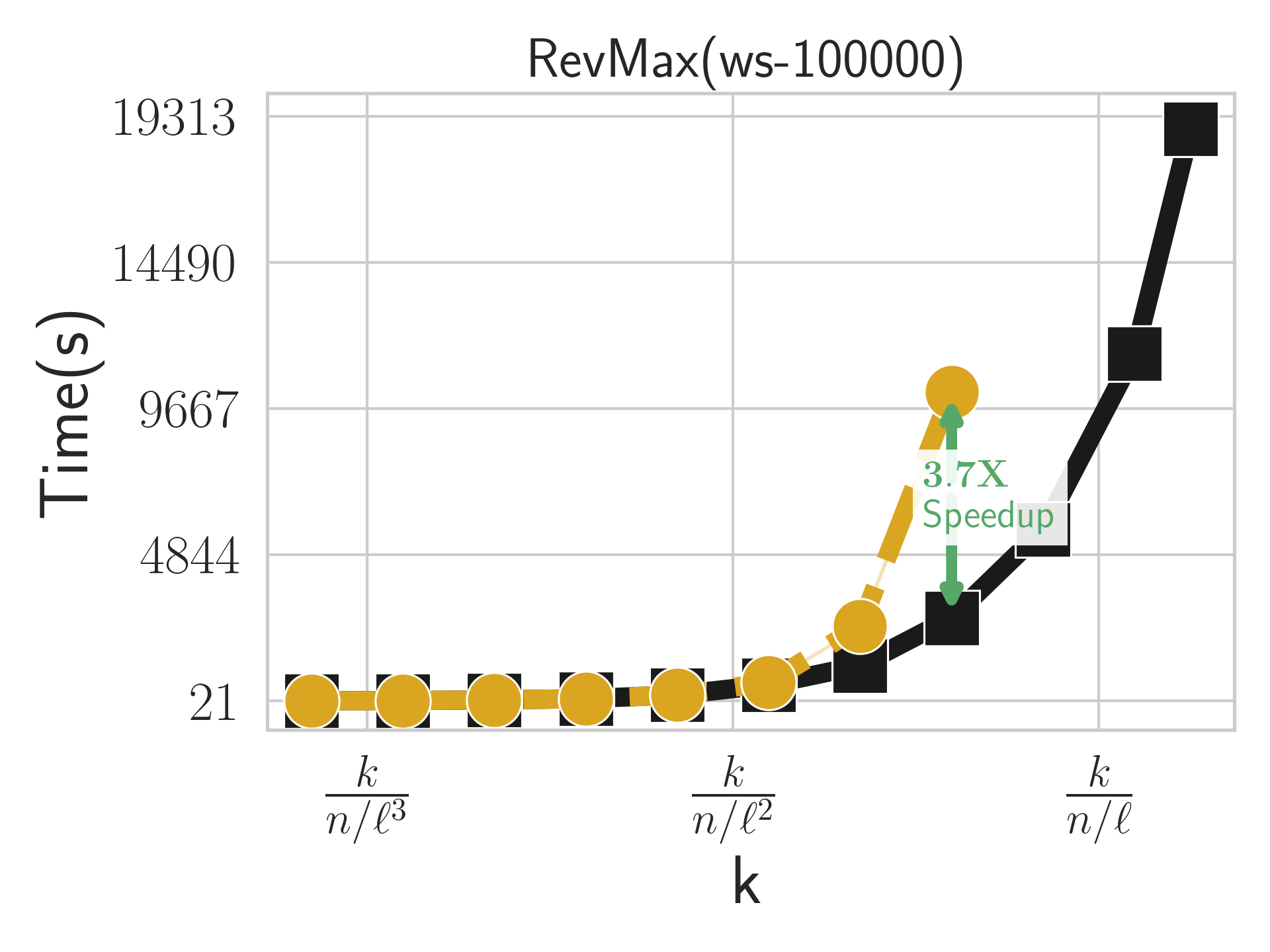} \label{fig:FigExp2-5J1}} %%\vspace{0.15\baselineskip}
\subfigure[]{\includegraphics[width=0.31\textwidth]{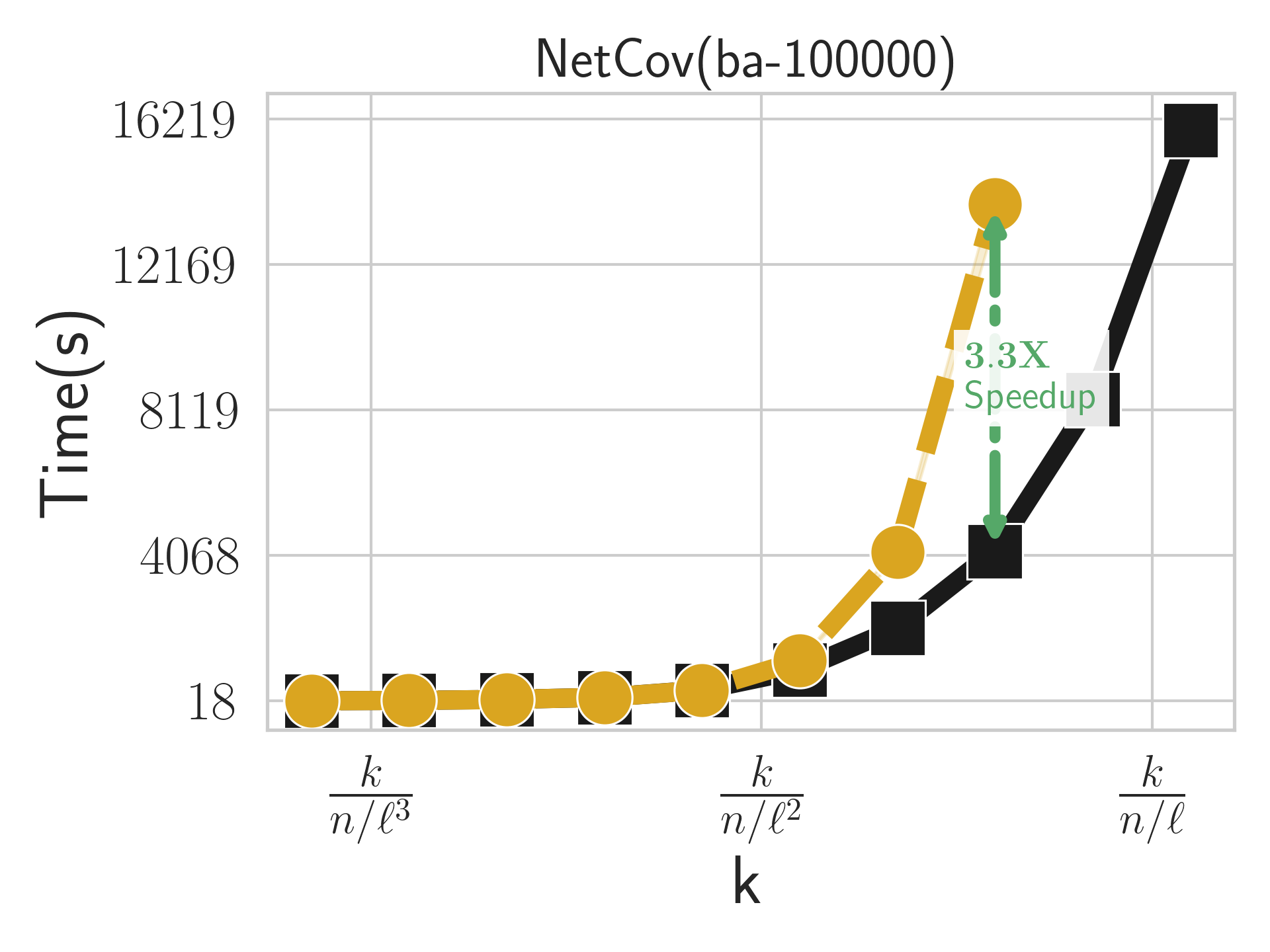} \label{fig:FigExp2-6J1}} %%\vspace{0.15\baselineskip}
  \caption{Empirical comparison of \mgrg to \rgreedy. The plotted metrics are solution value (Fig. \ref{fig:FigExp2-1J1}-\ref{fig:FigExp2-3J1}) and runtime (Fig. \ref{fig:FigExp2-4J1}-\ref{fig:FigExp2-6J1}).} \label{fig:FigExp2J1}
\end{figure*}

% \subsubsection{Results}
This section presents experimental results comparing the performance of 
\mgrg and the vanilla \rgreedy algorithm. In terms of solution value, 
\mgrg consistently provides nearly identical solutions to the vanilla 
\rgreedy algorithm across all instances for the three applications studied. 
Regarding runtime, the following observations are made: Initially, both 
algorithms exhibit similar execution times up to a threshold of $k=\frac{n}{\ell^2}$. 
However, beyond this threshold, the performance gap between \mgrg and \rgreedy widens 
linearly. \mgrg achieves notable average speedup factors of 1.8, 2.2, and 2.3 over 
\rgreedy for the respective applications. Moreover, beyond $k=\frac{n}{\ell^2}$, 
\mgrg outperforms \rgreedy in terms of completing instances within a 12-hour timeout, 
completing 77\% more instances of $k$ across all three applications. These findings 
highlight the promising performance of \mgrg, demonstrating comparable solution values 
while significantly improving runtime efficiency compared to the vanilla \rgreedy algorithm.

The empirical findings from this experiment provide insights into the capabilities of the 
\mgalg framework. Our results indicate that \mgalg achieves solution quality that is 
practically indistinguishable from the vanilla \alg algorithm, even for significantly 
larger values of $k$. Additionally, even with sufficient available memory to run \alg beyond 
$k=\frac{n}{\ell^2}$, \mgalg demonstrates noteworthy computational efficiency, 
surpassing that of the \alg algorithm. This combination of comparable solution quality and 
improved computational efficiency positions \mgalg as a highly promising framework. These 
findings have significant implications and underscore the potential of \mgalg to address 
complex problems in a distributed setting with large-scale values of $k$.

% \subsubsection{Environment Used}
% \begin{itemize}
%   \item No. of nodes: 32
%   \item No. of threads per node: 1
%   \item Timeout: 12 hours
% \end{itemize}
% \color{black}

\section{Discussion and Conclusion}
{\revone
Prior to this work, no MR algorithms for SMCC could parallelize within a machine; these algorithms require many sequential
queries. Moreover, increasing the number of machines to the number of threads available in the cluster
may actually harm the performance, as we showed empirically; intuitively, this is because the size of the
set on the primary machine scales linearly with the number of machines $\ell$.
In this paper, we have addressed this limitation by introducing a suite of algorithms
that are both parallelizable and distributed.
Specifically, we have presented R-DASH, T-DASH, and G-DASH, which are the first MR algorithms with sublinear adaptive complexity (highly parallelizable). Moreover, our algorithms make have nearly linear query complexity over the entire
computation.
We also provide the first distributed algorithm with $O(n)$ total query complexity, improving on
the $O(n \text{polylog}(n))$ of our other algorithms and the algorithm of \citet{liu2018submodular}.

When \rgreedy was introduced by \citet{mirzasoleiman2013distributed}, the empirical performance of the
algorithm was emphasized, with theoretical guarantees unproven until the work of \citet{barbosa2015power}.
Since that time, \rgreedy has remained the most practical algorithm for distributed SMCC.
Our R-DASH algorithm may be regarded as a version of \rgreedy that is 1) parallelized; 2)
nearly linear time in total (\rgreedy is quadratic); and 3) empirically orders of magnitude
faster in parallel wall time in our evaluation.

R-DASH achieves approximation ratio of $(1 - 1/e)/2$ in two MR rounds; the first round merely
being to distribute the data. We provide G-DASH to close the gap to $(1-1/e)$ ratio in a constant number of rounds. However,
MR rounds are expensive (as shown by our Experiment 1). The current best ratio achieved in two rounds
is the $0.545$-approximation algorithm of \citet{mirrokni2015randomized}. Therefore,
a natural question for future research is what is the best ratio achievable in two MR rounds.
Moreover, non-monotone or partially monotone objective functions and more sophisticated constraint systems
are directions for future research. 

% Moreover, we introduce the \mg framework, which, under specific conditions, holds promise for addressing limitations associated with MR algorithms for large cardinality constraint ($k$) values. Through comparative analysis with \rgreedy, we illustrate the potential of \mg, showing that \mgrg can complete 77\% more instances within time constraints than vanilla \rgreedy for large cardinality constraint values. \color{black} \ Our study offers a comprehensive empirical assessment of our algorithms, reaffirming their effectiveness and utility in real-world scenarios.
\color{black}
\section*{Acknowledgements}
Yixin Chen and Tonmoy Dey contributed equally to this work.
The work of Tonmoy Dey was partially supported by Florida State University;
the work of Yixin Chen and Alan Kuhnle was partially 
 supported by Texas A \& M University.
 The authors have received no third-party funding in direct support of this work. 
 The authors have no additional revenues from other sources related to this work.

\appendix
\section{Lov\'{a}sz Extension of Submodular Function.}
\label{sec:lov}
Given submodular function $f$,
the Lov\'{a}sz extension $F$ of $f$
is defined as follows: For $\mathbf{z} \in [0,1]^{|\univ|}$,
\[ F( \mathbf{z} = (z_i)_{i \in \univ} ) = \E_{\lambda  \sim \mathcal{U}[0,1]}[ f( \{ i : z_i \ge \lambda \} ) ]. \]
The Lov\'{a}sz extension satisfies the following properties:
(1) $F$ is convex; (2) $F( c \mathbf z ) \ge c F( \mathbf z )$
for any $c \in (0,1)$. Moreover, we will require the following
simple lemma:
\begin{lemma}[\citet{barbosa2015power}] \label{lemma:lovasz} Let $S$ be a random set,
  and suppose $\E[ \mathbf{1}_S ] = c \cdot \mathbf{z}$, for
  $c \in (0,1)$. Then $\E[ f(S) ] \ge c \cdot F( \mathbf{z} )$.
\end{lemma}

\section{Probability Lemma and Concentration Bounds}
\begin{lemma} \label{lemma:indep}(\citet{chen2021best})
  Suppose there is a sequence of $n$ Bernoulli trials:
  $X_1, X_2, \ldots, X_n,${}
  where the success probability of $X_i$
  depends on the results of
  the preceding trials $X_1, \ldots, X_{i-1}$.
  Suppose it holds that $$\prob{X_i = 1 | X_1 = x_1, X_2 = x_2, \ldots, X_{i-1} = x_{i-1} } \ge \eta,$$ where $\eta > 0$ is a constant and $x_1,\ldots,x_{i-1}$ are arbitrary.

  Then, if $Y_1,\ldots, Y_n$ are independent Bernoulli trials, each with probability $\eta$ of
  success, then $$\prob {\sum_{i = 1}^n X_i \le b } \le \prob{\sum_{i = 1}^n Y_i \le b }, $$
  where $b$ is an arbitrary integer.

  Moreover, let $A$ be the first occurrence of success in sequence $X_i$.
  Then, $$\ex{A} \le 1/\eta.$$
\end{lemma}

\begin{lemma}[Chernoff bounds
\cite{mitzenmacher2017probability}]\label{lemma:chernoff}
    Suppose $X_1$, ... , $X_n$ are independent binary random variables such that 
    $\prob{X_i = 1} = p_i$. Let $\mu = \sum_{i=1}^n p_i$, and 
    $X = \sum_{i=1}^n X_i$. Then for any $\delta \geq 0$, we have
    \begin{align}
        \prob{X \ge (1+\delta)\mu} \le e^{-\frac{\delta^2 \mu}{2+\delta}}.
    \end{align}
    Moreover, for any $0 \leq \delta \leq 1$, we have
    \begin{align}
        \prob{X \le (1-\delta)\mu} \le e^{-\frac{\delta^2 \mu}{2}}.
    \end{align}
\end{lemma}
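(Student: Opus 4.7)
The plan is to follow the standard moment generating function (MGF) approach due to Chernoff, specialized to the sum of independent Bernoullis, and then tighten the resulting bound into the two explicit forms stated. Both tails reduce to the same template: apply Markov's inequality to $e^{tX}$ (with $t>0$ for the upper tail and $t<0$ for the lower tail), compute the MGF using independence, and optimize over $t$.

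First, for the upper tail, I would fix $t>0$ and write $\prob{X \ge (1+\delta)\mu} = \prob{e^{tX} \ge e^{t(1+\delta)\mu}} \le \E[e^{tX}]/e^{t(1+\delta)\mu}$ by Markov's inequality. By independence of the $X_i$ and the identity $\E[e^{tX_i}] = 1 + p_i(e^t-1)$, I would obtain
\begin{equation*}
\E[e^{tX}] = \prod_{i=1}^n \bigl(1 + p_i(e^t-1)\bigr) \le \prod_{i=1}^n e^{p_i(e^t-1)} = e^{\mu(e^t-1)},
\end{equation*}
using $1+x \le e^x$. Choosing $t = \ln(1+\delta)$ (which minimizes the resulting exponent) yields
\begin{equation*}
\prob{X \ge (1+\delta)\mu} \le \left(\frac{e^\delta}{(1+\delta)^{1+\delta}}\right)^\mu.
\end{equation*}

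The second step, and the only real obstacle, is the analytic simplification that converts this raw Chernoff bound into the clean form $e^{-\delta^2\mu/(2+\delta)}$. This reduces to showing $\delta - (1+\delta)\ln(1+\delta) \le -\delta^2/(2+\delta)$ for all $\delta \ge 0$. I would prove this by defining $h(\delta) = (1+\delta)\ln(1+\delta) - \delta - \delta^2/(2+\delta)$, checking $h(0)=0$, and verifying $h'(\delta) \ge 0$ on $[0,\infty)$ by a direct derivative computation (which reduces to an elementary inequality in $\delta$). This is where the bookkeeping is most delicate.

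For the lower tail, I would run the symmetric argument: for $t>0$, apply Markov's inequality to $e^{-tX}$, giving $\prob{X \le (1-\delta)\mu} \le \E[e^{-tX}]/e^{-t(1-\delta)\mu}$; bound the MGF as before to get $e^{\mu(e^{-t}-1)}$; and minimize by taking $t=-\ln(1-\delta)$, yielding $\left(e^{-\delta}/(1-\delta)^{1-\delta}\right)^\mu$. The final step is the inequality $-\delta - (1-\delta)\ln(1-\delta) \le -\delta^2/2$ for $\delta \in [0,1]$, which I would verify via the Taylor expansion $\ln(1-\delta) = -\sum_{k\ge 1}\delta^k/k$, giving
\begin{equation*}
-\delta - (1-\delta)\ln(1-\delta) = -\sum_{k=2}^\infty \frac{\delta^k}{k(k-1)} \le -\frac{\delta^2}{2},
\end{equation*}
where the last inequality drops the non-negative higher-order terms. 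Combining each optimized Chernoff bound with the corresponding algebraic simplification yields the two stated inequalities. Since the result is a textbook Chernoff bound, the proof is routine; I would likely just cite \cite{mitzenmacher2017probability} rather than reproduce it in full.
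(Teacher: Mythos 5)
The paper gives no proof of this lemma at all; it is stated as a black-box citation to \cite{mitzenmacher2017probability}, so there is nothing to compare against beyond the textbook argument itself. Your proposal is the standard and correct Chernoff/MGF derivation (and the one "delicate" step you flag, showing $(1+\delta)\ln(1+\delta)-\delta \ge \delta^2/(2+\delta)$, follows in one line from the elementary inequality $\ln(1+\delta)\ge 2\delta/(2+\delta)$ for $\delta\ge 0$, avoiding any derivative bookkeeping); citing the textbook, as the paper does, is entirely appropriate here.
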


% \color{blue}
\section{Omitted Proof of \lat} \label{apx:lat}
% \begin{lemma}[Lemma 12 in~\cite{chen2021best}]\label{lemma:latinc}
%     At an iteration $j$, let $A_i=\{x \in V: \marge{x}{T_i\cup S_{j-1}}\}$
%     after Line~\ref{line:threFilterV}.
%     It holds that $|A_0| = 0$, $|A_{|V_j|}| = |V_j|$ and $|A_i| \le |A_{i+1}|$.
% \end{lemma}
\lemmalatinc*
\begin{proof}
   After filtration on Line~\ref{line:threFilterV},
   it holds that, for any $x \in V_j$,
   $\marge{x}{S_{j-1}}\ge \tau$ and $\marge{x}{V_j \cup S_{j-1}} = 0$.
   Therefore,
   \begin{align*}
       &|A_0| = |\{x\in V_j: \marge{x}{S_{j-1}}< \tau\}|=0,\\
       &|A_{|V_j|}| = |\{x\in V_j: \marge{x}{V_j \cup S_{j-1}}< \tau\}|=|V_j|.
   \end{align*}
   For any $x\in A_i$, it holds that $\marge{x}{T_i \cup S_{j-1}} < \tau$.
   Due to submodularity, $\marge{x}{T_{i+1} \cup S_{j-1}} \le \marge{x}{T_i \cup S_{j-1}}  < \tau$.
   Therefore, $A_i$ is subset of $A_{i+1}$, which indicates that $|A_i|\le |A_{i+1}|$.
\end{proof}
\begin{proof}[Proof of Success Probability]
    The algorithm successfully terminates if, at some point, 
    $|V_j| = 0$ or $|S_j| = k$.
    To analyze the success probability,
    we consider a variant of the algorithm in which
    it does not terminate once $|V_j| = 0$ or $|S_j| = k$.
    In this case, the algorithm keeps running with $s = 0$
    and selecting empty set after inner for loop
    in the following iterations.
    Thus, with probability $1$, 
    either $|V_{j+1}| = 0 \le (1-\beta\epsi)|V_j| = 0$,
    or $|S_j|=k$.
    Lemma~\ref{lemma:latprob} holds for all $M+1$ iterations of the outer for loop.

    If the algorithm fails, there should be no more than $m=\lceil \log_{1-\beta\epsi}(1/n) \rceil$ successful iterations where $\lamb_j^* \ge \min\{s,t\}$.
    Otherwise, by Lemma~\ref{lemma:latprob}, there exists an iteration $j$ such that $|S_j| = k$, or $|V_{M+1}| < n\left(1-\beta\epsi\right)^m \le 1$.
    Let $X$ be the number of successful iterations.
    Then, $X$ is the sum of $M+1$ dependent Bernoulli random variables,
    where each variable has a success probability of more than $1/2$.
    Let $Y$ be the sum of $M+1$ independent Bernoulli random variables
    with success probability $1/2$.
    By probability lemmata, the failure probability of Alg.~\ref{alg:threshold} is calculated as follows,
    \begin{align*}
        \prob{\text{Alg.~\ref{alg:threshold} fails}}&\le \prob{X\le m}\\
        &\le \prob{Y \le m} \tag{Lemma~\ref{lemma:indep}}\\
        &\le \prob{Y \le \log\left(\frac n \delta\right)/(\beta\epsi)} \tag{$\log(x)\ge 1-\frac 1 x, \forall x > 0$}\\
        &\le e^{-\frac 1 2 \left(1-\frac{1}{2(1+\beta\epsi)}\right)^2 \cdot 2(1+1/\beta\epsi)\log\left(\frac n \delta\right)}\tag{Lemma~\ref{lemma:chernoff}}\\
        &= \left(\frac \delta n\right)^{\frac{(2\beta\epsi+1)^2}{4\beta\epsi(\beta\epsi+1)}}\le \frac \delta n\qedhere
    \end{align*}
\end{proof}

\begin{proof}[Proof of Adaptivity and Query Complexity]
In Alg.~\ref{alg:threshold}, oracle queries incurred on Line~\ref{line:threFilterV} and~\ref{line:threIf2},
and can be done in parallel.
Thus, there are constant number of adaptive rounds within
one iteration of the outer for loop.
The adaptivity is $\oh{M}=\oh{\log(n/\delta)/\epsi^3}$.

Consider an iteration $j$, there are no more than $|V_{j-1}|+1$ queries
on Line~\ref{line:threFilterV} and $\log_{1+\epsi}(|V_j|)+2$ queries on Line~\ref{line:threIf2}.
Let $Y_i$ be the $i-$th successful iteration.
Then, for any $Y_{i-1} < j \le Y_{i}$, 
it holds that $|V_j| \le n(1-\beta\epsi)^{i-1}$.
By Lemma~\ref{lemma:indep}, for any $i\ge 1$,
$\ex{Y_i - Y_{i-1}}\le 2$.
Then, the expected number of oracle queries is as follows,
\begin{align*}
    &\ex{\#\text{queries}} \le \sum_{j=1}^{M+1}\ex{|V_{j-1}|+\log_{1+\epsi}(|V_j|)+3}\\
    &\le n+\sum_{i:n(1-\beta\epsi)^{i-1}\ge 1}\ex{Y_i - Y_{i-1}} \left(n(1-\beta\epsi)^{i-1} + \log_{1+\epsi}\left(n(1-\beta\epsi)^{i-1}\right)\right)+ 3(M+1)\\
    &\le n+2n\sum_{i\ge 1}(1-\beta\epsi)^{i-1}+2\log_{1+\epsi}(n)\cdot \log_{1-\beta\epsi}(1/n)+3\log(n/\delta)/\epsi^3\\
    &\le \left(1+\frac{2}{\beta\epsi}\right)n+2\log_{1+\epsi}(n)\cdot \log_{1-\beta\epsi}(1/n)+3\log(n/\delta)/\epsi^3\\
    &\le O(n/\epsi^3)\qedhere
\end{align*}
\end{proof}

\revone
\begin{proof}[Proof of Property (3)]
    For each iteration $j$,
    consider two cases of $\lamb^*_j$ \revtwo
    by the choice of $\lamb^*_j$ on Line~\ref{line:LAT-index2}. 
    Firstly, if $\lamb^*_j \le \left \lceil \frac{1}{\epsi} \right \rceil $,
    clearly, \[\marge{T_{\lamb^*_j}}{S_{j-1}} \ge (1-\epsi)\tau \lamb^*_j.\]
    
    Secondly, if $\lamb^*_j > \left \lceil \frac{1}{\epsi} \right \rceil $,
    let $\lamb_j =\max\left\{\lamb \in \Lamb: \lamb <\lamb_j^* \right\} $.
    Then, $B[\lamb_j] = \textbf{True}$, and
    \[\marge{T_{\lamb_j}}{S_{j-1}} \ge (1-\epsi)\tau \lamb_j.\]
    Let $\lamb_j= \left\lfloor (1+\epsi)^u \right\rfloor$,
    and $\lamb^*_j= \left\lfloor (1+\epsi)^{u+1} \right\rfloor$.
    Then, 
    \begin{equation*}
    \frac{|T_{j, \lamb_j}|}{|T_{j, \lamb_j^*}|} = \frac{\lamb_j}{\lamb_j^*}
    = \frac{\left\lfloor (1+\epsi)^u \right \rfloor}{\left\lfloor (1+\epsi)^{u+1} \right \rfloor}\ge \frac{(1+\epsi)^u -1}{ (1+\epsi)^{u+1}}\ge \frac{1}{1+\epsi} -\epsi,
    \end{equation*}
    where the last inequality follows from $(1+\epsi)^{u+1} \ge \lamb_j^*>\left\lceil \frac{1}{\epsi}\right\rceil \ge \frac{1}{\epsi}$.
    Therefore, by above two inequalities and monotonicity of $f$,
    \begin{equation*}
    \marge{T_{j,\lamb_j^*}}{S_{j-1}}\ge\marge{T_{j,\lamb_j}}{S_{j-1}}
    \ge (1-\epsi)\left(\frac{1}{1+\epsi}-\epsi\right)\tau \lamb_j^*
    \ge (1-2\epsi)\tau \lamb_j^*/(1+\epsi).
    \end{equation*}
    The objective value of the returned solution can be bounded as follows,
    \begin{align*}
        &f(S) =\sum_j \marge{T_{\lamb_j^*}}{S_{j-1}}
        \ge \sum_j (1-2\epsi)\tau \lamb_j^*/(1+\epsi)
        = (1-2\epsi)\tau |S|/(1+\epsi).\qedhere
    \end{align*}
\end{proof}
\color{black}

\begin{proof}[Proof of Property (4)]
    If the output set $S$ follows that $|S| <k$,
    for any $x\in \mathcal N$,
    it is filtered out at some point.
    Let $x$ be discarded at itetation $j_{x}$.
    Then,
   $$\marge{x}{S} \le \marge{x}{S_{j_x-1}} < \tau.\qedhere$$
\end{proof}

\section{\tg}\label{apx:tg}
 \begin{algorithm}[t]
   \caption{\tg}
   \label{alg:tg}
   \begin{algorithmic}[1]
      \State {\bfseries Input:} Evaluation oracle $ f:2^{\mathcal N} \to \mathbb{R}$, constraint $k$, constant $\alpha$, value $\Gamma$ \st $\Gamma \le f(O) \le \Gamma/\alpha$, error $\epsi$
      \State Initialize $\tau \gets \Gamma/(\alpha k)$, $S \gets \emptyset$
      \While{$\tau \ge \epsi\Gamma/k$}
          \For{$e \in \mathcal N$}
            \State \textbf{if} $\marge{e}{S} \ge \tau$ \textbf{then} $S\gets S\cup \{e\}$
            \State \textbf{if} $|S| =k$ \textbf{then return} $S$
          \EndFor
          \State $\tau \gets \tau(1-\epsi)$
      \EndWhile
      \State \textbf{return} $S$
   \end{algorithmic}
 \end{algorithm}
In this section, we introduce a variant of near-linear time
algorithm (Alg. 1) in~\citet{BadanidiyuruV14},
\tg (Alg.~\ref{alg:tg}), which requires two more parameters,
constant $\alpha$ and value $\Gamma$.
With the assumption that $\Gamma$ is an $\alpha-$approximation solution,
\tg ensures a $(1-1/e-\epsi)-$approximation with $O(n/\epsi)$ query calls.
\begin{theorem}\label{thm:tg}
    Let $(f, k)$ be an instance of \sm,
    and $O$ is the optimal solution to $(f,k)$. 
    With input $\alpha$, $\Gamma$ and $\epsi$,
    where $\Gamma \le f(O) \le \Gamma/\alpha$,
    \tg outputs solution $S$ with $O(n\epsi^{-1}\log(\alpha\epsi)^{-1})$ queries 
    such that $f(S) \ge (1-1/e-\epsi)f(O)$.
\end{theorem}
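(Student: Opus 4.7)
The plan is to establish the query bound first and then the approximation ratio via the classic threshold--greedy argument. For the query count, the outer \textbf{while} loop starts at $\tau = \Gamma/(\alpha k)$ and decreases $\tau$ by a factor $(1-\epsi)$ each iteration until $\tau < \epsi\Gamma/k$, so the number of outer iterations is $O(\epsi^{-1}\log(1/(\alpha\epsi)))$. Each outer iteration performs at most one marginal-gain query per element of $\mathcal N$, which gives the stated $O(n\,\epsi^{-1}\log(1/(\alpha\epsi)))$ total.

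For the approximation guarantee I would split on termination mode. Write $e_1,\ldots,e_{|S|}$ for the elements of $S$ in the order they were added, $S_i = \{e_1,\ldots,e_i\}$, and $\tau_i$ for the threshold in force when $e_i$ was added, so $\marge{e_i}{S_{i-1}} \ge \tau_i$.

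\emph{Case 1: the algorithm returns with $|S| = k$.} The key inductive lemma to prove is
\[
\marge{e_i}{S_{i-1}} \ge \frac{1-\epsi}{k}\bigl(f(O) - f(S_{i-1})\bigr)\qquad\text{for every }i.
\]
When $\tau_i = \Gamma/(\alpha k)$ (the initial threshold), the hypothesis $\Gamma \ge \alpha f(O)$ gives $\tau_i \ge f(O)/k \ge (f(O)-f(S_{i-1}))/k$ directly. When $\tau_i < \Gamma/(\alpha k)$, every $o \in O \setminus S_{i-1}$ was scanned in the previous outer iteration at threshold $\tau_i/(1-\epsi)$ and was rejected, so $\marge{o}{S_{\text{scan}}} < \tau_i/(1-\epsi)$; submodularity then upgrades this to $\marge{o}{S_{i-1}} < \tau_i/(1-\epsi)$. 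Summing over $O \setminus S_{i-1}$ and using submodularity once more,
\[
f(O) - f(S_{i-1}) \;\le\; \sum_{o \in O\setminus S_{i-1}} \marge{o}{S_{i-1}} \;<\; \frac{k\,\tau_i}{1-\epsi} \;\le\; \frac{k\,\marge{e_i}{S_{i-1}}}{1-\epsi},
\]
which is the lemma. A standard induction yields $f(O) - f(S_k) \le (1-(1-\epsi)/k)^k f(O) \le e^{-(1-\epsi)} f(O)$, and since $e^{-(1-\epsi)} \le 1/e + \epsi$ for the relevant range of $\epsi$, the ratio $(1-1/e-\epsi)$ follows.

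\emph{Case 2: the outer loop exits with $|S| < k$.} At exit, $\tau(1-\epsi) < \epsi\Gamma/k$ so the final scanning threshold satisfies $\tau_{\text{final}} < \epsi\Gamma/((1-\epsi)k) \le \epsi f(O)/((1-\epsi)k)$. In that last pass, every $o \in O \setminus S$ was examined and rejected, giving $\marge{o}{S} < \tau_{\text{final}}$ by submodularity. Summing,
\[
f(O) - f(S) \;\le\; \sum_{o \in O \setminus S}\marge{o}{S} \;<\; \frac{\epsi}{1-\epsi}\,f(O),
\]
so $f(S) \ge (1 - \epsi/(1-\epsi))f(O) \ge (1-1/e-\epsi)f(O)$ for $\epsi$ small enough.

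The main obstacle is the bookkeeping in the inductive lemma of Case~1: one has to treat the very first threshold level separately (since no ``previous threshold'' exists), and to check that the submodularity promotion from $S_{\text{scan}}$ to $S_{i-1}$ handles both elements rejected in the current scan and elements rejected in the preceding scan. Once that is in place, the telescoping $\prod(1-(1-\epsi)/k) \le e^{-(1-\epsi)}$ and the elementary bound $e^{\epsi} \le 1+2\epsi$ complete the argument.
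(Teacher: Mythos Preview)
Your proposal is correct and follows essentially the same approach as the paper's proof. The only cosmetic difference is that the paper groups the analysis by threshold level (bounding $f(S_j)-f(S_{j-1})$ for each outer iteration $j$) rather than element-by-element as you do; the key steps---handling the initial threshold via $\Gamma/\alpha \ge f(O)$, bounding later marginals via rejection at the previous threshold plus submodularity, and telescoping $\prod(1-(1-\epsi)/k)\le e^{-(1-\epsi)}$---are identical.
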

\begin{proof}
    \textbf{Query Complexity.}
    The while loop in Alg.~\ref{alg:tg} has at most
    $ \log_{1-\epsi}(\alpha\epsi)\le \epsi^{-1}\log(\alpha\epsi)^{-1} $ iterations.
    For each iteration of the while loop,
    there are $O(n)$ queries.
    Therefore, the query complexity of \tg is
    $O(n\epsi^{-1}\log(\alpha\epsi)^{-1})$.

    \textbf{Approximation Ratio.}
    If $|S| < k$ at termination,
    for any $e \in \univ \setminus S$,
    it holds that $\marge{e}{S} < \epsi \Gamma/k \le \epsi f(O)/k$.
    Then, 
    \begin{align*}
        &f(O)-f(S) \overset{(1)}{\le}\sum_{o \in O\setminus S}\marge{o}{S}
    \le \epsi f(O)\\
    \Rightarrow&\hspace{1em} f(S) \ge (1-\epsi)f(O)
    \end{align*}
    where Inequality (1) follows from monotonicity and submodularity.

    If $|S|=k$ at termination, let $S_j$ be the intermediate solution
    after iteration $j$ of the while loop,
    and $\tau_j$ is the corresponding threshold.
    Suppose that $S_j\setminus S_{j-1} \neq \emptyset$.
    Then, each element added during iteration $j$ provides a minimum incremental benefit of $\tau$,
    \[f(S_j) - f(S_{j-1}) \ge |S_j\setminus S_{j-1}|\tau_j.\]
    Next, we consider iteration $j-1$.
    If $j>1$, since the algorithm does not terminate during iteration $j-1$,
    each element in $\univ$ is either added to the solution
    or omitted due to small marginal gain with respect to the current solution.
    Therefore, for any $o \in O\setminus S_{j-1}$,
    it holds that $\marge{o}{S_{j-1}}<\tau_{j-1}$ by submodularity.
    Then,
    \[f(O)-f(S_{j-1}) \le 
    \sum_{o \in O\setminus S_{j-1}}\marge{o}{S_{j-1}}<k\tau_{j-1}
    =k \tau_j/(1-\epsi).\]
    If $j=1$, $\tau_j = \Gamma/(\alpha k)$ and $S_{j-1}=\emptyset$.
    \[f(O)-f(S_{j-1}) =f(O) \le \Gamma/\alpha = k\tau_j < k \tau_j/(1-\epsi). \]
    Therefore, for any iteration $j$ that has elements being added to the solution, it holds that 
    \begin{align*}
        f(O)-f(S_j) &\le \left(1-\frac{1-\epsi}{k}|S_j\setminus S_{j-1}|\right)\left(f(O)-f(S_{j-1})\right)\\
        \Rightarrow \hspace{1em} f(O)-f(S) &\le \prod_j \left(1-\frac{1-\epsi}{k}|S_j\setminus S_{j-1}|\right) f(O)\\
        &\le \prod_j e^{-\frac{1-\epsi}{k}|S_j\setminus S_{j-1}|} f(O) \tag{$x+1\le e^x$}\\
        & = e^{-1+\epsi} f(O)\le (1/e+\epsi)f(O) \tag{$0<\epsi<1$}
    \end{align*}
    \[\Rightarrow \hspace{3em} f(S) \ge (1-1/e-\epsi)f(O)\hspace{3em} \qedhere\]
\end{proof}
\color{black}
\begin{table*}[t] 
%%\vspace{1em}
\caption{Small and Large Data} \label{table:largeData}.
% \begin{center}
\centering
\begin{tabular}{ |l||l|l|l|l|  }
 \hline
 % \multicolumn{4}{|c|}{Country List} \\
 % \hline                     
 \multirow{2}{*}{Application} & \multicolumn{2}{|c|}{\textbf{Small Data}}             & \multicolumn{2}{|c|}{\textbf{Large Data}} \\
 \cline{2-5}
                              &  $n$ & Edges\tnote{$\ast$}                                                   &  $n$ & Edges\tnote{$\ast$} \\
 \hline
 % MovieRecc      & 10,667     & $\simeq 9.8 \times 10^7$                                        & 53,889     & $\simeq 9.2 \times 10^8$  \\
 ImageSumm      & 10,000     & $\simeq 1.0 \times 10^8$                                        & 50,000     & $\simeq 2.5 \times 10^9$  \\
 InfluenceMax   & 26,588     & $\simeq 1.0 \times 10^5$                                        & 1,134,890  & $\simeq 6.0 \times 10^6$  \\
 RevenueMax     & 17,432     & $\simeq 1.8 \times 10^5$                                        & 3,072,441  & $\simeq 2.3 \times 10^8$  \\
 MaxCover (BA)      & 100,000    & $\simeq 5.0 \times 10^5$                                    & 1,000,000  & $\simeq 1.0 \times 10^9$  \\
 % InfluenceMax   & 41,652,231    & $\simeq 1.468 \times 10^9$ & 11,362 \\
 % RevenueMax   & 55,568,020  & $\simeq 1.483 \times 10^9$ & 11,529 \\
 \hline
\end{tabular}
%%\vspace{1em}
\end{table*}
\section{\datfull with no Knowledge of OPT} \label{sec:DATOPT}
\unionprob*
\begin{proof}
    By Definition ($p_x$ in Section \ref{sec:dat}), it holds that $\prob{o \in O_1} = p_o$. 
    Since $o$ is assigned to each machine randomly with probability $1/\ell$, 
    \begin{align*}
      \prob{o \in O_2} &= \sum_{i = 1}^{\ell}\prob{o \in S_i} \\
       &= \ell \cdot \prob{o \in S_1 | o \in \univ_1}
       \cdot \prob{o \in \univ_1} \\
      &\revtwo= \prob{o \in \textsc{TSMRel}(\univ_1) | o \in \univ_1} \\
      &\revtwo= \prob{o \in \textsc{TSMRel}(\univ_1\cup \{o\})}\\
      &= 1-p_o.
    \end{align*}
    Moreover, we know that any two machines selects elements independently.
    So, 
    \begin{align*}
      \prob{o \in O_2|o\in O_1}
      &\revtwo= \prob{o \in O_2|o\not \in \textsc{TSMRel}(\univ_1\cup \{o\}, q)}\\
      &\revtwo= 1-\prob{o \not \in O_2|o\not \in \textsc{TSMRel}(\univ_1\cup \{o\}, q)}\\
      &\revtwo= 1-\prod_{i = 1}^{\ell}\prob{o \not \in S_i|o\not \in \textsc{TSMRel}(\univ_1\cup \{o\}, q)}\\
      &\overset{(a)}{=}1-\prod_{i = 2}^{\ell}\prob{o \not \in S_i}\\
      &\le 1- \prob{o \not \in O_2}\\
      & = \prob{o \in O_2}=1-p_o,
    \end{align*}
    where Inequality (a) follows from \revtwo
    $\prob{o \not \in \textsc{TSMRel}(\univ_1)|o\not \in \textsc{TSMRel}(\univ_1\cup \{o\})}=1 $. \color{black}
    
    Thus, we can bound the probability by the following,
    \begin{align*}
      \prob{o \in O_1 \cup O_2} =& \prob{o \in O_1}
      + \prob{o \in O_1} \\
      &-  \prob{o \in O_1 \cap O_2}\\
      \ge& p_o + 1-p_o - p_o(1-p_o)\\ \ge& 3/4.
    \end{align*}
  \end{proof}

\textbf{Description.} \ The \dat algorithm described in Alg. \ref{alg:DAT} is a two MR-rounds algorithm using the AFD approach that runs \lat concurrently on every machine for $\log{}_{1+\epsi}(k)$ different guesses
of threshold $\tau_{i,j}$ in the range $[\frac{\alpha\Delta_i^*}{k},
\alpha\Delta_i^*]$; where $\alpha$ is the approximation of \dat and $\Delta_i^*$ is the maximum singleton in $\univ_i$. Every solution 
returned to the primary machine are placed into bins based on their 
corresponding threshold guess $\tau_{i,j}$ such that
$\frac{\Delta^*}{k}(1+\epsi)^x \le \tau_{i,j} \le \frac{\Delta^*}{k}(1+\epsi)^{x+1}$; 
where $\Delta^*$ is the maximum singleton in $\univ$. Since there must exist a
threshold $\tau^*$ that is close enough to $\alpha \opt/k$ 
% such that $\tau^* \le \alpha\opt/k \le (1+\epsi)\tau^*$
; running \lat (Line \ref{alg1:T_j}) on every
bin in the range $[\frac{\alpha\Delta^*}{k},\alpha\Delta^*]$ and
selecting the best solution guarantees the $\alpha$-approximation of 
\dat in $O(\log{}(n))$ adaptive rounds.
\begin{algorithm}[t]
  \caption{Threshold-DASH with no knowledge of \opt (\dat)}
  \label{alg:DAT}
  \begin{algorithmic}[1]
     \State {\bfseries Input:} Evaluation oracle $ f:2^{\mathcal N} \to \mathbb{R}$, constraint $k$, error $\epsi$, available machines $M \gets \{1, 2,...,\ell\}$
     % \REPEAT
     \State Initialize $\delta \gets 1/(\ell+1)$, $\mathbf q \gets$ a fixed sequence of random bits.
     \State Set $\alpha \gets \frac{3}{8}$,
     $(\mathbf q_{i,j})_{i \in [\ell+1], j \in [\log_{1+\epsi}(k)]} \gets \mathbf q$
     \For{$e \in \univ$ do }\label{line:assignEle}
       \State Assign $e$ to each machine independently with probability $1/\ell$
     \EndFor
      
     \For{$i \in M$}\label{alg4:distLAT}
     \LineComment{On machine $i$}
     \State Let $\univ_i$ be the elements assigned to machine $i$
     \State Set $\Delta_i^* \gets \max\{ f(e) : e \in \univ_i\}$
     \For{$j \gets 0$ to $\log_{1+\epsi}(k)$ in parallel}
        \revtwo
        \State $\tau_{i,j} \gets \frac{(\alpha +\epsi)\Delta_i^*}{k}(1+\epsi)^j$
        \State $S_{i,j}, R_{i,j} \gets \lat (f, \mathcal N_i, k, \delta, \epsi, \tau_{j}, \mathbf q_{i,j})$ \label{alg1:Ti} 
     \EndFor
     % \State $S_i \gets \underset{0 \le j \le \log{}_{1-\epsi}(\alpha/k)}{\argmax}\{ f(S_{i,j}) \}$
          
     \State Send $\Delta_i^*$ and all $(\tau_{i,j}, S_{i,j}, R_{i,j})$ to primary machine \color{black}
     \EndFor
     \LineComment{On primary machine} 
     \State Set $\Delta^* \gets \max\{ \Delta_i^* : 1 \le i \le \ell\}$
     \For{$x \gets 0$ to $\lceil \log_{1+\epsi}(k) \rceil+1$ in parallel}
        \revtwo
        \State $\tau_{x} \gets \frac{(\alpha+\epsi)\Delta^*}{k}(1+\epsi)^x$
        \State Let $R_x \gets \{ \bigcup R_{i,j} : \tau_x \le \tau_{i,j} \le \tau_{x+1}\} $ \color{black}
        \State Let $A_x \gets \{\text{Sample a solution } S_{i,j} : \tau_x \le \tau_{i,j} \le \tau_{x+1}\}$
        \State Let $g_x(\cdot) \gets f(A_x \cup \cdot) - f(A_x)$
        \revtwo
        \State $T_x \gets \lat (g_x, R_x, k-|A_x|,  \delta, \epsi, \tau_{x},\mathbf q_{\ell+1, x})$ \color{black} \label{alg1:T_j}
        \State $T_x' \gets A_x \cup T_x$ \label{line:TFinal_j}
     \EndFor
     
     \State $T \gets \argmax \{ f(T_x') : 0 \le x \le \log_{1+\epsi}(k) \}$ \label{alg1:T}
     % \State $T \gets S_1 \cup T$ \label{line:TFinal}
     \State \textbf{return} $T$
     % \UNTIL{$noChange$ is $true$}
  \end{algorithmic}
\end{algorithm}
%%\vspace{0.5em}
\begin{theorem}
Let $(f,k)$ be an instance of \sm where $k\log(k) < \frac{\epsi\psi}{ \ell}$.
  \dat with no knowledge of \opt returns set $T'$ with two MR rounds,
  $O\left(\frac{1}{\epsi^3}\log(n)\right)$
  adaptive rounds,
  $O\left(\frac{n\log(k)}{\epsi^4}\right)$ total queries,
  % $O(\frac{n}{\ell\epsi}+ \frac{\ell k}{\epsi})$ parallel runtime,
  $O(n)$ communication complexity,
  and probability at least $1-n^{-c}$
   such that
  \[\ex{f(T')} \ge \left(\frac{3}{8}-\epsi\right)\opt.\]

  % \dat returns set $T$ with two MR rounds and 
  % $O\left(\frac{1}{\epsi}\log\left(n\right)\right)$ 
  % adaptive rounds such that,
  % \[\ex{f(T)} \ge \left(\frac{3}{8}-\epsi\right)\opt.\]
\end{theorem}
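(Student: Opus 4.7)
The plan is to extend the analysis of \dat-knowing-OPT (the preceding theorem in Section~\ref{sec:dat}) by identifying a ``lucky'' bin index $x^*$ whose threshold $\tau_{x^*}$ approximates the ideal value $\alpha\opt/k = 3\opt/(8k)$ within a $(1+\epsi)$-factor, then invoking $T = \argmax_x f(T_x')$ to lower-bound $f(T)$ by $f(T_{x^*}')$. Because $\Delta^* \le \opt \le k\Delta^*$, such an $x^* \in \{0,1,\ldots,\lceil\log_{1+\epsi}(k)\rceil\}$ exists with $\tau_{x^*} \le \alpha\opt/k < \tau_{x^*+1} = (1+\epsi)\tau_{x^*}$; moreover, every machine $i$ whose local maximum $\Delta_i^*$ is sufficiently large has some threshold $\tau_{i,j^*(i)} \in [\tau_{x^*},\tau_{x^*+1})$, so its output $S_{i,j^*(i)}$ feeds into the aggregate $S_{x^*}$.

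For the resource bounds, each machine runs $O(\log(k)/\epsi)$ parallel copies of \lat on an input of size at most $\psi/\ell$; by Theorem~\ref{theorem:threshold} each call contributes $O(\log(n)/\epsi^3)$ to adaptivity and $O((n/\ell)/\epsi^3)$ queries. Parallelism preserves adaptivity while multiplying total queries by $O(\log(k)/\epsi)$, giving the claimed $O(n\log(k)/\epsi^4)$ overall. The bin calls on the primary machine also run in parallel on at most $O(k\ell\log(k)/\epsi)$ elements, which fits in one machine thanks to the hypothesis $k\log(k) < \epsi\psi/\ell$. A union bound over the $O(\ell\log(k)/\epsi)$ \lat invocations, each failing with probability at most $\delta/n^c$, yields overall success $\ge 1-n^{-c}$ for suitably small $\delta$.

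For the approximation ratio on bin $x^*$, I would split into two cases exactly as in the known-OPT proof. If $|T_{x^*}'| = k$, combining Property~3 of Theorem~\ref{theorem:threshold} on both \lat stages (producing $A_{x^*}$ with threshold $\tau_{i^*,j^*(i^*)} \ge \tau_{x^*}$ and $T_{x^*}$ with threshold $\tau_{x^*}$) yields $f(T_{x^*}') \ge \tau_{x^*}k(1-\epsi)/(1+2\epsi) \ge (3/8-O(\epsi))\opt$. If $|T_{x^*}'| < k$, I would partition $O = O^L \cup O^S$ with $O^L = \{o \in O : f(\{o\}) \ge \tau_{x^*}\}$; submodularity gives $f(O^L) \ge \opt - k\tau_{x^*} \ge (5/8 - O(\epsi))\opt$, and every $o \in O^L$ sits on a machine that contributes to bin $x^*$. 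Define $O_1 = \{o \in O^L : o \not\in \lat(\univ_{i^*} \cup \{o\}, \mathbf{q}_{i^*,j^*(i^*)})\}$, where $i^*$ is the machine of the sampled $A_{x^*}$, and $O_2 = O^L \cap S_{x^*}$. Property~4 of Theorem~\ref{theorem:threshold} (together with Lemma~\ref{lemma:lat-consistency} to absorb $O_1$ into the input without changing the output) gives $\marge{o}{A_{x^*}} < \tau_{x^*+1}$ for $o \in O_1$, and similarly $\marge{o}{T_{x^*}'} < \tau_{x^*}$ for $o \in O_2 \setminus T_{x^*}'$; submodularity and $A_{x^*} \subseteq T_{x^*}'$ transfer the $O_1$ bound to $T_{x^*}'$, so $f(O_1 \cup O_2) - f(T_{x^*}') \le k\tau_{x^*+1} \le (3/8+O(\epsi))\opt$. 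A direct adaptation of Lemma~\ref{lemma:union_prob}, restricted to $O^L$, gives $\ex{f(O_1 \cup O_2)} \ge (3/4)f(O^L) \ge (3/4)(5/8 - O(\epsi))\opt$; combining this with the marginal-gain bound yields $\ex{f(T_{x^*}')} \ge (3/8 - O(\epsi))\opt$, as required.

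The main obstacle is re-deriving Lemma~\ref{lemma:union_prob} in this setting, where the randomness now spans element-to-machine assignment, \lat coin flips across all machines, and the uniform sampling of the source machine $i^*$ supplying $A_{x^*}$. The independence-across-machines argument still applies, but the relevant thresholds $\tau_{i,j^*(i)}$ now depend on the random $\Delta_i^*$; handling this correctly requires conditioning on the partition and on $\{\Delta_i^*\}_{i \in M}$ and verifying that the randomized-consistency reasoning of Lemma~\ref{lemma:lat-consistency} carries through at this conditional level, so that the crucial inequality $\prob{o \in O_2 \mid o \in O_1} \le 1 - p_o$ remains valid after the restriction to $O^L$.
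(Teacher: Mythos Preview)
Your approach has a genuine gap in the case $|T_{x^*}'| < k$: the arithmetic does not close. Tracing your bounds, from $f(O^L) \ge (5/8)\opt$ and the $3/4$-coverage lemma you obtain $\ex{f(O_1 \cup O_2)} \ge (15/32)\opt$; subtracting the marginal-gain budget $k\tau_{x^*+1} \le (3/8)(1+\epsi)\opt$ leaves only
\[
\ex{f(T_{x^*}')} \;\ge\; \tfrac{15}{32}\opt - \tfrac{3}{8}(1+\epsi)\opt \;=\; \bigl(\tfrac{3}{32} - O(\epsi)\bigr)\opt,
\]
a factor of four short of the target $3/8 - \epsi$. The partition $O = O^L \cup O^S$ is simply too expensive: discarding $O^S$ already costs up to $(3/8)\opt$ in the Lov\'asz lower bound, and you cannot afford that loss on top of the separate $(3/8)\opt$ paid for the inequality $f(O_1\cup O_2) - f(T_{x^*}') \le k\tau_{x^*+1}$. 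A secondary issue is that your claim ``every $o \in O^L$ sits on a machine that contributes to bin $x^*$'' is also incorrect: for machine $i$ to have a threshold $\tau_{i,j}$ in $[\tau_{x^*},\tau_{x^*+1}]$ one needs $\alpha\Delta_i^* \ge \tau_{x^*}$, i.e.\ $\Delta_i^* \ge (8/3)\tau_{x^*}$, whereas $o \in O^L$ only forces $\Delta_i^* \ge f(\{o\}) \ge \tau_{x^*}$.

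The paper's proof avoids any such partition and keeps the full optimum $O$ in the definitions of $O_1$ and $O_2$. The key device is that a machine $i$ whose local range misses the lucky bin $x_0$ necessarily has $\Delta_i^*$ small; one then \emph{treats} such a machine as if it had run \lat with a bin-$x_0$ threshold and returned the empty set, setting $S_{i,(j_0)}' = \emptyset$. With this convention, the coupling argument behind Lemma~\ref{lemma:union_prob} applies verbatim to all of $O$, giving $\ex{f(O_1\cup O_2)} \ge (3/4)\opt$ directly and hence $\ex{f(T_{x_0}')} \ge (3/4)\opt - k\tau_{x_0+1} \ge (3/8 - O(\epsi))\opt$. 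The ``main obstacle'' you identify (thresholds depending on the random $\Delta_i^*$) is real, but it is handled by this empty-set convention rather than by restricting to large-singleton optima.
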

\textbf{Overview of Proof.} 
Alg.~\ref{alg:DAT} is inspired by Alg.~\ref{alg:DATOPT} in 
Section~\ref{sec:dat}, which is a version of the
algorithm that knows that optimal solution value. 
With $\Delta^* = \max\{f(e): e \in \univ\}$, there exists an $x_0$ such that
$\tau_{x_0} \le \alpha \opt(1+\epsi)/k \le \tau_{x_0+1}$
\revone
with $\oh{\log(k)/\epsi}$ guesses.
\color{black}
Then, on each machine $i$, we only consider \revtwo sets $S_{i,j}$ and $R_{i,j}$ \color{black} such that
$\tau_{x_0} \le \tau_{i,j} \le \tau_{x_0+1}$.
If this $\tau_{i,j}$ does exist, \revtwo $(S_{i,j}, R_{i,j})$ works like $(S_i, R_i)$ \color{black} in Alg.~\ref{alg:DATOPT}.
If this $\tau_{i,j}$ does not exist, then for any $e \in \univ_i$,
it holds that $f(e) < \alpha \opt/k$, which means
$\lat(\univ_i)$ with $\tau = \alpha \opt/k$ will return an empty set.
\revone
Since each call of \lat with different guesses on $\tau$ is executed in parallel, 
the adaptivity remains the same and the query complexity increases by a factor of $\log(k)/\epsi$.
\color{black}
\begin{proof}
  First,
  for $x = 0$, \revtwo $\tau_x = (\alpha+\epsi)\Delta^*/k \le (\alpha+\epsi) \opt/k$; \color{black}
  and for $x = \lceil \log_{1+\epsi}(k) \rceil +1$, \revtwo
  \[\tau_x \ge (\alpha+\epsi)(1+\epsi) \Delta^*\ge 
  (\alpha+\epsi)(1+\epsi) \sum_{o \in O}f(o)/k \ge (\alpha+\epsi)(1+\epsi) \opt/k.\] \color{black}
  Therefore, there exists an $x_0$ such that \revtwo
  $\tau_{x_0} \le (\alpha+\epsi)(1+\epsi) \opt /k \le \tau_{x_0+1}$. \color{black}
  Since $\tau_{x_0+1} = \tau_{x_0}(1+\epsi)$, it holds that \revtwo
  $\tau_{x_0} \ge (\alpha+\epsi)\opt /k$ and 
  $\tau_{x_0+1} \le (\alpha+\epsi)(1+\epsi)^2 \opt  /k$. \color{black}

  Then, we only consider $T_{x_0}'$.
  If $|T_{x_0}'| = k$, by Theorem~\ref{theorem:threshold}, it holds that,
  \revtwo
  \begin{align*}
    f(T) \ge f(T_{x_0}') \ge \frac{1-2\epsi}{1+\epsi} 
  \tau_{x_0}k
  \ge \left(\frac{3}{8}-\epsi\right) \opt .
  \end{align*}
  \color{black}
  Otherwise, in the case that $|T_{x_0}'| < k$,
  let $A_{x_0} = S_{i_0, j_0}$, \revtwo
  $O_1 = \{o \in O: o \not \in \textsc{TSMRel} (N_{i_0} \cup \{o\}, q)\}$. \color{black}
  Also, let $\tau_{i, (j_0)}$ be returned by machine $i$ that 
  $\tau_{x_0} \le \tau_{i, (j_0)} \le \tau_{x_0+1} $,
  define \revtwo
  \begin{equation*}
    (S_{i, (j_0)}', R_{i, (j_0)}') = \begin{cases}
      (S_{i, (j_0)}, R_{i, (j_0)}) & \text{, if machine $i$ returned a }
      \tau_{i, (j_0)} \\
      (\emptyset,\emptyset)   & \text{, otherwise }
    \end{cases}.
  \end{equation*} \color{black}
  On the machine which does not return a $\tau_{i,(j_0)}$,
  we consider it runs $\lat(\univ_i, \tau_{x_0+1})$ and returns
  two empty sets, and
  hence
  % because of that 
  $\max\{f(e): e \in \univ_i\} < \tau_{x_0}$. \revtwo
  Let $R_{x_0}' = \cup_{i \in M} R_{i, (j_0)}'$, 
  $O_2 = R_{x_0}' \cap O$. \color{black}
  Then, Lemma~\ref{lemma:union_prob} 
  in Appendix~\ref{sec:DATOPT} 
  still holds in this case.
  We can calculate the approximation ratio as follows with $\epsi \le 2/3$,
  % %%\vspace{1em}
  \begin{align*}
    \ex{T}\ge \ex{T_{x_0}'}
    & \ge \ex{f(O_1 \cup O_2)} - 
    k \cdot \tau_{x_0+1}\\ &\ge \left(\frac{3}{8}-\epsi\right)\opt.
  \end{align*}
\end{proof}

\section{Experiment Setup }\label{appendix:obj}

\subsection{Applications}\label{app:app} Given a constraint $k$, the objectives of the applications are defined as follows:

\subsubsection{Max Cover} \label{exp:maxcov}
%%\vspace{0.5em}
Maximize the number of nodes covered by choosing a set $S$ of maximum size $k$, such that the number of nodes having at least one neighbour in the set $S$. The application is run on synthetic random BA graphs of groundset size 100,000, 1,000,000 and 5,000,000 generated using Barabási–Albert (BA) models for the centralized and distributed experiments respectively. 
% For the centralized data, BA graphs were generated by adding $m=5$ edges each iteration, while for the decentralized data graphs were generated by adding $m=500$ edges each iteration.

\subsubsection{Image Summarization on CIFAR-10 Data} \label{exp: imgsumm}
Given large collection of images, find a subset of maximum size $k$ which is representative of the entire collection. The objective used for the experiments is a monotone variant of the image summarization from \citet{fahrbach2019non}. For a groundset with $N$ images, it is defined as follows:
%%\vspace{0.5em}
\begin{align*}
f(S) = \sum_{i \in N} \max_{j \in S} s_{i,j}
\end{align*}
where $s_{i,j}$ is the cosine similarity of the pixel values between image $i$ and image $j$. 
The data for the image summarization experiments contains 10,000 and 50,000 CIFAR-10 \cite{krizhevsky2009learning} color images respectively for the centralized and distributed experiments.

\subsubsection{Influence Maximization on a Social Network.} \label{exp: infmax}
Maximise the aggregate influence to promote a topic by selecting a set of social network influencers of maximum size $k$. The probability that a random user $i$ will be influenced by the set of influencers in $S$ is given by:
\begin{align*}
%%\vspace{0.5em}
    f_i(S) & = 1 \quad \textrm{, for $i \in S$} \\
    f_i(S) & = 1 - (1 - p)^{|N_S(i)|} \quad \textrm{, for  $i \notin S$ }
%%\vspace{0.5em}
\end{align*}
 
where |$N_S(i)$| is the number of neighbors of node $i$ in $S$. 
We use the Epinions data set consisting of 27,000 users from \citet{rossi2015network} for the centralized data experiments and the Youtube online social network data \citet{yang2015defining} consisting more than 1 million users for distrbuted data experiments. The value of $p$ is set to 0.01.
% and 1,000,000 with $p=0.0005$ for the centralized and distributed . 

\subsubsection{Revenue Maximization on YouTube.} \label{exp: revmax}
Maximise revenue of a product by selecting set of users $S$ of maximum size $k$, where the network neighbors will be advertised a different product by the set of users $S$. It is based on the objective function from \citet{mirzasoleiman2016fast}. For a given set of users $X$ and $w_{i,j}$ as the influence between user $i$ and $j$, the objective function can be defined by:
  \begin{align*}
  %%\vspace{0.5em}
      f(S) & = \sum_{i \in X} V \left(\sum_{j \in S} w_{i,j}\right) \\
      V(y) & = y^{\alpha}
  \end{align*}
  where $V(S)$, the expected revenue from an user is a function of the sum of influences from neighbors who are in $S$ and $\alpha : 0 < \alpha < 1 $ is a rate of diminishing returns parameter for increased cover. 
  % We use the Watts-Strogatz (WS) graphs of size 100,000 with $p=0.1$ and 10 edges per node for the centralized experiments while WS graphs of size 1,000,000 with $p=0.0005$ and 1,000 edges per node are used for distributed experiments.

  We use the Youtube data set from \citet{mirzasoleiman2016fast} consisting of 18,000 users for centralized data experiments. For the distrbuted data experiments we perform empirical evaluation on the Orkut online social network data from \citet{yang2015defining} consisting more than 3 million users. The value of $\alpha$ is set to 0.3
  %%\vspace{1em}
% \newpage
\section{Replicating the Experiment Results}\label{appendix:repExp}

Our experiments can be replicated by running the following scripts:
\begin{itemize}
\item Install \textbf{MPICH} version \textbf{3.3a2} (DO NOT install OpenMPI and ensure \textit{mpirun} utlizes mpich using the command \textit{mpirun --version} (Ubuntu))
\item Install \textbf{pandas, mpi4py, scipy, networkx}
\item  Set up an MPI cluster using the following tutorial:\\ \textit{https://mpitutorial.com/tutorials/running-an-mpi-cluster-within-a-lan/}
\item Create and update the host file \textit{../nodesFileIPnew} to store the ip addresses of all the connected MPI machines before running any experiments (First machine being the primary machine)
	\begin{itemize}
		\item NOTE: Please place \textit{nodesFileIPnew} inside the MPI shared repository; \textit{"cloud/"} in this case (at the same level as the code base directory).  DO NOT place it inside the code base \textit{"DASH-Distributed\_SMCC-python/"} directory.
	\end{itemize}
\item Clone the \textit{DASH-Distributed\_SMCC-python} repository inside the MPI shared repository  (\textit{/cloud} in the case using the given tutorial)
	\begin{itemize}
		\item NOTE: Please clone the \textit{"DASH-Distributed\_SMCC-python"} repository and execute the following commands on a machine with \textbf{sufficient memory} (RAM); capable of generating the large datasets. This repository NEED NOT be the primary repository (\textit{"/cloud/DASH-Distributed\_SMCC-python/"}) on the shared memory of the cluster; that will be used for the experiments.
	\end{itemize}
\item Additional Datasets For Experiment 1 : Please download the Image Similarity Matrix file \textbf{"images\_10K\_mat.csv"}(\textit{https://drive.google.com/file/d/1s9PzUhV-C5dW8iL4tZPVjSRX4PBhrsiJ/view?usp=sharing)}) and place it in the \textit{data/data\_exp1/}  directory.
\item To generate the decentralized data for \textbf{Experiment 2} and \textbf{3} : Please follow the below steps:
	\begin{itemize}
		\item Execute \textit{bash GenerateDistributedData.bash \textbf{nThreads} \textbf{nNodes} } 
		\item The previous command should generate \textbf{nNodes} directories in \textit{loading\_data/} directory (with names \textit{machine$<$nodeNo$>$Data})
		\item Copy the \textit{data\_exp2\_split/} and \textit{data\_exp3\_split/} directories within each \textit{machine$<$i$>$Data} directory to the corresponding machine $M_{i}$ and place the directories outside \textit{/cloud} (directory created after setting up an MPI cluster using the given tutorial)).
	\end{itemize}	  

\end{itemize}

\textbf{To run all experiments in the apper}\\
Please read the \textit{README.md} file in the \textit{"DASH-Distributed\_SMCC-python"} (Code/Data Appendix) for detailed information.
\section{The \sg Algorithm (\citet{Nemhauser1978a})} \label{app:Greedy}
The standard \sg algorithm starts with an empty set and then proceeds by adding elements to the set over $k$ iterations. In each iteration the algorithm selects the element with the maximum marginal gain $\Delta(e | A_{i-1})$ where $ \Delta(i | A) = f(A \cup \{i\}) - f(A)$. The Algorithm \ref{algo:SG}  is a formal statement of the standard \sg algorithm. The intermediate solution set $A_i$ represents the solution after iteration $i$. The ($1-1/e \approx$ 0.632)-approximation of the \sg algorithm  is the optimal ratio possible for monotone objectives.

\begin{algorithm}[h]
   \caption{$\sg (f, \mathcal N, k)$} \label{algo:SG}
  \begin{algorithmic}[1]
     \State {\bfseries Input:} evaluation oracle $ f:2^{\mathcal N} \to \mathbb{R}$, constraint $k$
     % \REPEAT
     \State Let $A_i \gets \emptyset $
     \For{$i = 1$ to $k$ } \label{line:assignEle}
       \State $u_i \gets \argmax \{ f(e | A_{i-1}) ; \text{ where } e \in \univ \textbackslash A_{i-1}\}$
       \State $A_i \gets A_{i-1} \cup u_i$
     \EndFor
     
     \State \textbf{return} \textit{$A_k$}
  \end{algorithmic}
\end{algorithm}
\section{Discussion of the $1/2$-approximate Algorithm in~\citet{liu2018submodular}} \label{app:Vondrak}
\begin{algorithm}[h]
   \caption{$\tg(S, G, \tau)$} \label{algo:TG}
  \begin{algorithmic}[1]
     \State \textbf{Input: } An input set $S$,
     a partial greedy solution $G$ with $G \le k$,
     and a threshold $\tau$
     \State $G' \gets G$
     \For{$e \in S$}
     	\If{$\marge{e}{G_0} \ge \tau$ and $|G_0| < k$}
     		\State $G_0\gets G_0 \cup \{e\}$
     		\EndIf
     \EndFor
     \State \textbf{return} $G'$
  \end{algorithmic}
\end{algorithm}
\begin{algorithm}[h]
   \caption{$\tf(S, G, \tau)$} \label{algo:TF}
  \begin{algorithmic}[1]
     \State \textbf{Input: } An input set $S$,
     a partial greedy solution $G$ with $G \le k$,
     and a threshold $\tau$
     \State $S' \gets S$
     \For{$e \in S$}
     		\State \textbf{if} $\marge{e}{G} < \tau$ 
     		\textbf{then} $S' \gets S' \setminus \{e\}$
     		\EndFor
     \State \textbf{return} $S'$
  \end{algorithmic}
\end{algorithm}
\begin{algorithm}[h]
   \caption{A simple 2-round $1/2$ approximation, assuming \opt is known} \label{algo:SimpleG}
  \begin{algorithmic}[1]
     \State \textbf{round 1:}
     \State $S\gets$ sample each $e \in \univ$ with probability $p = 4\sqrt{k/n}$
     \State send $S$ to each machine and the central machine $C$
     \State partition $\univ$ randomly into sets $V_1, V_2, \ldots, V_m$
     \State send $V_i$ to machine $i$ for each $i \in [m]$
     \For{each machine $M_i$ (in parallel)}
     	\State $\tau \gets \frac{\opt}{2k}$
     	\State $G_0 \gets \tg(S, \emptyset, \tau)$
     	% \State \textbf{if} $|G_0| < k$ \textbf{then}
     	% $R_i \gets \tf(V_i, G_0, \tau)$
     	% \State \textbf{else} $R_i \gets \emptyset$
     	\If{$|G_0| < k$}
     		\State $R_i \gets \tf(V_i, G_0, \tau)$
     	\Else 
     	\State $R_i = \emptyset$
     	\EndIf
     	\State send $R_i$ to the central machine $C$
     \EndFor
     \Statex
     \State \textbf{round 2 (only on $C$):}
     \State compute $G_0$ from $S$ as in first round
     \For{$e \in \cup_i R_i$}
     	\State \textbf{if} $\marge{e}{G} \ge \tau$ 
     		\textbf{then} $G \gets G \cup \{e\}$
     \EndFor
     \State \textbf{return} $G$
  \end{algorithmic}
\end{algorithm}
The simple two-round distributed algorithm proposed by~\citet{liu2018submodular}
achieves a deterministic $1/2-\epsi$ approximation ratio.
The pseudocode of this algorithm, assuming that we know \opt,
is provided in Alg.~\ref{algo:SimpleG}.
The guessing \opt version runs $\oh{\log(k)}$ copies of
Alg.~\ref{algo:SimpleG} on each machine, which
has $\oh{k}$ adaptivity and $\oh {nk \log(k)}$ query complexity. 

Intuitively, the result of Alg.~\ref{algo:SimpleG} is equivalent to a single run
of \tg on the whole ground set.
Within the first round, it calls \tg on $S$ to get an intermediate solution $G_0$.
After filtering out the elements with gains less than $\tau$ in $V_i$,
the rest of the elements in $V_i$ are sent to the primary machine.
By careful analysis, there is a high probability that
the total number of elements returned within the first round is
less than $\sqrt{nk}$.

% Compared to the 2-round algorithms in Table~\ref{table:algs},
% Alg.~\ref{algo:SimpleG} is the only algorithm that requires data duplication
% in the first round
% with 4 times more elements distributed to each  machine,
% resulting in a more strict restriction on the memory of distributed setups.
Compared to the 2-round algorithms in Table~\ref{table:algs}, Alg.~\ref{algo:SimpleG} is the only algorithm that requires data duplication, with four times more elements distributed to each machine in the first round. As a result, distributed setups have a more rigid memory restriction when running this algorithm.
% While other algorithms send $n/\ell$ elements to each machine,
% Alg.~\ref{algo:SimpleG} sends 4 times more elements to each machine.
For example, when the dataset exactly fits into the distributed setups
which follows that $\Psi = n/\ell$,
there is no more space on each machine to store the random subset $S$.
In the meantime, other 2-round algorithms can still run with $k \le n/\ell^2$.

% While Alg.~\ref{algo:SimpleG} does not need the randomized consistency property,
% we can replace \tg with any threshold algorithm which has optimal
% logarithmic adaptivity and linear query complexity such as \threshold from \citet{chen2021best}.
% Thus, the guessing \opt version of Alg.~\ref{algo:SimpleG} has a deterministic
% $1/2-\epsi$ approximation ratio with $\oh{\log(n/\ell)}$ adaptivity and
% $\oh{n \log(k)}$ query complexity.
Additionally, since the approximation analysis of Alg.~\ref{algo:SimpleG} does not take randomization into consideration, we may be able to substitute its \tg subroutine with any threshold algorithm (such as \threshold in \citet{chen2021best}) that has optimal logarithmic adaptivity and linear query complexity.
However, given its analysis of the success probability considers $S$ as $3k$ blocks where each block has random access to the whole ground set, it may not be easy to fit parallelizable threshold algorithms into this framework.
% Accordingly, the guessing \opt version of Alg.~\ref{algo:SimpleG} has a deterministic $1/2-\epsi$ approximation ratio with $\oh{\log(n/\ell)}$ adaptivity and $\oh {n \log(k)}$ query complexity. 

\clearpage
\bibliography{sample-base-dblp.bib}
\bibliographystyle{plainnat} % theapa
\end{document}